\newtheorem{theorem}{Theorem}
\newtheorem{lemma}[theorem]{Lemma}
\newtheorem{corollary}[theorem]{Corollary}
\newtheorem{definition}[theorem]{Definition}
\newtheorem{example}[theorem]{Example}
\newtheorem*{lemvertmatch}{Lemma~\ref{lem:vertmatch}}
\newtheorem*{lemexpvertmatch}{Lemma~\ref{lem:expvertmatch}}
\newtheorem*{thmhard}{Theorem~\ref{thm:hard}}
\newtheorem*{thmhardsign}{Theorem~\ref{thm:hardsign}}
\newtheorem*{thmconnconst}{Theorem~\ref{thm:connconst}}
\renewcommand\arg{\text{arg}}
\def\numP{\#\mathsf{P}}
\def\Reals{\mathbb{R}}
\def\Complex{\mathbb{C}}
\newcommand{\size}[1]{\mathrm{size}(#1)}
\newcommand{\im}{\mathrm{i}}
\newcommand{\depth}{\mathrm{depth}}
\def\Matchings#1{\#\ensuremath{\mathsf{NormMatchings}(\gamma,\Delta,#1)}}
\def\SignMatchings{\ensuremath{\mathsf{SignMatchings}(\gamma,\Delta)}}
\newcommand{\eps}{\epsilon}
\def\prob#1#2#3{\goodbreak\begin{list}{}{\labelwidth\z@ \itemindent-\leftmargin
                        \itemsep\z@  \topsep6\p@\@plus6\p@
                        \let\makelabel\descriptionlabel}
                \item[\it Name]#1
               \item[\it Instance]                #2
                \item[\it Output]#3
                \end{list}}
\def\Bezakova{Bez\'{a}kov\'{a}}
\title{The complexity of approximating the matching polynomial in the complex plane\thanks{A preliminary short version of these results  appeared in the proceedings of \emph{ICALP 2019} (Track A).}}
\author{
Ivona \Bezakova\thanks{
  Department of Computer Science, Rochester Institute of Technology, Rochester, NY, USA. Research
supported by NSF grant CCF-1319987.} \and
Andreas Galanis\thanks{
  Department of Computer Science, University of Oxford, Wolfson Building, Parks Road, Oxford, OX1~3QD, UK.
  The research leading to these results has received funding from the European Research Council under
  the European Union's Seventh Framework Programme (FP7/2007-2013) ERC grant agreement no.\ 334828. The paper
  reflects only the authors' views and not the views of the ERC or the European Commission.
  The European Union is not liable for any use that may be made of the information contained therein.}
  \and
  Leslie Ann Goldberg$^\ddag$
\and
 Daniel \v{S}tefankovi\v{c}\thanks{
Department of Computer Science, University of Rochester,
Rochester, NY 14627.  Research
supported by NSF grant CCF-1563757.}
 }
\date{11 January 2021}
\begin{document}

\maketitle
\thispagestyle{empty}
\begin{abstract} 
We study the problem of approximating the value of the matching polynomial on graphs with edge parameter $\gamma$, where $\gamma$ takes arbitrary values in the complex plane. 

When $\gamma$ is a positive real, Jerrum and Sinclair showed that the problem admits an FPRAS on general graphs. For general complex values of $\gamma$, Patel and Regts, building on methods developed by Barvinok, showed that the problem admits an FPTAS on graphs of maximum degree $\Delta$ as long as $\gamma$ is not a negative real number less than or equal to $-1/(4(\Delta-1))$. Our first main result completes the picture for the approximability of the matching polynomial on bounded degree graphs. We show that for all $\Delta\geq 3$ and all real $\gamma$ less than $-1/(4(\Delta-1))$, the problem of approximating the value of the matching polynomial on graphs of maximum degree $\Delta$ with edge parameter $\gamma$ is \#P-hard. 

We then explore whether the maximum degree parameter can be replaced by the connective constant. Sinclair et al. showed that for positive real $\gamma$ it is possible to approximate the value of the matching polynomial using a correlation decay algorithm on graphs with bounded connective constant (and potentially unbounded maximum degree). We first show that this result does not extend in general in the complex plane; in particular, the problem is \#P-hard on graphs with bounded connective constant for a dense set of $\gamma$ values on the negative real axis. Nevertheless, we show that the result does extend for any complex value $\gamma$ that does not lie on the negative real axis. Our analysis accounts for complex values of $\gamma$ using geodesic distances in the complex plane in the metric defined by an appropriate density function.
\end{abstract}

\newpage

\clearpage
 \setcounter{page}{1}
\section{Introduction}

We study the problem of approximating the matching polynomial of a graph. This polynomial
has a parameter~$\gamma$, called the edge activity.
A \emph{matching} of a graph~$G$ is a set $M\subseteq E(G)$ 
such that each vertex $v\in V(G)$ is contained in at most one edge in~$M$.
We denote by $\mathcal{M}_G$ the set of all matchings of~$G$.
The matching polynomial $Z_G(\gamma)$ is given by 
\[Z_G(\gamma)=\sum_{M\in \mathcal{M}_G}\gamma^{|M|}.\]
This polynomial is also referred to as the partition function of the \emph{monomer-dimer model}
in statistical physics.

Here is what is known about approximating this polynomial.
We first describe the case where $\gamma$ is positive and real.
This is a natural case, and is the case where the first complexity-theoretic results were obtained.
We next describe the more general case, where $\gamma$ is a complex number.
There are many reasons for considering the more general case.
The parameter $\gamma$ is defined to be complex, rather than real, in the classic paper of Heilmann and Lieb~\cite{Heilmann1972}.
Furthermore, it has recently been shown~\cite{physics} that ``the quantum evolution of a system originally in thermodynamic equilibrium
is equivalent to the partition function of the system with a complex parameter''. 
As~\cite{physics} explains, recent discoveries in physics ``make it possible to
study thermodynamics in the complex plane of physical parameters'' --- so complex parameters are increasingly relevant.
As we will see in this paper, 
it is beneficial to study partition functions with complex parameters even when one is most interested in the real case --- 
the reason is that the generalisation sheds light on ``what is really going on'' with complexity bottlenecks, and on
appropriate potential functions. Here is the summary of known results in both cases.
 
\begin{itemize}
\item {\bf When the edge activity~$\gamma$ is a positive real number:\quad}
For any positive real number~$\gamma$,
Jerrum and Sinclair~\cite[Corollary 4.4]{JSPermanent} gave an FPRAS for approximating 
$Z_G(\gamma)$.
Using the correlation decay technique,
Bayati et al.~\cite{Bayati} gave a (deterministic) FPTAS for the same problem
for the case in which the degree of the input graph~$G$ is  
at most a
constant~$\Delta$. 
\item {\bf When the edge activity~$\gamma$ is a complex number:\quad}
Known results are restricted to the case where $\gamma$ is
 not a real number less than or equal to $-1/(4(\Delta-1))$.
 In this case, there is a positive result, due to Patel and Regts~\cite{PR}.
 Using a method of  Barvinok~\cite{BarvinokPermanent, barvinokbook}
for approximating a partition function by truncating its Taylor series
(in a region where the partition function has no zeroes),
Patel and Regts \cite[Theorem 1.2]{PR}  extended the positive result of Bayati et al.  to 
the case in which $\gamma$ is a complex number that is not a negative real that is less
than $-1/(4(\Delta-1))$,  see also \cite[Section 5.1.7]{barvinokbook}). Patel and Regts obtained
a polynomial time algorithm (rather than a quasi-polynomial time one)
by developing clever methods for exactly computing coefficients
of the Taylor series.
 \end{itemize}  
   
 Our first contribution completes  this picture by showing 
 that for all $\Delta \geq 3$ and all real $\gamma < -1/(4(\Delta-1))$ 
 it is actually \#P-hard to approximate $Z_G(\gamma)$ on graphs with degree at most~$\Delta$. 
  We  use the  following notation to 
  state our result more precisely.  
  We consider the problems of multiplicatively approximating the norm of $Z_G(\gamma)$, and 
  of computing its sign.   Our first theorem shows that,
  for all $\Delta \geq 3$ and all rational numbers $\gamma < -1/(4(\Delta-1))$,
  it is $\numP$-hard to approximate $|Z_G(\gamma)|$ on bipartite graphs of maximum degree $\Delta$ within a constant factor.
\begin{theorem}\label{thm:hard}
Let $\Delta\geq 3$ and $\gamma<-\frac{1}{4(\Delta-1)}$ be a rational number. Then, it is $\numP$-hard to approximate $|Z_G(\gamma)|$ within a factor of~1.01 on graphs $G$ of maximum degree $\Delta$, even when restricted to bipartite graphs $G$ with $Z_G(\gamma)\neq 0$.
\end{theorem}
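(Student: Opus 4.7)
The plan is to reduce from a \numP-hard counting problem---such as counting independent sets on cubic bipartite graphs, or equivalently counting matchings with an external field on $(\Delta-1)$-regular bipartite graphs---by building tree gadgets that, attached as pendants to the vertices of an arbitrary input graph $H$, implement controllable effective vertex activities. The threshold $\gamma=-1/(4(\Delta-1))$ arises as the branch point of the standard matching tree recursion on the infinite $(\Delta-1)$-ary tree: above it the recursion has attractive real fixed points (the ones powering the FPTAS of \cite{Bayati,PR}), while below it the fixed points form a complex conjugate pair. A direct computation using $R^*(1+\gamma(\Delta-1)R^*)=1$ shows that at each such complex fixed point the derivative of the recursion has modulus exactly one, so the linearised dynamics is a rotation by some angle $\theta(\gamma)$, opening the door to a density argument.

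Concretely, to each tree $T$ with a distinguished leaf one associates the ratio $R_T = P_T/Z_T(\gamma)$, where $P_T$ is the contribution to $Z_T(\gamma)$ from matchings that leave the leaf unmatched. Grafting $k\le\Delta-1$ subtrees at a common root yields the M\"obius recursion $R = 1/(1+\gamma\sum_i R_{T_i})$, and composing such maps generates a semigroup of fractional-linear self-maps of the Riemann sphere depending only on $\gamma$. The key technical step is to show that for every rational $\gamma<-1/(4(\Delta-1))$ this semigroup has an orbit that is dense on an invariant circle around one of the complex conjugate fixed points. Following the template of \cite{complexhard,GGS}, this reduces to establishing that $\theta(\gamma)/\pi$ is irrational; for the countable set of exceptional rational $\gamma$ at which it is rational, one recovers density by combining two different tree constructions whose rotation angles are incommensurable.

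With density in hand, one picks tree gadgets whose ratios approximate two strategically chosen target values, attaches them to the vertices of $H$ to install a pair of effective vertex activities, and uses a polynomial-interpolation/pinning argument in the spirit of \cite{SignTutte,complexhard} to recover a \numP-hard integer from evaluations of $|Z_G(\gamma)|$ on a polynomial family of composite graphs $G$. Bipartiteness is preserved because the gadgets can be chosen to be bipartite trees attached consistently with a fixed bipartition of $H$; nonvanishing of $Z_G(\gamma)$ is ensured by a small perturbation (for example, varying tree depth by one) and verified along the way. The passage from a $1.01$-factor norm approximation to the sign, and then to the exact value, of $Z_G(\gamma)$ follows standard scaling/interpolation arguments of the kind used in \cite{ComplexIsing,SignTutte}.

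The principal obstacle will be the density step: proving that for every rational $\gamma$ below the threshold the tree-recursion semigroup generates a dense orbit, with quantitative rate strong enough that the $1.01$ approximation slack does not swamp the discrete signal. A secondary difficulty is that $Z_G(\gamma)\in\mathbb{R}$ for real $\gamma$, so a norm approximation erases the sign needed to distinguish gadget outputs; recovering it calls for a separate sign-hardness argument, together with a careful arrangement of the reduction that forces $Z_G(\gamma)$ to have a controlled sign on the graphs produced.
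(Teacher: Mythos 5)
Your overall architecture does match the paper's: exploit the tree recursion below $-1/(4(\Delta-1))$, where the fixed points are a complex conjugate pair and the multiplier has modulus one, to get a dense set of implementable gadget ratios (the paper realizes this via the explicit formula $u_n/z_n=(2\cos\theta)\sin((n+1)\theta)/\sin((n+2)\theta)$ and the density result from \cite{GGS}), then substitute gadgets into a hard instance and extract an exact \#P-hard quantity from the approximate oracle by the binary-search technique of \cite{SignTutte,ComplexIsing,complexhard}. However, the two places you flag as ``obstacles'' are exactly where the real work lies, and your proposed fixes are missing or would fail. First, the exceptional case where $\theta(\gamma)/\pi$ is rational genuinely occurs for rational $\gamma$: by Niven's theorem the rational values in $\mathcal{B}_\Delta$ are $\gamma=-1/(\Delta-1)$, $-1/(2(\Delta-1))$ and $-1/(3(\Delta-1))$, all below the threshold. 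In that case the problem is not merely the absence of a dense orbit: the partition function of the depth-$n$ $(\Delta-1)$-ary tree actually vanishes at some finite depth, so the ratios your pendant gadgets are built from become undefined, and your remedy (``two different tree constructions whose rotation angles are incommensurable'') is asserted with no argument. The paper instead turns the vanishing into the key: a minimum-size vanishing tree is used to implement the edge activity $-1$ or $-1/4$ \emph{perfectly}, and then two bespoke constructions at $\gamma=-1$ and $\gamma=-1/4$ (Lemmas~\ref{lem:gammaminusone} and~\ref{lem:gammaminusonequarter}) generate every rational vertex activity exactly via an explicit M\"obius semigroup.

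Second, density alone is non-effective: to beat the $1.01$ slack the reduction needs gadgets hitting targets to accuracy $2^{-\mathrm{poly}(n)}$, constructed in time polynomial in the bit-size of the accuracy, and the paper needs a separate bootstrapping argument (the ``contracting maps that cover'' technique of \cite{complexhard}, Lemma~\ref{lem:exp2}) to achieve this; you name the issue but supply no mechanism. Relatedly, your gadgets are one-terminal pendants installing vertex activities, whereas the paper's reduction hinges on \emph{two-terminal} gadgets implementing edge activities (built by chaining three vertex-activity gadgets): these let it (i) shift the effective edge weight to the known \#P-hard point $\gamma_0=-1/10$ on degree-3 graphs (your proposed source problem, independent sets on cubic bipartite graphs ``equivalently'' matchings with a field, is not an equivalence, and with vertex-only gadgets the base hardness at the given activity $\gamma$ is left unsourced), and (ii) place a variable activity $R$ on a single edge $e^*$ so that binary search with the $1.01$-norm oracle recovers the ratio $Z_{G-e^*}(\gamma_0)/Z_{G\setminus\{u^*,v^*\}}(\gamma_0)$ exactly, after which self-reducibility yields $Z_G(\gamma_0)$. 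As written, your extraction step, together with the bipartiteness and $Z_G(\gamma)\neq 0$ bookkeeping, is only gestured at, so the proposal has genuine gaps at the exceptional activities, at the precision bootstrapping, and in the final reduction.
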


The number~1.01 in Theorem~\ref{thm:hard} is not important.  
It can be replaced with any constant greater than~$1$. In fact,
for any fixed $\epsilon>0$, the theorem, together with a standard powering argument,
 shows that it is \#P-hard to approximate~$Z_G(\gamma)$ within a factor of $2^{|V(G)|^{1-\epsilon}}$.

Our second theorem shows that it is $\numP$-hard to compute the sign of  $Z_{G}(\gamma)$ on bipartite graphs of maximum degree $\Delta$. 
\begin{theorem}\label{thm:hardsign}
Let $\Delta\geq 3$ and $\gamma<-\frac{1}{4(\Delta-1)}$ be a rational number. Then,  it is $\numP$-hard to decide whether $Z_G(\gamma)>0$  on graphs $G$ of maximum degree $\Delta$, even when restricted to bipartite graphs $G$ with $Z_G(\gamma)\neq 0$.
\end{theorem}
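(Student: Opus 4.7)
The plan is to establish Theorem~\ref{thm:hardsign} by adapting the argument for Theorem~\ref{thm:hard}, extracting the $\numP$-hard information from the sign of $Z_G(\gamma)$ rather than from $|Z_G(\gamma)|$. I would reuse the same reduction source (a canonical $\numP$-hard counting problem on bounded-degree bipartite graphs, such as counting perfect matchings in $3$-regular bipartite graphs, or equivalently computing the permanent of a $0/1$ matrix of bounded row/column sums) and the same family of edge-activity gadgets that, by iteration, realise a dense set of effective edge activities along the negative real axis below $-1/(4(\Delta-1))$. These iterated gadgets are the workhorse of the norm-hardness proof and remain available here; in particular they let us implement, in polynomial time, any desired effective activity $\lambda$ from a dense subset of an interval on the negative real axis with only polylogarithmic blow-up in size.

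The core structural idea is to embed the hard instance $H$ into a parametrised family of bipartite graphs $\{G_t\}_{t\in T}$ of maximum degree at most~$\Delta$, such that $Z_{G_t}(\gamma)$ is a known affine expression in $t$ whose coefficients depend on the $\numP$-hard quantity~$A(H)$ one is trying to compute. For the norm reduction it suffices to read off enough bits of $|Z_{G_t}(\gamma)|$ to recover $A(H)$; for the sign reduction I would instead engineer the family so that $Z_{G_t}(\gamma)$ has exactly one sign change across $t\in T$, located at a value $t^\star$ that encodes $A(H)$. A binary search through $T$, using the sign oracle, then recovers $t^\star$ and hence $A(H)$. Concretely, one uses the iterated gadget to place the effective activity $\lambda_t$ so that it crosses a single simple zero of a small fixed family of matching-polynomial expressions built from $H$, with the crossing location determined by $A(H)$.

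The main obstacle is the one typical of sign-based reductions: a sign oracle is a binary signal, so the argument collapses if $|Z_{G_t}(\gamma)|$ becomes exponentially small near the threshold. The proof therefore has to establish an explicit lower bound on $|Z_{G_t}(\gamma)|$ on both sides of the sign-change, as well as simplicity of the underlying zero, which is strictly more than the multiplicative-error analysis needed for Theorem~\ref{thm:hard}. In addition, the construction must preserve bipartiteness, the maximum degree bound $\Delta$, and the nonvanishing hypothesis $Z_G(\gamma)\neq 0$ at every intermediate graph queried by the sign oracle, so the gadget composition must be verified to behave well with respect to all three constraints simultaneously. Once the gadget behaviour inherited from the proof of Theorem~\ref{thm:hard} is packaged as a continuous, monotone sweep of effective activities with controlled derivative, the remainder of the argument is a routine binary search.
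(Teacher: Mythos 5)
Your overall architecture does match the paper's: both proofs use the exponential-precision edge-activity gadgets (Lemma~\ref{lem:expmatch}) to build a parametrised family of bipartite, degree-$\Delta$ graphs whose normalised partition function is (up to gadget error) affine in the probe parameter, and both recover a \#P-hard quantity by binary search on the location of the sign change using the sign oracle. However, there is a genuine gap at the heart of your plan: you never construct the parametrised family or say what the sign-change location $t^\star$ actually is, and the source problem you propose (perfect matchings in $3$-regular bipartite graphs, i.e.\ the permanent) does not slot into this template as stated. With the available gadgets one can only implement \emph{nonpositive} effective edge activities, so the natural quantity encoded by a root is a ratio of matching-polynomial values at a negative activity, not a permanent. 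The paper instead reduces from \emph{exact evaluation} of $Z_G(\gamma_0)$ at $\gamma_0=-1/10$ on graphs of maximum degree $3$ (\#P-hard by Cai--Huang--Lu): it probes a single edge $e^*$ with an implemented activity $R$ while all other edges carry an implemented copy of $\gamma_0$, so that $f_R:=Z_{G_R}(\gamma)/Z_{T_R}(\gamma)\approx \alpha R+\beta$ with $\alpha=Z_{G\setminus\{u^*,v^*\}}(\gamma_0)$, $\beta=Z_{G-e^*}(\gamma_0)$, the sign change sits at $R_{\mathsf{goal}}=-\beta/\alpha$, and $Z_G(\gamma_0)$ is then recovered edge by edge via self-reducibility. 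Your proposal contains no analogue of this recovery step, which is where the actual reduction lives.

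Relatedly, your worry about $|Z_{G_t}(\gamma)|$ being exponentially small near the threshold is the right worry, but ``simplicity of the zero'' and an analytic lower bound are not what make it work. The paper's fix is arithmetic: $\gamma_0$ lies above the Heilmann--Lieb threshold for degree $3$, so by Corollary~\ref{lem:positive} both $\alpha$ and $\beta$ are positive rationals with denominators at most $10^{2n}$ and $\alpha\geq 10^{-2n}$; hence $R_{\mathsf{goal}}$ is a rational of polynomial bit-size, any grid probe $R\neq R_{\mathsf{goal}}$ is at inverse-exponential distance from it, and $\alpha$ times that distance dominates the gadget error $\epsilon'$, while $Z_{G_R}(\gamma)=0$ can occur for at most one probe. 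Your plan, starting from the permanent and from a ``dense set'' of implementable activities, supplies neither the positivity/lower bound on the slope nor the bounded-denominator structure of the threshold that the binary search needs, so as written the central step would not go through without essentially reconstructing the paper's choice of source problem and encoding.
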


We next explore whether the  bound on the maximum degree of~$G$ can be relaxed to a restriction on average degree.
The notion of average degree that we use is the \emph{connective constant}. Given a graph~$G$, and a  vertex $v$, 
let   $N_G(v,k)$ be the number of $k$-edge paths in $G$   that start from $v$.
The following definition is 
  taken almost verbatim  from \cite{connconst2,connconst1}.\footnote{The only difference 
between Definition~\ref{def:cc} and  the corresponding definitions in \cite{connconst2,connconst1} is the addition of the terminology ``profile $(a,c)$'' which will be used to state our hardness results in a strong form (the results in \cite{connconst2,connconst1} were algorithmic which is why this handle on the constants $a$ and $c$ was not required). 
} 
\begin{definition}[{\cite{connconst2,connconst1}}]\label{def:cc}
Let $\mathcal{F}$ be a family of finite graphs and let $\Delta$, $a$ and $c$ be positive real numbers. The connective constant of $\mathcal{F}$ is at most $\Delta$ with profile $(a,c)$  if, for any graph $G=(V,E)$ in $\mathcal{F}$ and any vertex $v$ in $G$, it holds that $\sum^\ell_{k=1} N_G(v,k)\leq c \Delta^\ell$ for all $\ell \geq a\log |V|$. 
\end{definition}

Sinclair, Srivastava, \v{S}tefankovi\v{c} and Yin~\cite[Theorem 1.3]{connconst2} showed
that, for fixed~$\Delta$, when $\gamma$ is a positive real,
the correlation decay method gives an FPTAS for approximating $Z_G(\gamma)$ on graphs~$G$ with connective constant at
most~$\Delta$ (without any bound on the maximum degree of~$G$).
The run-time of their algorithm is ${(n/\epsilon)}^{{\rm O}(\sqrt{\gamma \Delta} \log \Delta)}$, where $n$ is the number of vertices of $G$ and $\epsilon$ is the relative error.

Our next result shows that, in striking contrast to the bounded-degree case,
the algorithmic result of Sinclair et al.\ cannot be extended to negative reals, even if $\gamma \geq -1/(4(\Delta-1))$.
Given positive real numbers $a$ and $c$ and a real number
$\Delta>1$, let $\mathcal{F}_{\Delta, a,c}$ be
the set of graphs with connective constant at most $\Delta$ and profile $(a,c)$.
 
\begin{theorem}\label{thm:hardconnconst}
There exist a dense set of values $\gamma$ on the negative real axis such that the following holds for any real numbers $\Delta>1$ and all $a,c>0$.
 
\begin{enumerate}
\item\label{it:14g442223} It is \#P-hard to approximate $|Z_G(\gamma)|$ within a factor 1.01 on graphs $G\in \mathcal{F}_{\Delta, a,c}$,
\item\label{it:14g442223b}  it is \#P-hard to decide whether $Z_G(\gamma)>0$ on graphs $G\in \mathcal{F}_{\Delta, a,c}$.
\end{enumerate}
Both of these results hold even when restricted to bipartite graphs $G$ with $Z_G(\gamma)\neq 0$.
\end{theorem}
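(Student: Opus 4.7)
The plan is to reduce from the bounded-degree hardness of Theorems~\ref{thm:hard} and~\ref{thm:hardsign}. The key computational gadget is a pendant-path identity: attaching a simple path of length $L$ to every vertex of a graph $G$ to form $G^{*}$, and splitting matchings of $G^{*}$ according to which vertices of $G$ are matched into their pendant, gives
\[
Z_{G^{*}}(\gamma)\;=\;z_{L-1}^{\,n}\,\mu^{\,n}\,Z_{G}\!\bigl(\gamma/\mu^{2}\bigr),\qquad\mu\;=\;1+\gamma\,z_{L-2}/z_{L-1},
\]
where $z_k=Z_{P_k}(\gamma)$ satisfies $z_k=z_{k-1}+\gamma z_{k-2}$ with $z_0=1,\,z_1=1+\gamma$. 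The prefactor is efficiently computable, and the identity translates approximating $|Z_{G^{*}}(\gamma)|$ or determining its sign into the same task at the shifted activity $\gamma'=\gamma/\mu^{2}$ on~$G$.

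I would define the dense set $S\subseteq(-\infty,0)$ as those $\gamma<0$ for which some integer $L\geq 1$ and some integer $\Delta_0\geq 3$ make $\gamma'=\gamma/\mu^{2}$ lie strictly below $-1/(4(\Delta_0-1))$, so that $\gamma'$ falls in the hardness window of Theorem~\ref{thm:hard}. For $\gamma\in(-\tfrac14,0)$ the recurrence has a real Perron root $x_{+}=(1+\sqrt{1+4\gamma})/2\in(\tfrac12,1)$, so $\mu\to x_{+}$ and $\gamma'\to 1-1/x_{+}\in(-1,0)$ as $L\to\infty$; this suffices since any $\gamma'<0$ lies below $-1/(4(\Delta_0-1))$ for a large enough integer $\Delta_0$. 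For $\gamma<-\tfrac14$ the characteristic roots are complex of modulus $\sqrt{-\gamma}$, so $\mu$ oscillates and by a Weyl-equidistribution argument the ratio $z_{L-2}/z_{L-1}$ explores a dense set of values as $L$ varies, giving a dense set of accessible $\gamma'$; combined with the direct applicability of Theorem~\ref{thm:hard} to $\gamma<-\tfrac18$, this makes $S$ dense in $(-\infty,0)$.

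To land in $\mathcal{F}_{\Delta,a,c}$ for an arbitrary real $\Delta>1$ and $a,c>0$, I combine the pendant-path gadget with edge subdivision: each edge of the degree-$\Delta_0$ hard instance is first replaced by a path of length $D$, chosen so that $(\Delta_0-1)^{1/D}\leq\Delta$. Because $D$ depends only on $\Delta$ and $\Delta_0$ and not on $|V(G)|$, the construction remains polynomial-time. A direct path-counting argument in the resulting graph $G^{**}$ shows that branching can occur only at vertices inherited from~$G$, which are $D$ steps apart; this yields $N_{G^{**}}(v,k)\leq(\Delta_0-1)^{\lceil k/D\rceil}$ up to absolute constants, and summing over $k\leq\ell$ with $\ell\geq a\log|V(G^{**})|$ then gives $\sum_{k\leq\ell}N_{G^{**}}(v,k)\leq c\Delta^{\ell}$ for a suitable profile $(a,c)$, placing $G^{**}$ into $\mathcal{F}_{\Delta,a,c}$.

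The main obstacle is that a subdivided edge does not implement a single-edge activity change: writing $f(b_u,b_v)$ for the gadget's contribution with $b_u,b_v\in\{0,1\}$ indicating whether each endpoint uses its boundary edge, one has $f(0,0)f(1,1)-f(1,0)f(0,1)=\gamma^{2}(z_{D-2}z_{D-4}-z_{D-3}^{2})$, while a short induction gives the Jacobi-type identity $z_{n-1}^{2}-z_{n}z_{n-2}=(-\gamma)^{n}$, so this determinant equals $-\gamma^{2}(-\gamma)^{D-2}\neq 0$ for $\gamma\neq 0$. The heart of the proof is therefore to show that the $4$-parameter holant weight produced by subdivision still lets one recover $Z_{G}(\gamma')$ efficiently from $Z_{G^{**}}(\gamma)$; I expect this to go through either by a polynomial-interpolation argument exploiting the freedom in $L$ (in the spirit of \cite{Cai2012}) to generate several linear equations in the unknown weights, or by augmenting each subdivided edge with an extra pendant whose length is tuned so that the composite gadget factorises for $\gamma\in S$. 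Once this recovery step is in place, a $1.01$-factor approximation of $|Z_{G^{**}}(\gamma)|$, or the sign of $Z_{G^{**}}(\gamma)$, transfers to the same-quality result for $Z_{G}(\gamma')$, contradicting Theorem~\ref{thm:hard} or~\ref{thm:hardsign} and yielding items~\ref{it:14g442223} and~\ref{it:14g442223b} respectively.
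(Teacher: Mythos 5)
Your high-level plan (reduce from the bounded-degree hardness of Theorems~\ref{thm:hard} and~\ref{thm:hardsign}, then stretch the instance to force the connective constant below an arbitrary $\Delta>1$) matches the paper's, and your pendant-path identity and the determinant computation $f(0,0)f(1,1)-f(1,0)f(0,1)=-\gamma^2(-\gamma)^{D-2}\neq 0$ are correct. But the step you defer -- recovering $Z_H(\gamma')$ from the stretched graph when the two-terminal subdivision gadget has nonzero off-diagonal weights -- is exactly the heart of the proof, and neither of your proposed fixes works as stated. Polynomial interpolation ``in the spirit of \cite{Cai2012}'' requires \emph{exact} oracle values, whereas here both the target problems and the source problems are a $1.01$-factor norm approximation and a sign decision; a Holant-type sum $\sum_Y W_Y$ in which the mismatched phases carry weights comparable to the matched ones (for $\gamma<-1/4$ the path values $z_k$ all have magnitude of order $|\gamma|^{k/2}$, so $f(1,0),f(0,1)$ are not small relative to $f(0,0),f(1,1)$) is simply a different quantity from $C\cdot Z_H(\gamma')$, and an approximation or sign oracle for it gives you no handle on $Z_H(\gamma')$. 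Your second suggestion (tuning an extra pendant so the composite gadget ``factorises'') is precisely what needs a proof, and it is where the restriction to a \emph{special} dense set enters: your set $S$ is essentially all of $(-\infty,0)$, which should itself be a warning sign, since nothing in the paper (or in your argument) supports hardness on $\mathcal{F}_{\Delta,a,c}$ for every negative $\gamma$. Relatedly, your remark about ``the direct applicability of Theorem~\ref{thm:hard} to $\gamma<-\tfrac18$'' does not help here: those hard instances have maximum degree $3$ and connective constant about $2$, so they do not lie in $\mathcal{F}_{\Delta,a,c}$ for $\Delta$ close to $1$; stretching is needed for those $\gamma$ as well, and then the same unresolved recovery problem reappears.

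The paper's resolution is to take the dense set to be $\bigcup_{d\geq 3}\mathcal{B}_d$ from \eqref{def:bdelta}, i.e.\ exactly those $\gamma$ for which some complete $(d-1)$-ary tree has $Z_{T_n}(\gamma)=0$. At such $\gamma$ one can build (Lemmas~\ref{lem:notbad} and~\ref{lem:gammaminusonequarter}) a tree of maximum degree $d$ that implements the edge activity $-1$ \emph{perfectly}, with the off-diagonal terms $Z_{T,u,\neg v}(\gamma)=Z_{T,\neg u,v}(\gamma)=0$; this is what makes the substitution clean, via Lemma~\ref{lem:impleme12ntation}: $Z_{G'}(\gamma)=C\cdot Z_G(-1)$ with $C$ a computable nonzero constant, so both the $1.01$-approximation of the norm and the sign transfer directly from Theorems~\ref{thm:hard} and~\ref{thm:hardsign} at $\gamma^*=-1$. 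The stretching is then obtained \emph{inside} the gadget: because the path recurrence at activity $-1$ is periodic, chaining $6k+5$ copies of the implementing tree along a path again implements $-1$ perfectly with terminals at distance at least $k$, and replacing each edge of the degree-$3$ hard instance by this long gadget puts the resulting bipartite graph into $\mathcal{F}_{\Delta,a,c}$ for any $\Delta>1$. Without an analogue of this perfect, off-diagonal-free, arbitrarily-stretched implementation (which is only available on the special dense set), your reduction does not go through, so the proposal has a genuine gap at its central step.
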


The algorithmic contribution of our paper is to show that, despite the hardness result of Theorem~\ref{thm:hardconnconst},
correlation decay gives a good approximation algorithm for any complex value $\gamma$ that does not lie on the negative real axis when the input graph has bounded connective constant. 
 It is interesting that we are able to use correlation decay to get a good approximation
 for all non-real complex values~$\gamma$.
 Our result is the only known approximation in this setting. In particular,
it is not known how to obtain such a result using the method of Patel and Regts~\cite{PR}.
In order to describe our result, we use the following notation.
Given a complex number~$x$, let
 $\mathrm{arg}(x)$  denote the principal value of its argument in the range $[0,2\pi)$ and $|x|$ denote its norm. 
Our result is the following.

\newcommand{\statethmconnconst}{
Let $\Delta$, $a$ and $c$ be positive real numbers and let 
$\gamma\in\Complex\setminus  \Reals_{< 0}$ be any fixed edge activity.
Then  
there is
an algorithm
which takes as input an $n$-vertex  graph $G\in \mathcal{F}_{\Delta,a,c}$  and 
a rational $\epsilon\in(0,1)$ and
produces an output $\hat{Z} = Z_G(\gamma) {\mathrm{e}}^z$
for some complex number~$z$ with  $|z| \leq \epsilon$. 
The running time of the algorithm  is  $(\hat{c} n/\epsilon)^{O\big((1+a+\sqrt{|\hat{\gamma}| \Delta}) \log \Delta\big)}$ where $\hat{\gamma}=\frac{2|\gamma|}{1+\cos(\mathrm{arg}\gamma)}$ and $\hat{c}=\max\{1,c\}$.
}
\begin{theorem}\label{thm:connconst}
\statethmconnconst
\end{theorem}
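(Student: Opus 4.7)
}
The plan is to extend the correlation decay framework of Sinclair, Srivastava, \v{S}tefankovi\v{c} and Yin to complex edge activities by replacing the usual potential-function analysis with a \emph{geodesic-distance} analysis in a complex domain $D\subset\Complex$ equipped with a density $\phi\colon D\to\Reals_{>0}$. Concretely, I would first set up the self-avoiding walk (SAW) tree for the matching polynomial, so that for a fixed vertex $v$ in $G$ the occupation ratio
\[
R_{G,v}(\gamma)=\frac{Z_G(\gamma)}{Z_{G\setminus v}(\gamma)}
\]
can be computed (exactly, on any tree) by the standard recursion $R_v=1+\gamma\sum_{u\sim v} 1/R_u$. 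On the SAW tree $T_{\mathrm{SAW}}(G,v)$ this recursion evaluates to a ratio that equals the one on~$G$. Our task thus reduces to showing that, for $\gamma\in\Complex\setminus\Reals_{<0}$, this tree recursion exhibits exponential decay of correlations in the complex plane at a rate that matches the quantity $\hat\gamma=2|\gamma|/(1+\cos(\arg\gamma))$ appearing in the statement.

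The key analytic step is choosing the domain $D$ and density $\phi$. I would take $D$ to be a half-plane (or similar simply connected region) containing $\{1+\gamma/z : z \in D\}$ for all admissible $z$; the square-root substitution $\gamma=w^2$ naturally sends $\Complex\setminus\Reals_{<0}$ to the right half-plane, which is why $1+\cos(\arg\gamma)=2\cos^2(\arg(\gamma)/2)$ appears in $\hat\gamma$. Then I would define a density $\phi$ on $D$ so that the map $F(z_1,\dots,z_d)=1+\gamma\sum_{i=1}^{d} 1/z_i$ is a strict contraction with respect to the geodesic distance $d_\phi(x,y)=\inf_\gamma\int_\gamma \phi(z)\,|dz|$. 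The calculation reduces to bounding the Jacobian of $F$ in the $\phi$-metric: one shows $\sum_i \phi(F(z))\,|\partial F/\partial z_i|/\phi(z_i)\le 1-\Omega(1/(1+\sqrt{|\hat\gamma|\Delta}))$ uniformly on $D$, where the factor $\hat\gamma$ enters because the ``worst case'' geodesics in the $\phi$-metric cross the portion of the half-plane nearest the origin, and their length is governed by $|\gamma|/\cos^2(\arg(\gamma)/2)$. This contraction estimate, applied inductively along each root-to-leaf path of the SAW tree, yields $\phi$-distance decay of $(1-\Omega(1/(1+\sqrt{|\hat\gamma|\Delta})))^L$ at depth~$L$, which by passing through upper/lower bounds on $\phi$ translates to an additive error of $\exp(-\Omega(L/(1+\sqrt{|\hat\gamma|\Delta})))$ on the logarithm of the occupation ratio.

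Given the correlation-decay estimate, the algorithm is the standard one: compute $R_{G,v}$ approximately by truncating the SAW tree at depth
\[
L=O\bigl((1+\sqrt{|\hat\gamma|\Delta})\log(n/\epsilon)\bigr)+a\log n,
\]
multiplying occupation ratios $R_{G,v_i}$ over an arbitrary vertex ordering $v_1,\dots,v_n$ to recover $Z_G(\gamma)$ up to the desired error $e^z$ with $|z|\le\epsilon$. The complexity bound comes from counting the nodes of the truncated SAW tree, which is at most the total number of walks of length~$\le L$ from $v$; here Definition~\ref{def:cc} gives a bound of $c\Delta^L$ as long as $L\ge a\log n$, which is ensured by the additive $a\log n$ term in $L$. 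Multiplying together yields running time $(\hat c\,n/\epsilon)^{O((1+a+\sqrt{|\hat\gamma|\Delta})\log\Delta)}$, matching the bound in the statement.

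The main obstacle is the second paragraph: designing a density $\phi$ that simultaneously (i) makes $D$ invariant under the recursion $F$ for every degree $d\le\Delta$ and every admissible branch, (ii) yields a strict contraction with rate scaling like $1/\sqrt{|\hat\gamma|\Delta}$, and (iii) has $\phi$-diameter small enough that the translation from $\phi$-distance back to ordinary additive error is efficient. For positive real $\gamma$ the density of \cite{connconst2} (essentially $\phi(z)=1/\sqrt{z(z-1)}$ after suitable normalisation) works, but here the challenge is that as $\arg\gamma$ approaches $\pi$ the natural invariant region collapses onto the negative real axis, which is precisely the obstruction reflected by the blow-up of $\hat\gamma$ when $\cos(\arg\gamma)\to-1$. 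Controlling the geometry of the invariant region uniformly in $\arg\gamma$, and showing that geodesics in the $\phi$-metric avoid the singular locus, is where the bulk of the technical work lies.
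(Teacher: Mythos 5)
Your overall architecture --- Godsil's SAW-tree identity, the telescoping product of vertex ratios, a conformal density/geodesic metric in which the tree recursion contracts, truncation at depth $\max\{a\log n,\,O(\cdot\log(n/\epsilon))\}$, and node-counting via the connective constant --- is exactly the route the paper takes. But the proposal stops short of the one step that carries essentially all of the content: you never exhibit the invariant domain or the density $\phi$, and you yourself flag this (``where the bulk of the technical work lies'') as unresolved. As written, the claimed Jacobian bound $\sum_i \phi(F(z))\,|\partial F/\partial z_i|/\phi(z_i)\le 1-\Omega\bigl(1/(1+\sqrt{\hat\gamma\Delta})\bigr)$ is an assertion, not a proof, so there is a genuine gap at the core of the argument; also note the paper actually needs an $\ell^p$/H\"older version of this condition (with conjugate exponents $p,q$ chosen as in \cite{connconst2}) to get the stated dependence on $\Delta$, not just an $\ell^1$ contraction.

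The paper closes this gap without ever analysing geodesic geometry or a singular locus. Write $\gamma=(1/Q)^2$ with $\mathrm{Re}(Q)>0$ and rescale the ratios by $1/Q$, so the recursion becomes $y=1/(Q+\sum_i y_i)$; this preserves $U=\{y:\mathrm{Re}(y)>0,\ |y|<1/\mathrm{Re}(Q)\}$, and outputs in fact satisfy $\mathrm{Re}(y)\ge \mathrm{Re}(Q)/\bigl(|Q|+d/\mathrm{Re}(Q)\bigr)^2$, which is later needed to convert a distance bound into a multiplicative error on $p_v$. The density is chosen to depend only on the real part, $\Phi(y)=\bigl(\mathrm{Re}(y)\,(2/\mathrm{Re}(Q)-\mathrm{Re}(y))\bigr)^{-1}$ --- the complex analogue of the real potential $\hat\Phi(x)=1/(x(2-x))$ of \cite{connconst2}, not $1/\sqrt{z(z-1)}$ as you guess. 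Then, using only the elementary inequality $\mathrm{Re}(1/z)\le 1/\mathrm{Re}(z)$, the complex $\ell^p$ contraction condition reduces verbatim to the real-variable inequality of Sinclair et al.\ applied to $x_i=\mathrm{Re}(Q)\,\mathrm{Re}(y_i)\in(0,1]$ with effective activity $\hat\gamma=1/\mathrm{Re}(Q)^2=2|\gamma|/(1+\cos(\mathrm{arg}\,\gamma))$; this is exactly where $\hat\gamma$ enters, and the rate $\alpha$ with $\Delta^{1/q}\alpha<1$ comes straight from their bound. The only metric facts used afterwards are the crude bounds $\inf_U\Phi\ge\tfrac12(\mathrm{Re}(Q))^2$ and a straight-line-path upper bound on the $\Phi$-distance between the two leaf initialisations, plus the remark that for nonnegative real $\gamma$ the theorem is just \cite[Theorem 1.3]{connconst2}. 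So: right strategy and right shape of the final bound, but the decisive contraction estimate --- and the specific change of variables and real-part-only density that make it reducible to the known real case --- is missing from your proposal.
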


Theorem~\ref{thm:connconst} gives an algorithmic result  which contrasts
with the hardness results of Theorems~\ref{thm:hard} and~\ref{thm:hardsign}.
  It has the following corollary.

\begin{corollary}\label{cor:connconst}
Let $\Delta$, $a$ and $c$ be positive real numbers and let 
$\gamma\in\Complex\setminus  \Reals_{< 0}$ be any fixed edge activity.
Then, for any rational $K>1$  and any positive rational $\rho$, 
there are polynomial-time algorithms
that take as input a graph $G\in \mathcal{F}_{\Delta,a,c}$ 
and approximate
 $|Z_G(\gamma)|$ within a multiplicative factor of $K$ and $\mathrm{arg}(Z_G(\gamma))$ within 
 an additive error $\rho$. 
\end{corollary}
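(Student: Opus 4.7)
The plan is to derive the corollary as a direct reduction from Theorem~\ref{thm:connconst} by translating the ``exponential error'' form of its output into multiplicative/additive form. Given the fixed rationals $K>1$ and $\rho>0$, I first choose a positive rational $\epsilon \leq \min\{\tfrac12\ln K,\,\tfrac{\rho}{2}\}$. Since $K$, $\rho$, $\gamma$, $\Delta$, $a$, $c$ are all fixed constants, so is $\epsilon$, and therefore the running time $(\hat c n/\epsilon)^{O((1+a+\sqrt{|\hat\gamma|\Delta})\log\Delta)}$ promised by Theorem~\ref{thm:connconst} is polynomial in $n$.

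Run the algorithm of Theorem~\ref{thm:connconst} on $G$ with this $\epsilon$ to obtain $\hat Z = Z_G(\gamma)\,\mathrm{e}^{z}$ for some complex $z = x + \mathrm{i}y$ with $x^2 + y^2 \leq \epsilon^2$. Because $|\mathrm{e}^{z}|=\mathrm{e}^{x}$ and $\arg(\mathrm{e}^{z}) = y$, we have
\[
  |\hat Z| = \mathrm{e}^{x}\,|Z_G(\gamma)| \qquad\text{and}\qquad \arg(\hat Z) \equiv \arg(Z_G(\gamma)) + y \pmod{2\pi}.
\]
To produce the final rational outputs, compute a rational approximation $R$ to $|\hat Z|$ within a multiplicative factor $\mathrm{e}^{\epsilon}$ (via a polynomial-time square-root approximation applied to $\mathrm{Re}(\hat Z)^2 + \mathrm{Im}(\hat Z)^2$) and a rational approximation $A$ to $\arg(\hat Z)$ within additive error $\epsilon$ (via a polynomial-time arctangent approximation). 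Returning $R$ and $A$ yields multiplicative error at most $\mathrm{e}^{|x|+\epsilon}\leq \mathrm{e}^{2\epsilon}\leq K$ on $|Z_G(\gamma)|$ and additive error at most $|y|+\epsilon\leq 2\epsilon\leq\rho$ on $\arg(Z_G(\gamma))$, interpreted modulo $2\pi$.

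I do not anticipate a genuine obstacle, as the argument is essentially bookkeeping. The only minor subtleties are: (i) the output $\hat Z$ of the Theorem~\ref{thm:connconst} algorithm need not have exactly rational coordinates, but it can be maintained to any desired polynomial precision at polynomial overhead, which is absorbed into the final approximations $R$ and $A$; (ii) the approximation of $\arg$ is measured modulo $2\pi$, consistent with the convention in the paper that $\arg$ denotes the principal value in $[0,2\pi)$; and (iii) if $Z_G(\gamma)=0$ the argument is undefined, but then $\hat Z = 0$ so the norm approximation is trivial and no argument needs to be returned.
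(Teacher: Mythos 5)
Your proposal is correct and matches the intended derivation: the paper states Corollary~\ref{cor:connconst} as an immediate consequence of Theorem~\ref{thm:connconst}, obtained exactly by fixing $\epsilon$ small relative to $\ln K$ and $\rho$ (so the runtime is polynomial for fixed parameters) and reading off $|\hat Z|=\mathrm{e}^{\mathrm{Re}(z)}|Z_G(\gamma)|$ and $\arg(\hat Z)\equiv\arg(Z_G(\gamma))+\mathrm{Im}(z)\pmod{2\pi}$. Your extra bookkeeping about rational outputs, precision, and the (vacuous, by Heilmann--Lieb) case $Z_G(\gamma)=0$ is harmless and consistent with the paper.
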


In order to prove Theorem~\ref{thm:connconst}, showing correlation decay for complex~$\gamma$, we use geodesic distances in the complex plane in the metric defined by an appropriate density function. Correlation decay for complex activities has been analysed in the context of the hard-core model (see Harvey, Srivastava and Vondr\'ak \cite{Piyush})\footnote{Note that Harvey et al were actually working with the mutivariate
hard-core polynomial -- this causes interesting complications
which will not be relevant for this paper. They also extend their method (for the hard-core polynomial, in their region) to graphs of
unbounded degree that have bounded connective constant.}. The region in the complex plane in which the authors of \cite{Piyush} worked allowed  them to measure distances using the norm instead of requiring geodesic distances. An alternative approach was given by Peters and Regts \cite{Peters}, again in the context of the hard-core model, where they showed contraction within the basin of an attracting fixpoint using the theory of complex dynamical systems.

\section{Preliminaries}\label{sec:prelims}

Let $\gamma$ be a complex number  and $G=(V,E)$ be an arbitrary graph. Recall that $\mathcal{M}_G$ is the set of matchings of $G$. For a matching $M\in \mathcal{M}_G$, we denote by $\mathsf{ver}(M)$  the set of matched vertices in the matching $M$.  For a vertex $u$ in $G$, we also define
\begin{equation*}
Z_{G,u}(\gamma):=\sum_{M\in \mathcal{M}_G;\, u\in \mathsf{ver}(M)}\gamma^{|M|}\quad \mbox{and} \quad Z_{G,\neg u}(\gamma):=\sum_{M\in \mathcal{M}_G;\, u\notin \mathsf{ver}(M)}\gamma^{|M|}.
\end{equation*}
Thus, $Z_{G,u}(\gamma)$ is the contribution to the partition function $Z_G(\gamma)$ from those matchings $M\in \mathcal{M}_G$ such that  $u$ is matched in $M$, while $Z_{G,\neg u}(\gamma)$ is the contribution to the partition function $Z_G(\gamma)$ from those matchings $M\in \mathcal{M}_G$ such that $u$ is not matched in $M$. 

We will use the following result about the location of the zeroes of the matching polynomial.
\begin{theorem}[{\cite{Heilmann1972}}, see, e.g., {\cite[Theorem 5.1.2]{barvinokbook}}]\label{thm:nozeros}
Let $\Delta\geq 3$ be an integer and  $G$ be a graph of maximum degree $\Delta$. Then, for all complex $\gamma$ that do not lie in the interval $(-\infty,-\frac{1}{4(\Delta-1)})$ of the negative real axis, it holds that $Z_G(\gamma)\neq 0$.
\end{theorem}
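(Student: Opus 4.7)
The plan is to adopt the classical Heilmann--Lieb approach: recast the zeros of $Z_G(\gamma)$ as zeros of a real-rooted companion polynomial whose roots can be bounded. Set $n=|V(G)|$ and write $Z_G(\gamma)=\sum_{k\geq 0} m_k(G)\,\gamma^k$, where $m_k(G)$ counts the $k$-matchings of $G$. Define
\[
P_G(x)=\sum_{k\geq 0}(-1)^k m_k(G)\,x^{n-2k},
\]
so that $P_G(x)=x^n\,Z_G(-1/x^2)$. Since $Z_G(0)=1$, any root $\gamma$ of $Z_G$ is nonzero and corresponds to a root $x$ of $P_G$ via $x^2=-1/\gamma$. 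The theorem therefore reduces to proving that every root of $P_G$ is real and strictly less than $2\sqrt{\Delta-1}$ in absolute value: real-rootedness forces $-1/\gamma$ to be a positive real, hence $\gamma<0$, and the bound on $|x|$ gives $|\gamma|>\tfrac{1}{4(\Delta-1)}$, which is exactly what Theorem~\ref{thm:nozeros} asserts.

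The workhorse is the deletion recurrence, obtained by classifying matchings of $G$ according to whether a fixed vertex $v$ is unsaturated or matched to a specific neighbour $u$:
\[
P_G(x)=x\,P_{G\setminus v}(x)-\sum_{u\sim v}P_{G\setminus\{u,v\}}(x).
\]
I would prove by induction on $|V(G)|$ the combined statement that $P_G$ is real-rooted and that, for every vertex $v$, the roots of $P_{G\setminus v}$ strictly interlace those of $P_G$. Interlacing is what makes the induction self-propelling: evaluating the recurrence at consecutive roots of $x\,P_{G\setminus v}(x)$ and using the inductive interlacing of $P_{G\setminus v}$ with each $P_{G\setminus\{u,v\}}$ forces $P_G$ to change sign the required number of times, yielding real-rootedness and the next round of interlacing in one package.

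To obtain the quantitative bound $|x|<2\sqrt{\Delta-1}$ I would invoke Godsil's path-tree trick: the path tree $T(G,v)$ rooted at $v$, whose vertices are the non-empty simple paths in $G$ starting from $v$, has the remarkable property that $P_G$ divides $P_{T(G,v)}$. Hence it suffices to prove the bound on trees of maximum degree $\Delta$. On a tree $T$ with root $r$, applying the recurrence inductively at $r$ expresses $P_T(x)/P_{T\setminus r}(x)$ as a finite continued fraction whose per-step contraction depends only on the local degree. Comparing this continued fraction, term by term, with the explicit Chebyshev-type recursion governing the matching polynomials of the complete $\Delta$-ary tree—whose extremal roots equal $2\sqrt{\Delta-1}$—yields the desired bound for every finite subtree, and hence for $G$.

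The main obstacle is the simultaneous real-rootedness and interlacing induction: subtracting $\sum_{u\sim v}P_{G\setminus\{u,v\}}(x)$ from $x\,P_{G\setminus v}(x)$ must preserve both properties uniformly over the choice of $v$, and it is not entirely obvious that the contribution of the sum does not destroy the required sign pattern. The path-tree reduction is what cleanly decouples this combinatorial step from the numerical bound, which is why channelling the whole argument through trees—even when the original $G$ contains cycles—is the most transparent route to the precise constant $2\sqrt{\Delta-1}$ appearing in Theorem~\ref{thm:nozeros}.
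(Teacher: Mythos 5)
This statement is not proved in the paper at all: it is Theorem~\ref{thm:nozeros}, quoted as a black box from Heilmann and Lieb (see also Barvinok's book), and the paper only ever uses its conclusion. So there is no in-paper argument to compare against; what you have written is essentially a reconstruction of the classical Heilmann--Lieb/Godsil proof, and as a sketch it is sound. The reduction $P_G(x)=x^nZ_G(-1/x^2)$, the deletion recurrence $P_G=xP_{G\setminus v}-\sum_{u\sim v}P_{G\setminus\{u,v\}}$, the interlacing induction for real-rootedness, and Godsil's divisibility of $P_{T(G,v)}$ by $P_G$ are exactly the standard ingredients, and together with a root bound of the form $|x|<2\sqrt{\Delta-1}$ they do yield the stated zero-free region, including the endpoint $\gamma=-\tfrac{1}{4(\Delta-1)}$ because of the strict inequality. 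Two small points deserve care. First, your interlacing induction is actually redundant once you invoke the path tree: for a tree the matching polynomial is the characteristic polynomial of a real symmetric matrix, so real-rootedness of $P_{T(G,v)}$ (hence of its divisor $P_G$) comes for free. Second, your quantitative step is slightly misstated: the extremal roots of a \emph{finite} complete $(\Delta-1)$-ary tree are strictly below $2\sqrt{\Delta-1}$ (they only approach it as the depth grows), so one cannot literally ``compare with a tree whose extremal roots equal $2\sqrt{\Delta-1}$''; the standard way to finish is the ratio/continued-fraction induction itself, showing that for $x\geq 2\sqrt{\Delta-1}$ every subtree ratio $P_T(x)/P_{T\setminus r}(x)$ stays above $\sqrt{\Delta-1}$ (using that non-root vertices have at most $\Delta-1$ children), whence $P_{T(G,v)}$ has no root with $|x|\geq 2\sqrt{\Delta-1}$. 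With that adjustment your outline is a correct proof of the cited theorem, albeit one the paper deliberately does not reprove.
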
 

\begin{corollary}\label{lem:positive}
Let $\Delta\geq 3$ be an integer and $\gamma>-\frac{1}{4(\Delta-1)}$ be a real number. Then, for all graphs $G$ of maximum degree $\Delta$ it holds that $Z_{G}(\gamma)>0$.
\end{corollary}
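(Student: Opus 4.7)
The plan is to deduce the corollary directly from Theorem~\ref{thm:nozeros} together with a standard continuity/connectedness argument. First I would observe that, viewed as a function of $\gamma$, $Z_G(\gamma)=\sum_{M\in\mathcal{M}_G}\gamma^{|M|}$ is a polynomial whose coefficients are the nonnegative integers counting matchings of each size. In particular, for real $\gamma$, $Z_G$ is a real-valued continuous function on $\Reals$, and its constant term is $Z_G(0)=1>0$ (contributed by the empty matching).

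Next I would restrict attention to the open interval $I:=\bigl(-\tfrac{1}{4(\Delta-1)},\infty\bigr)\subseteq\Reals$. Every real number in $I$ is a complex number that does not lie in $(-\infty,-\tfrac{1}{4(\Delta-1)}]$, so Theorem~\ref{thm:nozeros} applies and gives $Z_G(\gamma)\neq 0$ for every $\gamma\in I$. Note that strictly Theorem~\ref{thm:nozeros} rules out zeros on $(-\infty,-\tfrac{1}{4(\Delta-1)})$; at the endpoint $-\tfrac{1}{4(\Delta-1)}$ itself we do not need any information since that point is not contained in $I$.

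Finally, since $I$ is connected, $Z_G$ is continuous on $I$, and $Z_G$ never vanishes on $I$, the intermediate value theorem implies that $Z_G$ has constant sign on $I$. Combined with $Z_G(0)=1>0$, this yields $Z_G(\gamma)>0$ for every $\gamma\in I$, which is exactly the claim. There is no real obstacle here — the content of the corollary is packaged entirely inside the Heilmann–Lieb zero-free region, and the only thing being added is the observation that on the real segment of that region the sign of $Z_G$ is pinned down by its value at $\gamma=0$.
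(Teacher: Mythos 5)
Your argument is correct and is exactly the reasoning the paper intends: the corollary is stated as an immediate consequence of Theorem~\ref{thm:nozeros} (Heilmann--Lieb), and the paper gives no separate proof beyond this. Your observation that $Z_G$ is a real polynomial, nonvanishing on the connected interval $\bigl(-\tfrac{1}{4(\Delta-1)},\infty\bigr)$, with $Z_G(0)=1>0$, so that continuity pins down the positive sign, is precisely the standard argument.
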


For our approximation algorithm of Theorem~\ref{thm:connconst}, given a graph $G=(V,E)$ with $Z_G(\gamma)\neq 0$ and a vertex $v\in V$, we will be interested in the quantity
\[p_v(G,\gamma):=Z_{G,\neg v}(\gamma)/Z_G(\gamma).\]
The algorithm will be based on the following result by Godsil.
\begin{theorem}[\cite{godsil81}]\label{thm:selfavoid}
Let $\gamma\in \mathbb{C}\backslash \mathbb{R}_{<0}$. Let $G=(V,E)$ be a graph and let $v\in V$ be one of its vertices. Let $T_{SAW}(v,G)$ be the self-avoiding walk tree of $G$ rooted at $v$.  Then,
\[p_v(G,\gamma) = p_v(T_{SAW}(v,G),\gamma).\]
\end{theorem}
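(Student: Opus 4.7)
The plan is to prove Theorem~\ref{thm:selfavoid} by induction on $|V(G)|$, relying on the classical deletion recursion for the matching polynomial combined with the fact that the SAW tree is essentially \emph{defined} to unfold this recursion.

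First I would establish the recursion at $v$. Splitting matchings of $G$ according to whether $v$ is unmatched or matched to a specific neighbor $u$ gives
\[Z_G(\gamma) = Z_{G-v}(\gamma) + \gamma \sum_{u \sim v} Z_{G-v-u}(\gamma),\]
and since matchings that avoid $v$ are precisely the matchings of $G-v$, we have $Z_{G,\neg v}(\gamma) = Z_{G-v}(\gamma)$. Dividing by $Z_{G-v}(\gamma)$ yields
\[p_v(G,\gamma) = \frac{1}{1 + \gamma \sum_{u \sim v} p_u(G-v,\gamma)}.\]
Because $\gamma \in \Complex \setminus \Reals_{<0}$, the Heilmann--Lieb theorem (Theorem~\ref{thm:nozeros} in bounded-degree form, and its general form for unbounded degree) guarantees that $Z_H(\gamma) \neq 0$ for every induced subgraph $H$ of $G$, so every ratio encountered in the recursion is well-defined.

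The second ingredient is the structure of $T_{SAW}(v,G)$. Its vertices are the self-avoiding walks in $G$ starting at $v$, its root is the length-$0$ walk, and an edge joins two walks that differ by a single extension step. Deleting the root leaves a forest whose components are rooted at the length-$1$ walks $(v,u)$ with $u \sim_G v$, and the key observation is a rooted-tree isomorphism: stripping the initial $v$ from each walk identifies the component rooted at $(v,u)$ with $T_{SAW}(u, G-v)$. This is a bijection because prepending $v$ to a self-avoiding walk from $u$ in $G-v$ produces a self-avoiding walk from $v$ in $G$, and conversely. Since the matching polynomial of a disjoint union factors as a product, the marginal of a vertex in the forest equals its marginal in the component containing it, so applying the same recursion on the tree side gives
\[p_v\bigl(T_{SAW}(v,G),\gamma\bigr) = \frac{1}{1 + \gamma \sum_{u \sim v} p_u\bigl(T_{SAW}(u, G-v),\gamma\bigr)}.\]

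With both pieces in place the induction closes: the base case is a single vertex, where both sides equal $1$; for the inductive step, applying the hypothesis to the strictly smaller graph $G-v$ (with each neighbor $u$ as root) replaces $p_u(T_{SAW}(u, G-v),\gamma)$ by $p_u(G-v,\gamma)$, matching the two displays and yielding $p_v(T_{SAW}(v,G),\gamma) = p_v(G,\gamma)$. The only genuinely non-routine step is verifying the rooted-tree isomorphism between each subtree of $T_{SAW}(v,G)-v$ and the smaller SAW tree $T_{SAW}(u,G-v)$; once that bijection is in hand, the remainder is a line of algebra. What makes the argument so clean for the matching polynomial (in contrast to, say, the hard-core model, where SAW-based reductions require sign corrections at the leaves) is that the recursion literally removes $v$ from $G$ at each step, so there are no boundary terms to adjust.
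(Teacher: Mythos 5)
Your argument is correct. Note that the paper itself does not prove this statement --- it is quoted directly from Godsil's 1981 paper --- so there is no internal proof to compare against; what you have written is essentially the classical proof of Godsil's theorem (induction on $|V(G)|$ via the deletion recursion $p_v(G,\gamma)=\bigl(1+\gamma\sum_{u\sim v}p_u(G-v,\gamma)\bigr)^{-1}$, together with the rooted isomorphism between the branches of $T_{SAW}(v,G)$ at the root and the trees $T_{SAW}(u,G-v)$). The two ingredients you single out are exactly the right ones, and both are verified correctly: the branch isomorphism holds because stripping the initial $v$ is a bijection between self-avoiding walks of $G$ starting $v,u,\ldots$ and self-avoiding walks of $G-v$ starting at $u$, preserving the one-step-extension adjacency; and the factorization over components justifies replacing marginals in the forest by marginals in the relevant component. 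One small remark: the identity is really an identity of rational functions in $\gamma$, and the hypothesis $\gamma\in\Complex\setminus\Reals_{<0}$ enters only to guarantee, via Heilmann--Lieb (all zeros of the matching polynomial lie on the negative real axis, so the bounded-degree statement of Theorem~\ref{thm:nozeros} suffices for any finite graph), that every denominator $Z_H(\gamma)$ appearing in the recursion --- for induced subgraphs of $G$ and for the finite trees $T_{SAW}$ --- is nonzero; you invoke this correctly, so the induction closes as claimed.
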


\section{FPTAS for graphs with bounded connective constant}

In this section, we prove Theorem~\ref{thm:connconst}. Consider $\gamma\in \mathbb{C}\backslash \mathbb{R}_{<0}$.  

We will use the correlation decay technique of Weitz \cite{Weitz}, which we adapt for use with complex activities. We review the basic idea behind the technique (see, e.g., \cite{Bayati,connconst1,connconst2}). For a graph $G$ (of bounded connective constant), we first express $Z_{G}(\gamma)$ as a telescoping product 
\begin{equation}\label{eq:4bt45tb4tr899}
Z_{G}(\gamma)=1/\prod^n_{i=1}p_{v_j}(G_j,\gamma)
\end{equation}
where $v_1,\hdots,v_n$ is an arbitrary enumeration of the vertices of the graph $G$ and $G_j$ is the graph obtained from $G$ by deleting the vertices $v_1,\hdots,v_j$. In light of  \eqref{eq:4bt45tb4tr899}, we can therefore focus on approximating the value $p_v(G,\gamma)$ for a graph $G$ and vertex $v$. Using Godsil's Theorem (cf. Theorem~\ref{thm:selfavoid}), it in turn suffices to approximate $p_v(T_{SAW}(v,G),\gamma)$. This  might seem as a somewhat simpler task given that $T_{SAW}(v,G)$ is a tree; the caveat however is that the tree $T_{SAW}(v,G)$ is  prohibitively large, so in order to be able to perform computations efficiently we need to truncate  the tree. The correlation decay technique analyses the approximation error introduced by this truncation process by recursively tracking the error using tree recurrences. 

In the case of matchings, for a tree $T$ and a vertex $v$ in $T$, we can write a recursion for $p_v(T,\gamma)$ as follows. If $v$ is the only vertex in $T$, then $p_v(T,\gamma)=1$ (since the only possible matching is the empty set and thus $Z_{T,\neg v}(\gamma) = Z_T(\gamma)=1$). Otherwise, let $T_1,\hdots, T_d$ be the trees of $T\backslash \{v\}$ and let $v_1,\hdots,v_d$ be the neighbours of $v$ in $T_1,\hdots, T_d$, respectively. Then, we have that
\[Z_{T,\neg v}(\gamma)=\prod_{i=1}^d Z_{T_i}(\gamma), \quad Z_{T}(\gamma)=\prod_{i=1}^d Z_{T_i}(\gamma) + \sum_{i=1}^d \gamma\, Z_{T_i,\neg v_i}(\gamma)\prod_{j\in\{1,\dots,d\},j\neq i}Z_{T_j}(\gamma)\]
and therefore
\begin{equation*}
p_v(T,\gamma) = \frac{Z_{T,\neg v}(\gamma)}{Z_T(\gamma)} = \frac{1}{1+\gamma\sum_{i=1}^d \frac{Z_{T_i,\neg v_i}(\gamma)}{Z_{T_i}(\gamma)}} = \frac{1}{1+\gamma\sum_{i=1}^d p_{v_i}(T_i,\gamma)}.
\end{equation*}
Hence, we need to evaluate the recurrence
\begin{equation}\label{eq:xrecurrence}
x = F(x_1,\hdots, x_d) \mbox{ where } F(x_1,\hdots, x_d)=\frac{1}{1+\gamma\sum_{i=1}^d x_i},
\end{equation}
with base case $x=1$.

To show the decay of correlations, one wants to show that after applying the recurrence starting from two different sets of values at $v_1,\dots,v_d$, the two computed values at $v$ will be ``closer'' than were the initial values at the $v_i$'s. This leads us to define a notion of distance. Often straightforward distances do not suffice to show decay of correlations, and distances defined via a ``potential'' function are used. We adapt this notion to the complex plane.

\subsection{Metrics for measuring the error in the complex plane}
We use a distance metric based on conformal density functions (see \cite{Conformal} for details).
\begin{definition}[Length, Distance, Metric]\label{def:ldm}
Let $U$ be a simply connected open subset of ${\mathbb C}$ and let
$\Phi:U\rightarrow {\mathbb R}_{>0}$ be a function
(called conformal density). The {\em length  with respect to $\Phi$}
of a path\footnote{\label{foot:path}Following \cite{Conformal}, paths are assumed to be continuous and piecewise continuously differentiable. By removing intervals where the derivative of the path is zero, we obtain a path whose derivative is non-zero a.e. (zeros of a continuous function cannot be dense in any interval unless the whole interval consists of zeros because of continuity).} $\eta:[0,1]\rightarrow U$  is defined as
\[\int_{0}^{1} \Phi(\eta(t)) \Big|\frac{\partial}{\partial
t}\eta(t)\Big| \, dt.\]
  The {\em distance with respect to $\Phi$} between two points $x,y\in
U$, denoted ${\rm dist}_{\Phi}(x,y)$, is the infimum of the lengths of
the paths $\eta$ connecting $x$ to $y$ (that is, $\eta(0)=x$
and $\eta(1)=y$).
We will refer to the metric induced by the distance function ${\rm
dist}_{\Phi}(\cdot,\cdot)$ as the (conformal) metric given by $\Phi$.
\end{definition}

Here is an example of a conformal metric.

\begin{example}\label{exal}
Suppose $U$ is the right-complex half-plane, that is $x\in U$ iff
$\mathrm{Re}(x)>0$, and
$\Phi(x)=1/\mathrm{Re}(x)$. The metric is the Poincar\'e metric in the
half-plane
(usually one takes the upper-complex half-plane) and the distance
between $a=x_1 + y_1 \mathrm{i}$
and $b=x_2 + y_2 \mathrm{i}$ is
\[2\ln\left(\frac{\sqrt{(y_2-y_1)^2 + (x_2-x_1)^2}+\sqrt{(y_2-y_1)^2 +
(x_1 + x_2)^2}}{2\sqrt{x_1 x_2}}\right).\]
\end{example}

We first quantify one-level correlation decay.
\begin{lemma}\label{lem:onelevelCD}
Let $U$ be a simply connected open subset of ${\mathbb C}$,
$\Phi:U\rightarrow {\mathbb R}_{> 0}$ be a conformal density function,
and ${\rm dist}_{\Phi}(\cdot,\cdot)$ be the metric given by $\Phi$.
Let $p$ and $q$ be conjugate exponents, that is, $1/p+1/q=1$, where
$p,q\in\mathbb{R}_{>0}\cup\{\infty\}$.

Suppose that $d\geq 1$ is an integer and $F:U^d\rightarrow U$ is a
holomorphic map. Let $x_1,\dots,x_d\in U$ and $y_1,\dots,y_d\in U$ and
let
$x=F(x_1,\dots,x_d)$ and $y=F(y_1,\dots,y_d)$. Assume that there
exists a real $\alpha\in(0,1)$ such that for any $z_1,\dots,z_d\in U$
\begin{equation}\label{netop}
\sum_{i=1}^d \Big|\Phi(F(z_1,\dots,z_d)) 
\frac{\partial F}{\partial z_i}(z_1,\dots,z_d)\frac{1}{\Phi(z_i)}\Big|^p \leq \alpha^p.
\end{equation}
Then
\begin{equation}\label{netop2}
{\rm dist}_{\Phi}(x,y)  \leq \alpha \left(\sum_{i=1}^d {\rm
dist}_{\Phi}(x_i,y_i)^q\right)^{1/q}.
\end{equation}
\end{lemma}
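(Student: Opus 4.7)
The plan is to construct a path from $x$ to $y$ in $U$ by pushing forward near-geodesic paths from $x_i$ to $y_i$ through $F$, and then to bound its $\Phi$-length pointwise using H\"older's inequality together with hypothesis \eqref{netop}. Fix $\epsilon>0$. For each $i$, pick a piecewise $C^1$ path $\eta_i:[0,1]\to U$ from $x_i$ to $y_i$ whose $\Phi$-length $L_i$ satisfies $L_i\leq {\rm dist}_\Phi(x_i,y_i)+\epsilon$, and \emph{reparameterize it by scaled $\Phi$-arc length} so that $\Phi(\eta_i(t))\,|\eta_i'(t)|=L_i$ holds identically on $[0,1]$. Define the composite path $\eta(t):=F(\eta_1(t),\ldots,\eta_d(t))$; it runs from $x$ to $y$ and is piecewise $C^1$ because $F$ is holomorphic.

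By the complex chain rule, $\eta'(t)=\sum_{i=1}^d \frac{\partial F}{\partial z_i}(\eta_1(t),\ldots,\eta_d(t))\,\eta_i'(t)$, so the triangle inequality yields
\[
\Phi(\eta(t))\,|\eta'(t)|\,\leq\, \sum_{i=1}^d \underbrace{\Big(\Phi(\eta(t))\,\Big|\tfrac{\partial F}{\partial z_i}\Big|\,\tfrac{1}{\Phi(\eta_i(t))}\Big)}_{=:\,a_i(t)}\cdot \underbrace{\Big(\Phi(\eta_i(t))\,|\eta_i'(t)|\Big)}_{=:\,b_i(t)}.
\]
Applying H\"older's inequality with exponents $p,q$ to the right-hand side, hypothesis \eqref{netop} (instantiated at $z_j=\eta_j(t)$) gives $(\sum_i a_i(t)^p)^{1/p}\leq \alpha$, while the constant-$\Phi$-speed parameterization makes $(\sum_i b_i(t)^q)^{1/q}=(\sum_i L_i^q)^{1/q}$, which is independent of $t$. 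Hence $\Phi(\eta(t))|\eta'(t)|\leq \alpha\,(\sum_i L_i^q)^{1/q}$ pointwise in $t$, and integrating over $[0,1]$ yields $L_\Phi(\eta)\leq \alpha\,(\sum_i L_i^q)^{1/q}$. Therefore ${\rm dist}_\Phi(x,y)\leq \alpha\,(\sum_i({\rm dist}_\Phi(x_i,y_i)+\epsilon)^q)^{1/q}$, and letting $\epsilon\to 0$ finishes the proof.

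The main subtlety is the final integration step. If the $\eta_i$ were parameterized arbitrarily, then combining the pointwise H\"older bound with Minkowski's integral inequality would give $\int_0^1(\sum_i b_i(t)^q)^{1/q}\,dt\geq(\sum_i L_i^q)^{1/q}$, an inequality pointing the wrong way and useless as an upper bound. The constant-$\Phi$-speed reparameterization is precisely what makes each $b_i(t)$ equal to the constant $L_i$, collapsing this step into an equality. The remaining technicalities — performing the $\Phi$-arc-length reparameterization on each smooth piece (after subdividing to avoid points where $|\eta_i'|$ vanishes, or perturbing slightly) and handling the extreme cases $p,q\in\{1,\infty\}$ with the usual $\ell^\infty$/$\ell^1$ conventions — are routine.
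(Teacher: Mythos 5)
Your proposal is correct and follows essentially the same argument as the paper's proof: reparameterize near-optimal paths $\eta_i$ to constant $\Phi$-speed, push them forward through $F$, and bound the length of $F(\eta_1(t),\dots,\eta_d(t))$ pointwise via the chain rule, the triangle inequality, and H\"older's inequality combined with \eqref{netop}, before letting $\epsilon\to 0$. Your remark on why the constant-speed parameterization is needed is a nice observation, but the proof itself matches the paper's.
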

\begin{proof}
Let $\eps>0$. For $i\in [d]$, let $\eta_i$ be a path connecting
$x_i$ to $y_i$ of length
$\ell_i\leq {\rm dist}_{\Phi}(x_i,y_i) + \eps$. W.l.o.g., $\eta_i$ is
re-parameterized to uniform speed (cf. Footnote~\ref{foot:path}), 
that is, for a.e. $t\in [0,1]$ we have
\begin{equation}\label{eq:egb4y443f4fr}
\left|\frac{\partial}{\partial t}\eta_{i}(t)\right| \Phi(\eta_{i}(t))= \ell_i.
\end{equation}
We now define a path $\eta$ connecting $x$ to $y$:
\[\eta(t) := F(\eta_1(t),\dots,\eta_d(t)).\]
Let $L$ denote the length of $\eta$ and $F_i(x_1,\ldots,x_d)$ denote the function
$\frac{\partial F}{\partial x_i} (x_1,...,x_d)$. Then, using the triangle inequality and \eqref{eq:egb4y443f4fr}, we have
\begin{equation}\label{eq:b4tbt5by11212331}
\begin{aligned}
L&=\int_{0}^1 \Phi(\eta(t)) \Big|\frac{\partial}{\partial
t}\eta(t)\Big|\, dt = \int_{0}^1 \Phi(\eta(t)) \Big|\sum_{i=1}^d
 F_i(\eta_1(t),\dots,\eta_d(t))\,
\frac{\partial\eta_i }{\partial t} (t) \Big|\, dt \\
&\leq \int_{0}^1 \Phi(\eta(t)) \sum_{i=1}^d\Big|
 F_i(\eta_1(t),\dots,\eta_d(t))\,
\frac{\partial\eta_i }{\partial t} (t) \Big|\, dt \\
&=\int_{0}^1  \sum_{i=1}^d \Big|\Phi(\eta(t)) F_i (\eta_1(t),\dots,\eta_d(t))\frac{1}{\Phi(\eta_i(t))}\,
\ell_i\Big|\, dt.
\end{aligned}
\end{equation}
By H\"older's inequality and  condition~\eqref{netop}, for any $t\in
[0,1]$, we have
\begin{align*}
\sum_{i=1}^d \Big|\Phi(\eta(t)) & F_i(\eta_1(t),\dots,\eta_d(t))\frac{1}{\Phi(\eta_i(t))}\ell_i\Big|
\leq\\
&\left(\sum_{i=1}^d\Big|\Phi(\eta(t)) F_i(\eta_1(t),\dots,\eta_d(t))\frac{1}{\Phi(\eta_i(t))}\Big|^p\right)^{1/p}\left(\sum_{i=1}^d\ell_i^q\right)^{1/q}
\leq
\alpha\Big(\sum_{i=1}^d\ell_i^q\Big)^{1/q}.
\end{align*}
Integrating this for $t$ between 0 and 1 and combining with
\eqref{eq:b4tbt5by11212331}, we obtain
\begin{equation*}
{\rm dist}_{\Phi}(x,y) \leq L\leq  \alpha\Big(\sum_{i=1}^d\ell_i^q\Big)^{1/q}.
\end{equation*}
Taking $\eps\rightarrow 0$ we obtain
\begin{equation*}
{\rm dist}_{\Phi}(x,y) \leq \alpha \left(\sum_{i=1}^d {\rm
dist}_{\Phi}(x_i,y_i)^q\right)^{1/q}.\qedhere
\end{equation*}
\end{proof}

Now, given a rooted tree, our goal will be to bound the correlation decay at the root when we truncate the tree at depth $\Theta(\log n)$. Let $T$ be a finite tree rooted at a vertex $\rho$ and let $C$ be a subset of the leaves of $T$. Let $U\subseteq \mathbb{C}$. We will have a family of maps $\{F_d\}_{d\geq 1}$  where $F_d:U^d\mapsto U$ will be a symmetric map of arity $d$ (which will be the recurrence applied to a vertex of the tree with $d$ children).  Let $\sigma: C\rightarrow U$  be an arbitrary assignment of values in $U$ to the vertices of $C$. Let also $u_0\in U$ be the ``initial'' value ($u_0$ corresponds to the starting point of the recurrences). For a vertex $v$ in $T$ and an initial value $u_0\in U$, we define the quantity $r_v(C,\sigma, u_0)$ recursively as follows.  
 
\begin{equation} 
r_v(C,\sigma,u_0)=
\begin{cases}
 u_0&  \mbox{if $v$ is a leaf of $T$ and $v\notin C$},\\
\sigma(v)& \mbox{if } v\in C,\\
F_d(x_1,\hdots,x_d)&
\mbox{otherwise, where $x_i=r_{v_i}(C,\sigma,u_0)$}\\
 & \mbox{and   $v_1,\hdots,v_d$ are $v$'s children in $T$.}\end{cases}
\end{equation}
We can now study the sensitivity of $r_v(C,\sigma,u_0)$ to the assignment $\sigma$. The following lemma is the analogue of \cite[Lemma 3]{connconst1} for the complex plane and will be used to apply the correlation decay technique for graphs of bounded connective constant. 
\begin{lemma}\label{lem:vt41235evtvv}
Let $U$ be a simply connected open subset of ${\mathbb C}$ and 
$\Phi:U\rightarrow {\mathbb R}_{> 0}$ be a conformal density function.  For $d=1,2,\hdots,$ let $F_d:U^d\mapsto U$ be symmetric holomorphic maps. Suppose that there exists a real $\alpha \in (0,1)$ and conjugate exponents
$p$ and $q$ such that for every integer $d\geq 1$ and all $z_1,\hdots,z_d\in U$ it holds that  
\begin{equation}\label{eq:rv4th6y55323}
\sum_{i=1}^d \Big|\Phi(F_d(z_1,\dots,z_d)) \frac{\partial F_d}{\partial
z_i}(z_1,\dots,z_d)\frac{1}{\Phi(z_i)}\Big|^p \leq \alpha^p.
\end{equation}
Then, the following holds for any initial value $u_0\in U$ and any finite tree $T$ rooted at $\rho$.

Let $C$ be a subset of the leaves of $T$ and consider two arbitrary assignments $\sigma_1:C\rightarrow U$ and $\sigma_2:C\rightarrow U$. Then
\[|r_{\rho}(C,\sigma_1,u_0)-r_{\rho}(C,\sigma_2,u_0)| \leq \left(\frac{M}{L}\right)\bigg(\sum_{v\in C}\alpha^{q \cdot \depth(v)}\bigg)^{1/q},\]
where  $L:=\inf_{x\in U}\Phi(x)$, $M:=\max_{v\in C} \mathrm{dist}_{\Phi}(\sigma_1(v),\sigma_2(v))$ and $\depth(v)$ is the distance of $v$ from the root $\rho$.
\end{lemma}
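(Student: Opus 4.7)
The plan is to prove the lemma by induction on the depth of the tree, using Lemma~\ref{lem:onelevelCD} as the one-step contraction inequality and then converting from the $\Phi$-metric back to the usual Euclidean modulus at the very end.

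First I would state the stronger inductive claim, which keeps track of the $\Phi$-distance separately for each leaf in $C$. Write $r^{(j)}_v := r_v(C,\sigma_j,u_0)$ and, for a vertex $v\in T$, let $T_v$ denote the subtree of $T$ rooted at $v$ and $C_v := C\cap \mathrm{leaves}(T_v)$. I will show by induction on the depth of $T_v$ that
\[
\mathrm{dist}_{\Phi}\bigl(r^{(1)}_v,r^{(2)}_v\bigr) \;\leq\; \bigg(\sum_{w\in C_v}\alpha^{q(\depth(w)-\depth(v))}\,\mathrm{dist}_{\Phi}\bigl(\sigma_1(w),\sigma_2(w)\bigr)^q\bigg)^{1/q}.
\]
The base cases are immediate from the definition \eqref{eq:defrvC}: if $v$ is a leaf with $v\notin C$, then $r^{(1)}_v=r^{(2)}_v=u_0$ and the left side is $0$ (as is the right, since $C_v=\emptyset$); if $v\in C$, then the sum on the right is a single term equal to $\mathrm{dist}_{\Phi}(\sigma_1(v),\sigma_2(v))$, matching the left side with equality.

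For the inductive step, suppose $v$ has children $v_1,\dots,v_d$. By the recursive definition \eqref{eq:defrvC} applied to $F_d$ and by assumption~\eqref{eq:rv4th6y55323}, Lemma~\ref{lem:onelevelCD} gives
\[
\mathrm{dist}_{\Phi}\bigl(r^{(1)}_v,r^{(2)}_v\bigr) \;\leq\; \alpha\bigg(\sum_{i=1}^{d}\mathrm{dist}_{\Phi}\bigl(r^{(1)}_{v_i},r^{(2)}_{v_i}\bigr)^{q}\bigg)^{1/q}.
\]
Applying the inductive hypothesis to each child and then using that $\depth(v_i)=\depth(v)+1$ together with the disjoint union $C_v=\bigsqcup_{i} C_{v_i}$, the factor $\alpha^q$ combines with $\alpha^{q(\depth(w)-\depth(v_i))}$ to give exactly $\alpha^{q(\depth(w)-\depth(v))}$, yielding the claim for $v$.

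Specializing to $v=\rho$ (so $\depth(\rho)=0$ and $C_\rho=C$) and bounding each term $\mathrm{dist}_{\Phi}(\sigma_1(w),\sigma_2(w))\leq M$ produces
\[
\mathrm{dist}_{\Phi}\bigl(r^{(1)}_\rho,r^{(2)}_\rho\bigr) \;\leq\; M\bigg(\sum_{w\in C}\alpha^{q\cdot\depth(w)}\bigg)^{1/q}.
\]
It remains to convert the $\Phi$-distance into the ordinary modulus. Since $\Phi(x)\geq L$ everywhere on $U$, every path $\eta$ satisfies $\int_0^1|\partial_t\eta|\,dt\leq L^{-1}\int_0^1\Phi(\eta(t))|\partial_t\eta|\,dt$, so for any $x,y\in U$ we have $|x-y|\leq L^{-1}\mathrm{dist}_{\Phi}(x,y)$. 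Applying this to $r^{(1)}_\rho$ and $r^{(2)}_\rho$ gives the desired inequality.

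The argument is essentially a careful induction, so I do not anticipate a serious obstacle beyond verifying measurability/regularity requirements implicit in the use of Lemma~\ref{lem:onelevelCD} (namely that $r_v$ always lies in $U$, which is guaranteed since each $F_d$ maps $U^d\to U$). The one subtle point to be careful about is the book-keeping of the exponent $\alpha^{q\cdot\depth(w)}$ across the recursion; the choice of the $q$-power in the induction hypothesis is exactly what makes it line up cleanly with the single $\alpha$ supplied by the one-level Lemma~\ref{lem:onelevelCD}, which is why the stronger hypothesis involving $\mathrm{dist}_{\Phi}(\sigma_1(w),\sigma_2(w))^q$ (rather than a uniform bound by $M$) is needed for the induction to close.
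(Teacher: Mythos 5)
Your proposal is correct and follows essentially the same route as the paper's proof: an induction over subtrees rooted at each vertex, with $C_v$ the leaves of $C$ in that subtree, the one-level contraction Lemma~\ref{lem:onelevelCD} supplying the factor $\alpha$ at each step, and the final conversion $|x-y|\leq L^{-1}\mathrm{dist}_{\Phi}(x,y)$. The only (immaterial) difference is that you carry the individual leaf distances $\mathrm{dist}_{\Phi}(\sigma_1(w),\sigma_2(w))^q$ through the induction and bound by $M$ only at the end, whereas the paper bounds by $M^q$ inside the inductive claim; contrary to your closing remark, the induction closes equally well either way.
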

\begin{proof}
The proof is close to  \cite[Proof of Lemma 3]{connconst1}, the only difference is we have to use the metric induced by $\Phi$, which we denote ${\rm dist}_{\Phi}(\cdot,\cdot)$.  In particular, for an arbitrary vertex $v$ in $T$, we use $C_v$ to denote the subset of $C$ that belongs to the subtree of $T$ rooted at $v$. Then, we will show that
\begin{equation}\label{eq:4rcrv4r23e234frf}
\big(\mathrm{dist}_{\Phi}(x,y)\big)^q \leq M^q\sum_{v'\in C_v}\alpha^{q(\depth(v')-\depth(v))}, \mbox{ where } x=r_v(C,\sigma_1,u_0) \mbox{ and } y=r_v(C,\sigma_2,u_0).
\end{equation}
We show this by induction. When $v$ is a leaf of $T$ and $v\notin C$, we have that $x=y=u_0$ and \eqref{eq:4rcrv4r23e234frf} holds trivially. When $v\in C$, then $x=\sigma_1(v)$, $y=\sigma_2(v)$ and $C_v=\{v\}$, so \eqref{eq:4rcrv4r23e234frf} holds by the definition of $M$. For the inductive case, suppose that $v$ neither is a leaf of $T$ nor belongs to $C$ and that  \eqref{eq:4rcrv4r23e234frf} holds for the children $v_1,\hdots,v_d$ of $v$.  For $i\in [d]$, set $x_i=r_{v_i}(C,\sigma_1,u_0)$, $y_i=r_{v_i}(C,\sigma_2,u_0)$ and observe that 
\[x=F_d(x_1,\hdots,x_d), \quad y=F_d(y_1,\hdots,y_d).\] 
By the inductive hypothesis, we also have that
\[\big(\mathrm{dist}_{\Phi}\big(x_i,y_i\big))^q \leq M^q\sum_{v'\in C_{v_i}}\alpha^{q(\depth(v')-\depth(v_i))}.\] 
From Lemma~\ref{lem:onelevelCD} and the assumption \eqref{eq:rv4th6y55323}, we obtain that
\begin{equation*}
\begin{split}
\mathrm{dist}_{\Phi}(x,y)^q \leq
\alpha^q \sum_{i=1}^d \mathrm{dist}_{\Phi}(x_i,y_i)^q
\leq \alpha^q \sum_{i=1}^d M^q \sum_{v'\in C_{v_i}}\alpha^{q(\depth(v')-\depth(v_i))}\\
= M^q \sum_{i=1}^d \sum_{u\in C_{v_i}}\alpha^{q(\depth(v')-\depth(v_i)+1)} =
M^q \sum_{v'\in C_{v}}\alpha^{q(\depth(v')-\depth(v))},
\end{split}
\end{equation*}
proving \eqref{eq:4rcrv4r23e234frf}. Notice that for any $x,y\in U$ we have
\[\mathrm{dist}_{\Phi}(x,y) = \int_{0}^{1} \Phi(\eta(t)) \Big|\frac{\partial}{\partial t}\eta(t)\Big| \, dt
\geq \int_{0}^{1} L \Big|\frac{\partial}{\partial t}\eta(t)\Big| \, dt
\geq L|x-y|.\]
The lemma follows from this and \eqref{eq:4rcrv4r23e234frf} (applied to $v=\rho$).
\end{proof}

\subsection{Applying the method for matchings}\label{sec:matchingmethod}

Suppose that $\gamma\in\mathbb{C}\setminus\mathbb{R}_{\leq 0}$. We will parameterise $\gamma$ as
\begin{equation}\label{eq:defQ}
\gamma=(1/Q)^2, \mbox{ where we choose $Q$ such that $\mathrm{Re}(Q)>0$}.
\end{equation}
Note that, in the choice of $Q$, we used the assumption that $\gamma$ is not a negative real number.\footnote{That is, 
we choose real numbers~$a$ and~$b$ and set
$Q=a+ \im b$. $a$ and $b$ are chosen  so that $\gamma=(1/Q)^2$ which implies 
$\gamma=1/(a^2-b^2+2a \im b)$. We can ensure $a>0$ by flipping the sign of~$b$, but not if $b=0$.} Let ${\cal H}$ be the
right complex half-plane, that is, the set of complex $x$ such that $\mathrm{Re}(x)>0$, and note that $Q\in{\cal H}$. We will also transform the space in which the quantities $p_v(G,\gamma)$ live using the map $x\mapsto x/Q$. In the transformed space, the recurrence \eqref{eq:xrecurrence} becomes
\begin{equation}\label{eq:yrecurrence}
y = F(y_1,\hdots,y_d) \mbox{ where }F(y_1,\hdots,y_d)=\frac{1}{Q + \sum_{i=1}^d y_i},
\end{equation}
where if $y$ corresponds to a leaf then $y=1/Q$ (we refer to this $y$ as the initial $y$). Let 
\begin{equation}\label{eq:defU}
U=\big\{y\in \mathbb{C}\mid \mathrm{Re}(y)>0, |y|< 1/\mathrm{Re}(Q)\big\}.
\end{equation}
The following lemma shows that the set $U$ is closed under application of the recurrence \eqref{eq:yrecurrence}.
\begin{lemma}\label{lem:loca}
Suppose that $y_1,\dots,y_d\in U$ and $\mathrm{Re}(Q)>0$.
Then, for $y$ given by~\eqref{eq:yrecurrence}, we have that $y\in U$ as well. In fact, we have that $\mathrm{Re}(y)\geq \frac{\mathrm{Re}(Q)}{\big(|Q|+\frac{d}{\mathrm{Re}(Q)}\big)^2}$.
\end{lemma}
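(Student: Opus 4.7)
The plan is to prove all three claims by a direct calculation, writing $S := Q + \sum_{i=1}^d y_i$ so that $y = 1/S$, and then working with $\mathrm{Re}(y) = \mathrm{Re}(S)/|S|^2$ and $|y| = 1/|S|$.

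First I would establish the lower bound on $\mathrm{Re}(S)$. Since each $y_i \in U$ has $\mathrm{Re}(y_i) > 0$ and $\mathrm{Re}(Q) > 0$, summing gives $\mathrm{Re}(S) = \mathrm{Re}(Q) + \sum_{i=1}^d \mathrm{Re}(y_i) \geq \mathrm{Re}(Q) > 0$. In particular $S \neq 0$ so $y = 1/S$ is well-defined, and $\mathrm{Re}(y) = \mathrm{Re}(S)/|S|^2 > 0$, verifying the first defining condition of $U$.

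Next I would handle the modulus bound, which needs a little care. Using $|S| \geq \mathrm{Re}(S)$ together with the strict inequality $\sum_i \mathrm{Re}(y_i) > 0$ (valid because $d\geq 1$ and each $y_i$ has strictly positive real part), we get $|S| \geq \mathrm{Re}(S) > \mathrm{Re}(Q)$, hence $|y| = 1/|S| < 1/\mathrm{Re}(Q)$, which is the second defining condition of $U$.

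Finally, for the quantitative lower bound on $\mathrm{Re}(y)$, I would bound $|S|$ from above by the triangle inequality: $|S| \leq |Q| + \sum_{i=1}^d |y_i| < |Q| + d/\mathrm{Re}(Q)$, using the definition of $U$ which forces $|y_i| < 1/\mathrm{Re}(Q)$. Combining with $\mathrm{Re}(S) \geq \mathrm{Re}(Q)$ then yields
\[
\mathrm{Re}(y) \;=\; \frac{\mathrm{Re}(S)}{|S|^2} \;\geq\; \frac{\mathrm{Re}(Q)}{\bigl(|Q|+d/\mathrm{Re}(Q)\bigr)^2},
\]
which is the desired bound. There is no real obstacle here; the only delicate point is remembering to use the strict inequality $\mathrm{Re}(y_i) > 0$ (not just $\geq 0$) when deriving $|y| < 1/\mathrm{Re}(Q)$ with strict inequality, so that the image indeed lies in the open set $U$ and not merely its closure.
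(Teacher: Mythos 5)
Your proposal is correct and follows essentially the same argument as the paper: positivity of $\mathrm{Re}(Q+\sum_i y_i)$ gives $\mathrm{Re}(y)>0$, the bound $|Q+\sum_i y_i|>\mathrm{Re}(Q)$ gives $|y|<1/\mathrm{Re}(Q)$, and the triangle inequality $|Q+\sum_i y_i|\leq |Q|+d/\mathrm{Re}(Q)$ combined with $\mathrm{Re}(y)=\mathrm{Re}(Q+\sum_i y_i)/|Q+\sum_i y_i|^2$ gives the quantitative bound. No gaps.
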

\begin{proof}
Since $y_1,\hdots,y_d\in U$, we have that $\mathrm{Re}(y_1),\hdots,\mathrm{Re}(y_d)>0$. Using that $\mathrm{Re}(Q)>0$, we have that $\mathrm{Re}(Q + \sum_{i=1}^d y_i)=\mathrm{Re}(Q)+\sum_{i=1}^d \mathrm{Re}(y_i)>0$  and therefore $\mathrm{Re}(1/y)>0$. This yields that $\mathrm{Re}(y)>0$. Moreover, using again that $\mathrm{Re}(y_1),\hdots,\mathrm{Re}(y_d)>0$ and $\mathrm{Re}(Q)>0$, we have 
\begin{equation}
\label{eq:yboundedprf}
\Big|Q+\sum_{i=1}^d y_i\Big| >  \mathrm{Re}(Q),
\end{equation}
and hence $|y|< 1/\mathrm{Re}(Q)$. It follows that $y\in U$.

To prove the stronger bound on $\mathrm{Re}(y)$, note by the triangle inequality that  
\[\Big|Q+\sum_{i=1}^d y_i\Big|\leq |Q|+ \sum_{i=1}^d |y_i|\leq |Q|+\frac{d}{\mathrm{Re}(Q)},\]
and therefore
\[\mathrm{Re}(y) = \mathrm{Re}\Big(\frac{1}{Q+\sum_{i=1}^d y_i}\Big) =
\frac{\mathrm{Re}\left(Q+\sum_{i=1}^d y_i\right)}{|Q+\sum_{i=1}^d y_i|^2} \geq \frac{\mathrm{Re}(Q)}{\big(|Q|+\frac{d}{\mathrm{Re}(Q)}\big)^2}.\qedhere\]
\end{proof}

We next go on to show the required contraction properties for an appropriate function $\Phi$. This will largely be based on the following lemma from \cite{connconst2}.
\begin{lemma}\label{lem:Sinimport}
Let $\Delta$ and $\hat{\gamma}$ be positive real numbers. For $x\in (0,1]$, let $\hat{\Phi}(x)=\frac{1}{x(2-x)}$. Let $D=\max\{\Delta,\frac{3}{4\hat{\gamma}}\}$, $p=1/(1-\frac{1}{\sqrt{1+4\hat{\gamma} D}})$ and $q=p/(p-1)$. Then, for arbitrary $x_1,\hdots,x_d\in (0,1]$ it holds that
\[\bigg[\hat{\Phi}\bigg(\frac{1}{1+\hat{\gamma}\sum^d_{i=1} x_i}\bigg)\bigg]^p\sum^{d}_{i=1}\bigg[\frac{1}{\hat{\Phi}(x_i)}\frac{\hat{\gamma}}{(1+\hat{\gamma} \sum^d_{j=1} x_i)^2}\bigg]^p\leq \hat{\alpha}^p,\]
where $\hat{\alpha}=\frac{1}{D^{1/q}}\Big(1-\frac{2}{1+\sqrt{1+4\hat{\gamma} D}}\Big)$. 
\end{lemma}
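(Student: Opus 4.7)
The plan is to first simplify the inequality by substitution, then bound the resulting sum using Hölder, and finally carry out a one-variable optimization in $S := \sum_{j=1}^d x_j$. Writing $u_i = x_i(2-x_i)$ and $y = 1/(1+\hat{\gamma} S)$, a direct computation gives $y(2-y) = (1+2\hat{\gamma} S)/(1+\hat{\gamma} S)^2$, so $\hat{\Phi}(y) = (1+\hat{\gamma} S)^2/(1+2\hat{\gamma} S)$. Combining this with $1/\hat{\Phi}(x_i) = u_i$ and the derivative factor $\hat{\gamma}/(1+\hat{\gamma} S)^2$, the powers of $(1+\hat{\gamma} S)$ cancel and the entire left-hand side of the target inequality collapses to
\[\frac{\hat{\gamma}^p}{(1+2\hat{\gamma} S)^p}\sum_{i=1}^d u_i^p.\]
So the task becomes to show that this quantity is at most $\hat{\alpha}^p$ uniformly in the $x_i \in (0,1]$.

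Next I would bound $\sum_i u_i^p$ via Hölder's inequality, exploiting the two elementary estimates that follow from $u_i \leq 1$ and $u_i \leq 2x_i$: namely $\sum_i u_i^p \leq d$ and $\sum_i u_i^p \leq 2S$. An appropriate interpolation between these, with exponents determined by $p = r/(r-1)$ and $q = r$ for $r := \sqrt{1+4\hat{\gamma} D}$, yields an estimate in terms of $S$ and $D$ (using $d\leq D$, which holds in the intended applications to SAW recurrences at a vertex of degree at most $\Delta\leq D$). Substituting this estimate into the displayed expression reduces the problem to maximizing a scalar function of $S$ over $S \geq 0$; calculus then gives a critical point expressible in closed form in terms of $r$ and $\hat{\gamma}$.

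The remaining step is purely algebraic: one verifies that the maximum computed above equals $\hat{\alpha}^p = \bigl(D^{-1/q}(r-1)/(r+1)\bigr)^p$. The specific choices in the lemma statement are calibrated for exactly this purpose. The exponent $p = 1/(1-1/r)$ is tuned so that the Hölder inequality on $\sum u_i^p$ is tight, while the threshold $D \geq 3/(4\hat{\gamma})$ (equivalently $r \geq 2$) ensures that the optimizing $S^\star$ lies in the admissible range and that the algebra collapses cleanly into the advertised $(r-1)/(r+1)$, modulated by the $D^{-1/q}$ factor inherited from the $d$-side of the Hölder interpolation. The main obstacle is thus not any single hard analytic step but the simultaneous calibration of $\hat{\Phi}$, the Hölder exponents, and the threshold $D$: these choices must dovetail so that the one-dimensional optimum lands exactly at $\hat{\alpha}^p$. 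In view of this delicate tuning, the most economical route is to simply invoke the lemma directly from \cite{connconst2}, where the calculation is carried out in full.
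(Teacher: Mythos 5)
Your opening reduction is correct and clean: with $u_i=x_i(2-x_i)$ and $S=\sum_j x_j$ one indeed has $\hat\Phi\big(\tfrac{1}{1+\hat\gamma S}\big)=\tfrac{(1+\hat\gamma S)^2}{1+2\hat\gamma S}$, so the left-hand side collapses to $\tfrac{\hat\gamma^p}{(1+2\hat\gamma S)^p}\sum_i u_i^p$. But from there the argument has two genuine gaps. First, you invoke $d\le D$, which is not a hypothesis of the lemma and is exactly what must \emph{not} be assumed: the lemma is applied (via Lemma~\ref{lem:gggConnective} and Theorem~\ref{thm:connconst}) to self-avoiding-walk trees of graphs with bounded connective constant but \emph{unbounded} maximum degree, so $d$ can be arbitrarily large compared with $D=\max\{\Delta,3/(4\hat\gamma)\}$. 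The paper handles arbitrary $d$ by first symmetrizing (Lemma 7 of \cite{connconst2}: the constrained maximizer has $\hat d\le d$ equal nonzero coordinates) and then using that the resulting univariate quantity, viewed as a function of $\hat d$, is maximized at $\hat d=D$ (Lemma 9 of \cite{connconst2}); this is what makes the bound uniform in $d$.

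Second, even granting $d\le D$, the interpolation you propose between the crude bounds $\sum_i u_i^p\le d$ and $\sum_i u_i^p\le 2S$ cannot deliver the stated constant. Writing $r=\sqrt{1+4\hat\gamma D}$ (so $p=r/(r-1)$, $q=r$, $\hat\gamma D=(r^2-1)/4$) and bounding $\sum_i u_i^p\le d^{1-\theta}(2S)^{\theta}$, then maximizing over $S$, one finds that the $D$-dependence always comes out as $D^{-1/q}$ but the accompanying constant is $\big(\tfrac{r^2-1}{4}\big)^{1-\beta}\beta^{\beta}(1-\beta)^{1-\beta}$ with $\beta=\theta/p\in[0,1/p]$; at $r=2$ its minimum over $\beta$ is about $0.43$, strictly larger than the required $\tfrac{r-1}{r+1}=\tfrac13$. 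So the ``purely algebraic'' final step you defer to cannot come out equal to $\hat\alpha^p$: the extremal configuration ($\hat d\approx D$ equal coordinates) is not captured by $\min\{d,2S\}$-type estimates, and the constrained optimization of \cite{connconst2} is really needed. Finally, ``simply invoke the lemma from \cite{connconst2}'' is not quite available either, since the statement does not appear there verbatim; the paper's proof assembles it from their Lemmas 7 and 9 and then verifies by a short computation (using $1/p+1/q=1$) that the resulting bound equals $\hat\alpha^p$ — that assembly and the uniformity in $d$ are precisely the content your sketch leaves unproved.
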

\begin{proof}
The lemma follows from the derivations in~\cite{connconst2} as follows. Let $f_d(x_1,\dots,x_d)=\frac{1}{1+\hat\gamma\sum_{i=1}^d x_i}$ as defined in equation (2) of \cite{connconst2}. Then, Lemma 7 (see also Definition 10) of \cite{connconst2} shows that the expression
\begin{equation}
\label{eq:SSSYsym}
\hat\Phi(f_{d}(x_1,\dots,x_{d}))^p\sum_{i=1}^{d} \left(\frac{1}{\hat\Phi(x_i)}\left|\frac{\partial f_{d}}{\partial x_i}\right|\right)^p =
\hat\Phi\left(\frac{1}{1+\hat\gamma\sum_{i=1}^{d}x_i}\right)^p\sum_{i=1}^{d} \left[\frac{1}{\hat\Phi(x_i)}\frac{\hat\gamma}{(1+\hat\gamma\sum_{j=1}^{d} x_j)^2}\right]^p,
\end{equation}
constrained to $f_{d}(x_1,\dots,x_{d})=B$ for any fixed $B>0$, is maximized for $x_1=x_2\dots=x_{\hat{d}}=:x$ and $x_j=0$ for $j>d$, for some $\hat{d}\leq d$.

Then, \eqref{eq:SSSYsym} can be bounded from above by
\begin{equation}
\label{eq:SSSYnu}
\begin{split}
\hat\Phi\left(\frac{1}{1+\hat{d}\hat\gamma x}\right)^p \hat{d} \left[\frac{1}{\hat\Phi(x)}\frac{\hat\gamma}{(1+\hat\gamma \hat{d}x)^2}\right]^p =
\hat\Phi\left(\frac{1}{1+\hat{d}\hat\gamma x}\right)^p \hat{d} \left[\frac{1}{\hat{d}}\frac{1}{\hat\Phi(x)}\frac{\hat{d}\hat\gamma}{(1+\hat\gamma \hat{d}x)^2}\right]^p\\
 =
\hat\Phi\left(f_{\hat{d}}(x)\right)^p \hat{d} \left[\frac{1}{\hat{d}}\frac{1}{\hat\Phi(x)}|f_{\hat{d}}'(x)|\right]^p,
\end{split}
\end{equation}
where the univariate $f_{\hat{d}}(x):=\frac{1}{1+\hat{d}\hat\gamma x}$. From Lemma 9 (see also Definition 11) of \cite{connconst2} we get that for all $x$,
\begin{equation*}
\frac{1}{\hat{d}} \left[\hat\Phi(f_{\hat{d}}(x))\frac{|f_{\hat{d}}'(x)|}{\hat\Phi(x)}\right]^q \leq \frac{1}{D}\left(1-\frac{2}{1+\sqrt{1+4\hat\gamma D}}\right)^q,
\end{equation*}
where the left-hand side is maximized for $\hat{d}=D$. So,
\begin{equation*}
\left[\hat\Phi(f_{\hat{d}}(x))\frac{|f_{\hat{d}}'(x)|}{\hat\Phi(x)}\right]^q \leq
\frac{\hat{d}}{D}\left(1-\frac{2}{1+\sqrt{1+4\hat\gamma D}}\right)^q,
\end{equation*}
and, therefore,
\begin{equation*}
\hat\Phi(f_{\hat{d}}(x))\frac{|f_{\hat{d}}'(x)|}{\hat\Phi(x)} \leq \left(\frac{\hat{d}}{D}\right)^{1/q}
\left(1-\frac{2}{1+\sqrt{1+4\hat\gamma D}}\right).
\end{equation*}
Plugging this bound and the bound obtained in \eqref{eq:SSSYnu} into the expression from the lemma, we get
\begin{equation}
\label{eq:SSSYendpf}
\begin{split}
\bigg[\hat{\Phi}\bigg(\frac{1}{1+\hat{\gamma}\sum^d_{i=1} x_i}\bigg)\bigg]^p\sum^{d}_{i=1}\bigg[\frac{1}{\hat{\Phi}(x_i)}\frac{\hat{\gamma}}{(1+\hat{\gamma} \sum^d_{j=1} x_i)^2}\bigg]^p
\leq \hat\Phi(f_{\hat{d}}(x))^p \hat{d}\left[\frac{1}{\hat{d}}\frac{1}{\hat\Phi(x)}|f_{\hat{d}}'(x)|\right]^p\\
= {\hat{d}}^{1-p}\left[\hat\Phi(f_{\hat{d}}(x))\frac{1}{\hat\Phi(x)}|f_{\hat{d}}'(x)|\right]^p\leq {\hat{d}}^{1-p}\left(\frac{\hat{d}}{D}\right)^{p/q}\left(1-\frac{2}{1+\sqrt{1+4\hat\gamma D}}\right)^p.
\end{split}
\end{equation}
It remains to prove that the right-hand side is equal to $\hat\alpha^p$, which will finish the proof. To see this, we use $1/p+1/q=1$:
\begin{equation*}
\begin{split}
\hat\alpha^p = \left(\frac{1}{D^{1/q}}\left(1-\frac{2}{1+\sqrt{1+4\hat\gamma D}}\right)\right)^p
= \frac{1}{D^{p/q}}\left({\hat{d}}^{(1/p+1/q-1)}\left(1-\frac{2}{1+\sqrt{1+4\hat\gamma D}}\right)\right)^p\\
= \frac{{\hat{d}}^{(1-p+p/q)}}{D^{p/q}}\left(1-\frac{2}{1+\sqrt{1+4\hat\gamma D}}\right)^p,
\end{split}
\end{equation*}
which is equivalent to the right-hand side of \eqref{eq:SSSYendpf}.
\end{proof}

Using Lemma~\ref{lem:Sinimport}, we can obtain the following in the complex plane.
\begin{lemma}\label{lem:gggConnective}
Let $\Delta$ be a positive real number, $\gamma\in \Complex\setminus  \Reals_{< 0}$, and $Q,U$ be given from \eqref{eq:defQ} and \eqref{eq:defU}, respectively. Consider the function $\Phi:U\mapsto \mathbb{R}_{>0}$ given by $\Phi(y)=\frac{1}{\mathrm{Re}(y)(2/\mathrm{Re}(Q)-\mathrm{Re}(y))}$ and let
\begin{equation}\label{eq:12rfvbt4da}
\hat{\gamma}=\frac{2|\gamma|}{1+\cos(\emph{\arg} \gamma)},\ \, D=\max\{\Delta,\frac{3}{4\hat{\gamma}}\},\ \,  p=1/(1-\frac{1}{\sqrt{1+4\hat{\gamma} D}}),\ \, q=\frac{p}{p-1}, \ \, \alpha=\frac{1}{D^{1/q}}\Big(1-\frac{2}{1+\sqrt{1+4\hat{\gamma} D}}\Big).
\end{equation}
Then, the following holds for all integer $d\geq 1$.

Consider the map $F: U^d\mapsto U$ given by $F(y_1,\hdots,y_d)=\frac{1}{Q+\sum^d_{i=1}y_i}$. Then, for arbitrary $y_1,\dots,y_d\in U$ we have
\begin{equation}\label{eq:zz2}
\sum_{i=1}^d \Big|\Phi(F(y_1,\dots,y_d)) \frac{\partial F}{\partial y_i}(y_1,\dots,y_d)\frac{1}{\Phi(y_i)}\Big|^p \leq \alpha^p.
\end{equation}
\end{lemma}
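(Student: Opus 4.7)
The plan is to reduce the complex inequality~\eqref{eq:zz2} to the real inequality of Lemma~\ref{lem:Sinimport}. The observation that makes the reduction possible is that the parameterisation $\gamma=(1/Q)^2$ with $\mathrm{Re}(Q)>0$ forces
\[
\hat\gamma \;=\; \frac{2|\gamma|}{1+\cos(\arg\gamma)} \;=\; \frac{1}{\mathrm{Re}(Q)^2},
\]
because $|Q|^2 = \mathrm{Re}(Q)^2/\cos^2(\arg Q)$ and $1+\cos(2\arg Q)=2\cos^2(\arg Q)$. Together with the trivial bound $|S|^2 \geq \mathrm{Re}(S)^2$, where $S := Q + \sum_{i=1}^d y_i$, this identity is the only use of complex-analytic information in the argument.

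For the algebraic setup I would first compute $\partial F/\partial y_i = -F^2$, so that the left-hand side of~\eqref{eq:zz2} equals $\Phi(F)^p\,|F|^{2p}\sum_{i=1}^d \Phi(y_i)^{-p}$. Introducing the rescaled real parts $a_i := \mathrm{Re}(Q)\mathrm{Re}(y_i)$ and $b := \mathrm{Re}(Q)\mathrm{Re}(F)$, which all lie in $(0,1)$ by the definition of $U$ in~\eqref{eq:defU} and by Lemma~\ref{lem:loca}, one has $\Phi(y_i)^{-1} = a_i(2-a_i)/\mathrm{Re}(Q)^2$ and $\Phi(F)^{-1} = b(2-b)/\mathrm{Re}(Q)^2$. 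The $\mathrm{Re}(Q)$-factors cancel, and the left-hand side of~\eqref{eq:zz2} simplifies to
\[
\frac{|F|^{2p}\sum_{i=1}^d(a_i(2-a_i))^p}{(b(2-b))^p}.
\]

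To connect this with the real inequality I would set $s := 1/(1+\hat\gamma \sum_i a_i)$ and $r := \mathrm{Re}(S)^2/|S|^2\in(0,1]$. Since $\mathrm{Re}(S) = \mathrm{Re}(Q) + \sum_i \mathrm{Re}(y_i) = \mathrm{Re}(Q)/s$ and $\hat\gamma=1/\mathrm{Re}(Q)^2$, a direct computation yields $b = rs$ and $|F|^{2p} = r^p \hat\gamma^p s^{2p}$. After substitution the factor $r^p$ cancels cleanly and the expression becomes
\[
\frac{\hat\gamma^p s^p \sum_{i=1}^d (a_i(2-a_i))^p}{(2-rs)^p} \;\leq\; \frac{\hat\gamma^p s^p \sum_{i=1}^d (a_i(2-a_i))^p}{(2-s)^p},
\]
where the inequality is the single place in which $r\leq 1$ (equivalently $|S|^2 \geq \mathrm{Re}(S)^2$) is used. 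The right-hand side is precisely the rearrangement of the bound of Lemma~\ref{lem:Sinimport} applied to the real tuple $(a_1,\ldots,a_d)\in(0,1)^d$ with the same $\hat\gamma$, $D$, $p$, $q$, and $\alpha$, so it is at most $\alpha^p$; this establishes~\eqref{eq:zz2}.

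The verification is essentially mechanical: the main work is in guessing the density $\Phi$ and the parameterisation $\gamma=(1/Q)^2$ so that all $\mathrm{Re}(Q)$-factors cancel and the only remaining slack is the monotone inequality $(2-rs)^{-p}\leq(2-s)^{-p}$. Once these choices are fixed, no complex-analytic estimate beyond $|S|^2 \geq \mathrm{Re}(S)^2$ is required, and the complex lemma drops out of its real counterpart.
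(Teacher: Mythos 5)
Your proposal is correct and follows essentially the same route as the paper: both reduce \eqref{eq:zz2} to Lemma~\ref{lem:Sinimport} via the substitution $x_i=\mathrm{Re}(Q)\,\mathrm{Re}(y_i)$ and the identity $\hat{\gamma}=1/\mathrm{Re}(Q)^2$, with the only complex-analytic ingredient being $|S|^2\geq \mathrm{Re}(S)^2$ for $S=Q+\sum_i y_i$ (the paper phrases this as $\mathrm{Re}(1/S)\leq 1/\mathrm{Re}(S)$). Your explicit variables $r,s,b$ are just a reparameterisation of the paper's chain \eqref{eq:hhhConnective}--\eqref{eq:hhh2Connective}, and the verification is sound.
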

\begin{proof}
Note that 
$\partial F(y_1,\ldots,y_d)/\partial y_i = -F(y_1,\ldots,y_d)^2$ and
$\Phi(y)=\frac{1}{|y|^2\mathrm{Re}(1/y)(\frac{2}{\mathrm{Re}(Q)}-\mathrm{Re}(y))}$, so we obtain that~\eqref{eq:zz2} is equivalent to
\begin{equation}\label{eq:hhhConnective}
\frac{1}{\bigg(\mathrm{Re}\left(Q+\sum_{i=1}^d y_i\right)\left(\frac{2}{\mathrm{Re}(Q)} - \mathrm{Re}\left(\frac{1}{Q + \sum_{i=1}^d y_i}\right) \right)\bigg)^p} \sum_{i=1}^d \Big[\mathrm{Re}(y_i)\Big( \frac{2}{\mathrm{Re}(Q)} - \mathrm{Re}(y_i) \Big)\Big]^p\leq \alpha^p.
\end{equation}
We have $\mathrm{Re}(1/z)\leq 1/\mathrm{Re}(z)$ and therefore 
\[\frac{2}{\mathrm{Re}(Q)} - \mathrm{Re}\bigg(\frac{1}{Q + \sum_{i=1}^d y_i}\bigg)\geq \frac{2}{\mathrm{Re}(Q)} - \frac{1}{\mathrm{Re}\left(Q+\sum_{i=1}^d y_i\right)}>0,\]
where the last inequality follows from the fact that $y_1,\dots,y_d\in {\cal H}$. Therefore, \eqref{eq:hhhConnective} will follow from
\begin{equation}\label{eq:hhh2Connective}
\frac{1}{\bigg(\mathrm{Re}\Big(Q+\sum_{i=1}^d y_i\Big)\Big(\frac{2}{\mathrm{Re}(Q)} - \frac{1}{\mathrm{Re}\left(Q+\sum_{i=1}^d y_i\right)} \Big)\bigg)^p} \sum_{i=1}^d \Big[\mathrm{Re}(y_i)\Big( \frac{2}{\mathrm{Re}(Q)} - \mathrm{Re}(y_i) \Big)\Big]^p\leq \alpha^p.
\end{equation}
For $\hat{\gamma}=\frac{2|\gamma|}{1+\cos(\arg \gamma)}=\frac{|\gamma|}{\cos^2(\frac{1}{2}\arg \gamma)}$, we have that $\hat{\gamma}= 1/(\mathrm{Re}(Q))^2$.  
We will show below that   \eqref{eq:hhh2Connective} is an immediate consequence of Lemma~\ref{lem:Sinimport} applied to $x_i=\mathrm{Re}(Q) \mathrm{Re}(y_i)$ and $\hat{\gamma}= 1/(\mathrm{Re}(Q))^2$.
First 
note that $x_1,\hdots,x_d\in (0,1]$ since $y_1,\hdots,y_d\in U$, so  Lemma~\ref{lem:Sinimport} indeed applies, showing  
\[
\sum^{d}_{i=1}
\bigg[\hat{\Phi}\bigg(\frac{1}{1+\hat{\gamma}\sum^d_{i=1} x_i}\bigg) \bigg(\frac{\hat{\gamma}}{\hat{\Phi}(x_i)}\bigg)
\bigg(
\frac{1}{(1+\hat{\gamma} \sum^d_{i=1} x_i)^2}
\bigg)
\bigg]^p\leq {\alpha}^p,\]
where $\hat{\Phi}(x)=\frac{1}{x(2-x)}$
so $\hat{\Phi}(1/x) = x^2/(2x-1)$. 
Using~$Y$ to denote
$\sum_{i=1}^d \mathrm{Re}(y_i)$ 
so that $\hat{\gamma}\sum_{i=1}^d x_i = Y/\mathrm{Re}(Q)$
and substituting in 
the values of $x_i$ and  $\hat{\gamma}$ this is 
 \[
\sum^{d}_{i=1}
\bigg[
\bigg(
\frac{{(1+ Y/\mathrm{Re}(Q))}^2}{2{(1+ Y/\mathrm{Re}(Q))}-1}
\bigg)
\bigg(
\frac{\mathrm{Re}(y_i)(2- \mathrm{Re}(Q)\mathrm{Re}(y_i))}{\mathrm{Re}(Q)}\bigg)
\bigg(
\frac{1}{(1+Y/\mathrm{Re}(Q))^2}
\bigg)
\bigg]^p\leq {\alpha}^p.\]
Cancelling terms and moving the sum inside, this is
  \[
  \bigg[
\frac{ 1}{2{(1+ Y/\mathrm{Re}(Q))}-1}
\bigg]^p
\sum^{d}_{i=1}
\bigg[
\bigg(
\frac{\mathrm{Re}(y_i)(2- \mathrm{Re}(Q)\mathrm{Re}(y_i))}{\mathrm{Re}(Q)}\bigg)
\bigg]^p\leq {\alpha}^p.\]
 To see that this is equivalent to
 \eqref{eq:hhh2Connective} 
 we need only show that the part outside of the sum is the same, e.g.,
 $$2{(1+ Y/\mathrm{Re}(Q))}-1 = 
 \mathrm{Re}\Big(Q+Y\Big)\Big(\frac{2}{\mathrm{Re}(Q)} - \frac{1}{\mathrm{Re}\left(Q+Y\right)} \Big).$$
 This follows easily since $\mathrm{Re}(Q+Y) = \mathrm{Re}(Q)+Y$.

\end{proof}

\subsection{Concluding the proof of Theorem~\ref{thm:connconst}}
\begin{thmconnconst}
\statethmconnconst
\end{thmconnconst}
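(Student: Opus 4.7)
The plan is to combine the telescoping identity \eqref{eq:4bt45tb4tr899} with a depth-$\ell$ truncation of the self-avoiding walk tree recurrence \eqref{eq:yrecurrence}, and to control the truncation error with the machinery developed in Lemma~\ref{lem:vt41235evtvv} and Lemma~\ref{lem:gggConnective}. Concretely, I enumerate the vertices $v_1,\ldots,v_n$ of $G$ arbitrarily and set $G_j = G\setminus\{v_1,\ldots,v_{j-1}\}$. By Theorem~\ref{thm:selfavoid} and the identity \eqref{eq:4bt45tb4tr899}, it suffices to approximate $p_{v_j}(T_{SAW}(v_j,G_j),\gamma)$ for each $j$. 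After the change of variables $y = x/Q$ from \eqref{eq:defQ}, I compute the recurrence \eqref{eq:yrecurrence} on $T_{SAW}(v_j,G_j)$ truncated at depth $\ell$, using the initial value $u_0 = 1/Q$ at the truncated leaves (as well as at genuine leaves). The tree size after truncation is at most $\sum_{k=0}^{\ell}N_{G_j}(v_j,k)$, so the truncated recurrence is evaluated in time proportional to that quantity.

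For the error analysis, I apply Lemma~\ref{lem:vt41235evtvv} with the conformal density $\Phi$, the exponents $p,q$, and the contraction rate $\alpha$ furnished by Lemma~\ref{lem:gggConnective}; the hypothesis \eqref{eq:rv4th6y55323} is exactly the inequality \eqref{eq:zz2} established there. The lemma bounds the error at the root of the truncated tree by $(M/L)\bigl(\sum_{v\in C}\alpha^{q\cdot\depth(v)}\bigr)^{1/q}$, where $C$ consists of truncated leaves together with genuine leaves. Since $L = \inf_{y\in U}\Phi(y)$ and $M$ is bounded in terms of the geodesic diameter of the relevant sub-region of $U$ (which is bounded for the values of $y$ actually reached; a uniform bound follows from the stronger estimate $\mathrm{Re}(y) \geq \mathrm{Re}(Q)/(|Q|+d/\mathrm{Re}(Q))^2$ in Lemma~\ref{lem:loca}, noting that only degrees $d\leq\Delta$ effectively arise after a single level), the factor $M/L$ is a $\gamma$-dependent constant.

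The combinatorial step is to control the sum $\sum_{v\in C}\alpha^{q\cdot\depth(v)}$ using the connective constant. The number of vertices of $T_{SAW}(v_j,G_j)$ at depth exactly $k$ is at most $N_G(v_j,k)$, the number of $k$-edge self-avoiding walks from $v_j$ in $G$. From Lemma~\ref{lem:gggConnective} we compute (using $q = \sqrt{1+4\hat{\gamma}D}$ and $D\geq\Delta$) that
\[\alpha^q\,\Delta \;\leq\; \Bigl(1-\tfrac{2}{1+\sqrt{1+4\hat{\gamma}D}}\Bigr)^{q} \;\leq\; \exp\bigl(-\tfrac{2q}{1+q}\bigr) < 1,\]
so $(\alpha^q\Delta)^\ell$ decays geometrically in $\ell$ at a rate bounded away from $1$. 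Combining this geometric decay with the connective-constant inequality $\sum_{k=1}^{\ell}N_G(v_j,k)\leq c\Delta^\ell$ for $\ell\geq a\log n$, and with the trivial bound $\sum_{k=1}^{\ell}N_G(v_j,k)\leq n^{O(a)}$ for smaller $\ell$ (as used in \cite{connconst2}), a choice of $\ell = O\bigl((1+a+\sqrt{|\hat{\gamma}|\Delta})\log(\hat{c}n/\epsilon)\bigr)$ suffices to make the per-factor error at most $\epsilon/(\text{poly}(n))$ in the $y$-variable.

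Finally, I translate the $y$-error back to a multiplicative error on $p_{v_j}(G_j,\gamma)$, pulled back from a bound on $|\tilde{y}_j - y_j|$ together with the uniform lower bound on $|y_j|$ coming from Lemma~\ref{lem:loca}; the resulting multiplicative errors aggregate into a factor of the form $\mathrm{e}^{z}$ with $|z|\leq\epsilon$ across the $n$ telescoping factors. The running time is dominated by the truncated-tree size $\Delta^\ell = (\hat{c}n/\epsilon)^{O((1+a+\sqrt{|\hat{\gamma}|\Delta})\log\Delta)}$ per vertex, matching the stated bound. The main obstacle is the third step: ensuring that the connective-constant bound $\sum_{k=1}^\ell N_G(v,k)\leq c\Delta^\ell$, which only holds for $\ell\geq a\log n$, combines cleanly with the geometric decay from the complex contraction estimate to yield a truncation depth whose dependence on the parameters $(\Delta,a,c,|\hat{\gamma}|,n,\epsilon)$ gives precisely the running time in the statement; this is where the extra additive $a$ appears in the exponent of $\log\Delta$.
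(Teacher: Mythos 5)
Your overall route is the same as the paper's: telescoping via \eqref{eq:4bt45tb4tr899}, Godsil's theorem, the change of variables $y=x/Q$, truncation of the SAW tree at depth $\ell\geq a\log n$, and the error analysis through Lemma~\ref{lem:vt41235evtvv} with the density $\Phi$ and contraction rate $\alpha$ from Lemma~\ref{lem:gggConnective}; your computation that $\Delta\alpha^q\leq\exp(-2q/(1+q))<1$ with $q=\sqrt{1+4\hat\gamma D}$ is correct and gives the stated truncation depth and running time.

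There is, however, one step whose justification is wrong as written. You claim that $M/L$ is a $\gamma$-dependent constant because ``only degrees $d\leq\Delta$ effectively arise after a single level,'' invoking Lemma~\ref{lem:loca} with $d\leq\Delta$. In the bounded connective-constant setting the maximum degree is \emph{not} bounded: Definition~\ref{def:cc} only controls cumulative path counts for $\ell\geq a\log|V|$, so a vertex (at any level of $T_{SAW}$) may have degree up to $n-1$ — think of a star, which has small connective constant but a degree-$(n-1)$ vertex. Consequently the lower bound from Lemma~\ref{lem:loca} that is actually available for the true leaf values $\hat p_w(T_w,\gamma)$ is $\mathrm{Re}(\sigma_2(w))\geq \mathrm{Re}(Q)/\big(|Q|+\tfrac{n}{\mathrm{Re}(Q)}\big)^2$, and the paper correspondingly takes $M=\tfrac{2}{\mathrm{Re}(Q)}\big(|Q|+\tfrac{n}{\mathrm{Re}(Q)}\big)^2$ in \eqref{eq:u0LM}, i.e.\ $M/L$ is polynomial in $n$, not constant; the same $n$-dependence appears in the lower bound on $|y_j|$ used to convert the additive $y$-error into a relative error on $p_{v_j}$. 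This does not sink the argument — since $M/L$ and the conversion factor enter only inside the logarithm determining $\ell$, the extra $O(\log n)$ is absorbed into the stated running time — but your proof needs this repair, exactly as carried out in the paper. Separately, note that for non-negative real $\gamma$ the initial value $u_0=1/Q$ lies on the boundary of the open region $U$ of \eqref{eq:defU}, so the strict-inequality machinery does not literally apply; the paper dispatches this case at the outset by citing \cite[Theorem 1.3]{connconst2}, and you should do the same.
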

\begin{proof}[Proof of Theorem~\ref{thm:connconst}]
If $\gamma$ is a non-negative real number, then the result follows from \cite[Theorem 1.3]{connconst2}. So we focus on the case where $\gamma$ is not real.

Using the telescoping expansion of $Z_G(\gamma)$ described in \eqref{eq:4bt45tb4tr899}, it suffices to give an algorithm that on an input graph $G\in \mathcal{F}_{\Delta,a,c}$, a vertex $v$ in $G$ and rational $\delta>0$ outputs in time $(\hat{c} n/\delta)^{O\big((1+a+\sqrt{|\hat{\gamma}| \Delta}) \log \Delta\big)}$ a quantity $\tilde{p}$ which satisfies $\tilde{p}=p_{v}(G,\gamma)\mathrm{e}^z$ for some complex number~$z$ with $|z| \leq \delta$.

Let $D,p,q,\alpha$ be the remaining constants in Lemma~\ref{lem:gggConnective} (other than $\hat{\gamma}$), i.e., 
\begin{equation*}\tag{\ref{eq:12rfvbt4da}}
D=\max\{\Delta,\frac{3}{4\hat{\gamma}}\},\ \,  p=\frac{1}{1-\frac{1}{\sqrt{1+4\hat{\gamma} D}}},\ \, q=\frac{p}{p-1}, \ \, \alpha=\frac{1}{D^{1/q}}\Big(1-\frac{2}{1+\sqrt{1+4\hat{\gamma} D}}\Big).
\end{equation*}
We will also use the parameterisation of Section~\ref{sec:matchingmethod}. Namely, as in \eqref{eq:defQ} and \eqref{eq:defU}, we will set 
\[\gamma=(1/Q)^2 \mbox{ and } U=\big\{y\in \mathbb{C}\mid \mathrm{Re}(y)>0,\, |y|< 1/\mathrm{Re}(Q)\big\},\]
where $Q$ is chosen so that $\mathrm{Re}(Q)>0$. Define also the constants 
\begin{equation}\label{eq:u0LM}
u_0:=1/Q,\quad L:= \frac{1}{2} (\mathrm{Re}(Q))^2 \mbox{ and } M:=\frac{2}{\mathrm{Re}(Q)}\Big(|Q|+\frac{n}{\mathrm{Re}(Q)}\Big)^2.
\end{equation}
Note that $u_0\in U$, since $\gamma$ is not real.

Let $T=T_{SAW}(v,G)$ be the self-avoiding walk tree rooted at $v$, then by Theorem~\ref{thm:selfavoid} we have that $p_{v}(G,\gamma)=p_v(T,\gamma)$, so it suffices to approximate $p_v(T,\gamma)$.  Let $C$ be the set of vertices in $T$ which are at distance $\ell$ from $v$, where 
$\ell$ is the smallest integer satisfying
\begin{equation}\label{eq:choiceell}
\ell\geq a\log n \mbox{ and }\frac{M}{L}\hat{c}^{1/q} (\Delta^{1/q}\alpha)^{\ell}\leq \frac{\mathrm{Re}(Q)}{2\big(|Q|+\frac{n}{\mathrm{Re}(Q)}\big)^2}\delta.
\end{equation}
Note that such an $\ell$ exists since $\Delta^{1/q}\alpha<1$ and, in fact, $\ell=O(\log(n/\delta))$. Let $T'$ be the subtree of $T$ obtained by deleting the descendants of $C$ (excluding vertices in $C$). We will show that
\begin{equation}\label{eq:goalrvwcrce}
p_v(T',\gamma)=p_v(T,\gamma)\mathrm{e}^{z}\mbox{ for some $|z|\leq\delta$.}\end{equation}
From this, it follows that we can just output $\tilde{p}=p_v(T',\gamma)$ as an approximation to  $p_{v}(G,\gamma)=p_v(T,\gamma)$. Since $G\in\mathcal{F}_{\Delta,a,c}$ and $\ell\geq a\log n$, we have that $T'$ is a tree with at most $c\Delta^{\ell}$ vertices, and hence we can compute $p_v(T',\gamma)$ in time $(\hat{c} n/\delta)^{O\big((1+a+\sqrt{|\hat{\gamma}| \Delta}) \log \Delta\big)}$.

It therefore remains to prove \eqref{eq:goalrvwcrce}. For a graph $H$ and a vertex $w$ we define $\hat{p}_w(H,\gamma)$ by
\[\hat{p}_w(H,\gamma)=p_w(H,\gamma)/Q.\]
Let $\{F_d\}_{d\geq 1}$ be the sequence of maps corresponding to recurrence in \eqref{eq:yrecurrence}, i.e., for  integer $d\geq1$ and $y_1,\hdots,y_d\in U$
\[F_d(y_1,\hdots,y_d)=\frac{1}{Q+\sum^d_{i=1}y_i}.\]
For a vertex $w$ in $T'$ and an assignment $\sigma: C\rightarrow U$, we define the quantity $r_w(\cdot,\cdot,\cdot)$
 
\begin{equation} 
r_w(C,\sigma,u_0)=\left\{\begin{array}{ll} u_0&  \mbox{if $w$ is a leaf of $T'$ that is not in $C$}\\
\sigma(v)& \mbox{if } w\in C\\
F_d(x_1,\hdots,x_d)& \mbox{where  $w_1,\hdots,w_d$ are $w$'s children in $T'$ and $x_i=r_{w_i}(C,\sigma,u_0)$.}\end{array}\right.
\end{equation} 
Let $\sigma_1,\sigma_2$ be assignments on $C$ such that 
\[\mbox{for all $w\in C$, $\sigma_1(w)=u_0$ and $\sigma_2(w)=\hat{p}_w(T_w,\gamma)$},\]
where $T_w$ denotes the subtree of $T$ induced by the descendants of  $w$ (including $w$). Note that 
\begin{equation}\label{eq:grbetb774546}
\hat{p}_v(T',\gamma)=r_v(C,\sigma_1,u_0), \quad \hat{p}_v(T,\gamma)=r_v(C,\sigma_2,u_0).
\end{equation}
Moreover, by Lemma~\ref{lem:gggConnective}, we have that the family of maps $\{F_d\}_{d\geq 1}$ and the function $\Phi:U\rightarrow {\mathbb R}_{> 0}$ given by $\Phi(y)=\frac{1}{\mathrm{Re}(y)\left(\frac{2}{\mathrm{Re}(Q)} - \mathrm{Re}(y)\right)}$ satisfy the hypotheses of Lemma~\ref{lem:vt41235evtvv}. We will also show shortly that the constants $L,M$ defined in \eqref{eq:u0LM} satisfy 
\begin{equation}\label{eq:ML}
L\leq \inf_{x\in U}\Phi(x)\quad \mbox{and}\quad M\geq \max_{w\in C} \mathrm{dist}_{\Phi}(\sigma_1(w),\sigma_2(w)),
\end{equation}
where $\mathrm{dist}_{\Phi}(\sigma_1(w),\sigma_2(w))$ is the metric corresponding to $\Phi$ (cf. Definition~\ref{def:ldm}). Let us assume \eqref{eq:ML} for the moment and conclude the proof of the Theorem. Applying the conclusion of Lemma~\ref{lem:vt41235evtvv} to the tree $T'$, we obtain that
\begin{equation}\label{eq:rvrv12rvrv}
\begin{aligned}
|r_v(C,\sigma_1,u_0)-r_v(C,\sigma_2,u_0)|&\leq \frac{M}{L}\Big(\sum_{w\in C} \alpha^{q\cdot \depth(w)}\Big)^{1/q}= \frac{M}{L}|C|^{1/q} \alpha^{\ell}\leq \frac{M}{L}c^{1/q} (\Delta^{1/q}\alpha)^{\ell}\\
&\leq \frac{\mathrm{Re}(Q)}{2\big(|Q|+\frac{n}{\mathrm{Re}(Q)}\big)^2}\delta,
\end{aligned}
\end{equation}
where we used that $G\in \mathcal{F}_{\Delta,a,c}$ and the choice of $\ell$ in \eqref{eq:choiceell}. To prove \eqref{eq:ML}, note that for $y\in U$, we have that $\mathrm{Re}(y)\leq |y| \leq \frac{1}{\mathrm{Re}(Q)}$ and hence, using also the fact that $\mathrm{Re}(y)>0$, it follows that
\[
\inf_{y\in U}\Phi(y) = \inf_{y\in U} \frac{1}{\mathrm{Re}(y)\big(\frac{2}{\mathrm{Re}(Q)} - \mathrm{Re}(y)\big)}
\geq \frac{1}{2}(\mathrm{Re}(Q)^2)=L.\]
To prove the bound on $M$ in \eqref{eq:ML}, let us consider arbitrary  $w\in C$, we will show that $\mathrm{dist}_{\Phi}(\sigma_1(w),\sigma_2(w))\leq M$.  We have  $\sigma_1(w)=u_0\in U$ and
\[\mathrm{Re}(\sigma_1(w))=\mathrm{Re}(u_0)=\mathrm{Re}(1/Q)=\mathrm{Re}(Q)/|Q|^2.\]
For $\sigma_2(w)$, let $w_1,\hdots, w_d$ be the children of $w$ in the tree $T$ and note that $d\leq n$. Set $y_i=\hat{p}_{w_i}(T_{w_i},\gamma)$ so that $y_1,\hdots, y_d\in U$ and  $\sigma_2(w)=F(y_1,\hdots,y_d)$. It follows from Lemma~\ref{lem:loca} that $\sigma_2(w)\in U$ and
\[\mathrm{Re}(\sigma_2(w))\geq \frac{\mathrm{Re}(Q)}{\big(|Q|+\frac{d}{\mathrm{Re}(Q)}\big)^2}\geq \frac{\mathrm{Re}(Q)}{\big(|Q|+\frac{n}{\mathrm{Re}(Q)}\big)^2}.\] 
Consider a path $\eta$ given by $\eta(t)=(1-t)\sigma_1(w)+t \sigma_2(w)$ for $t\in [0,1]$.  Then, for all $t\in [0,1]$, we have that 
\[\mathrm{Re}(\eta(t))\geq \min\big\{|\mathrm{Re}(\sigma_1(w))|, |\mathrm{Re}(\sigma_2(w))|\big\}\geq \frac{\mathrm{Re}(Q)}{\big(|Q|+\frac{n}{\mathrm{Re}(Q)}\big)^2}\]
and hence
\[\Phi(\eta(t)) = \frac{1}{\mathrm{Re}(\eta(t))\left(\frac{2}{\mathrm{Re}(Q)} - \mathrm{Re}(\eta(t))\right)} \leq \Big(|Q|+\frac{n}{\mathrm{Re}(Q)}\Big)^2.\]
It follows that
\begin{align*}
\mathrm{dist}_{\Phi}(\sigma_1(w),\sigma_2(w))&\leq \int_{0}^{1} \Phi(\eta(t)) \Big|\frac{\partial}{\partial t}\eta(t)\Big| \, dt
\leq \Big(|Q|+\frac{n}{\mathrm{Re}(Q)}\Big)^2\big|\sigma_1(w)-\sigma_2(w)\big|\\
&\leq\frac{2}{\mathrm{Re}(Q)}\Big(|Q|+\frac{n}{\mathrm{Re}(Q)}\Big)^2,
\end{align*}
where the last inequality is obtained from the triangle inequality and $\sigma_1(w),\sigma_2(w)\in U$. This finishes the proof of \eqref{eq:ML}.

Now, from Lemma~\ref{lem:loca} we have that $|r_v(C,\sigma_2,u_0)|\geq \mathrm{Re}(r_v(C,\sigma_2,u_0))\geq \frac{\mathrm{Re}(Q)}{\big(|Q|+\frac{n}{\mathrm{Re}(Q)}\big)^2}$, and therefore \eqref{eq:rvrv12rvrv} combined with  \eqref{eq:grbetb774546} gives that 
\begin{equation*}
\frac{|p_v(T,\gamma)-p_v(T',\gamma)|}{|p_v(T,\gamma)|}\leq \delta/2.
\end{equation*}
This yields \eqref{eq:goalrvwcrce}, therefore completing the proof of Theorem~\ref{thm:connconst}.
\end{proof}

\section{Proof of Theorems~\ref{thm:hard} and~\ref{thm:hardsign} (bounded degree)}\label{sec:maxhard}

\subsection{Implementing Edge Activities}
Let $G=(V,E)$ be a graph and $u,v\in V$. Analogously to the notation $Z_{G,u}(\gamma)$ and $Z_{G,\neg u}(\gamma)$ of Section~\ref{sec:prelims}, we will denote
\begin{align*}
Z_{G,u,v}(\gamma)&:=\sum_{M\in \mathcal{M}_G;\, u,v\in \mathsf{ver}(M)}\gamma^{|M|},& Z_{G,u,\neg v}(\gamma)&:=\sum_{M\in \mathcal{M}_G;\, u\in \mathsf{ver}(M), v\notin \mathsf{ver}(M)}\gamma^{|M|},\\
Z_{G,\neg u,\neg v}(\gamma)&:=\sum_{M\in \mathcal{M}_G;\, u,v\notin \mathsf{ver}(M)}\gamma^{|M|}, & Z_{G,\neg u, v}(\gamma)&:=\sum_{M\in \mathcal{M}_G;\, u\notin \mathsf{ver}(M), v\in \mathsf{ver}(M)}\gamma^{|M|}.
\end{align*}
Thus, $Z_{G,u,v}(\gamma)$ is the contribution to the partition function $Z_G(\gamma)$ from those matchings $M\in \mathcal{M}_G$ such that both $u,v$ are matched in $M$, while $Z_{G,\neg u,\neg v}(\gamma)$ is the contribution to the partition function $Z_G(\gamma)$ from those matchings $M\in \mathcal{M}_G$ such that neither of $u,v$ are matched in $M$. 
\begin{definition}\label{def:edgeimpl}
Fix a real number $\gamma$. Given $\gamma$,
the graph $G=(V,E)$  is said to \emph{implement} the edge activity $\gamma'\in \Reals$ with \emph{accuracy} $\epsilon>0$ if there are vertices $u$, $v$ in $G$ such that  $Z_{G, \neg u,\neg v}(\gamma)\neq 0$  and
\begin{enumerate}
\item \label{it:uvone} $u,v$ have degree one in $G$ and $(u,v)\notin E$, 
\item \label{it:bothmatched} $\displaystyle \Big|\frac{Z_{G,u,\neg v}(\gamma)}{Z_{G,\neg u,\neg v}(\gamma)}\Big|\leq \epsilon$, $\displaystyle \Big|\frac{Z_{G,\neg u,v}(\gamma)}{Z_{G,\neg u,\neg v}(\gamma)}\Big|\leq \epsilon$,
\item  \label{it:ps123}   $\displaystyle \Big|\frac{Z_{G,u,v}(\gamma)}{Z_{G,\neg u,\neg v}(\gamma)}-\gamma'\Big|\leq \epsilon $.
\end{enumerate}
We call $u,v$ the terminals of $G$. If both of Items~\ref{it:bothmatched} and~\ref{it:ps123} hold with $\epsilon=0$, we say that $G$ implements the edge activity $\gamma'$ (perfectly).
\end{definition}

\begin{definition}\label{def:sizes}
Let $\alpha$ be a rational number and write $\alpha=p/q$, where  $p,q$ are integers such that $\mathrm{gcd}(p,q)=1$. Then, the \emph{size} of $\alpha$, denoted by $\size{\alpha}$, is given by $1+\log(|p|+|q|)$. For $\alpha_1,\hdots,\alpha_t\in \mathbb{Q}$, we denote by $\size{\alpha_1,\hdots,\alpha_t}$ the total of the sizes of $\alpha_1,\hdots,\alpha_t$.

For a multivariate polynomial $P(x_1,\hdots, x_n)$ of degree $d$ with rational coefficients $\alpha_1,\hdots, \alpha_t$, we let $\size{P}$ be $d+\size{\alpha_1,\hdots,\alpha_t}$. 
\end{definition}

\newcommand{\statelemexpmatch}{Let $\Delta\geq 3$ be an integer and $\gamma<-\frac{1}{4(\Delta-1)}$ be a rational number.

There is an algorithm which, on input rational $\gamma'\leq 0$ and $\eps>0$, outputs in $poly(\size{\gamma',\eps})$ time a bipartite graph $G$ of maximum degree at most  $\Delta$ with terminals $u,v$ in the same part of the vertex partition of $G$ so that $G$ implements $\gamma'$ with accuracy $\eps$.}
\begin{lemma}\label{lem:expmatch}
\statelemexpmatch
\end{lemma}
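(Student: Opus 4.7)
My plan is to realize $\gamma'$ as the effective edge activity of a tree-like matching gadget, built by iterating the tree recurrence~\eqref{eq:xrecurrence}. The feature that makes this work is that $\gamma<-\frac{1}{4(\Delta-1)}$ is precisely the regime in which the M\"obius recurrence $T(R) = 1/(1+\gamma(\Delta-1)R)$ becomes elliptic, so that its real orbit is either finite-periodic or dense on $\Reals\cup\{\infty\}$. This ``breakdown of the Heilmann--Lieb regime'' (compare Theorem~\ref{thm:nozeros}) is what we harness in reverse to engineer arbitrary negative activities.

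\textbf{Step 1 (base gadgets).} I would first construct bipartite trees $T_k$ of maximum degree $\Delta$ in which a root $r$ has $\Delta-1$ children, each of which is the root of a balanced $(\Delta-1)$-ary tree of depth $k-1$. By the tree recurrence for matchings, the ratio $p_r:=Z_{T_k,\neg r}(\gamma)/Z_{T_k}(\gamma)$ equals $T^k(1)$, so the set of $p$-ratios realizable from this family is exactly the orbit $\{T^k(1)\}_{k\geq 0}$. An edge gadget meeting Definition~\ref{def:edgeimpl} is then built by attaching a pair of pendant edges $(u,u_0),(v,v_0)$ to two vertices $u_0,v_0$ of a larger tree assembled from several $T_k$'s; the three conditional partition functions $Z_{G,\neg u,\neg v}$, $Z_{G,u,\neg v}$, $Z_{G,\neg u,v}$, $Z_{G,u,v}$ become rational functions in the $p_r$'s, and $u,v$ can be placed at even distance in the same bipartition part while keeping every vertex of degree at most $\Delta$.

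\textbf{Step 2 (density and algorithm).} In matrix form, $T\in PSL(2,\Reals)$ has normalized trace $1/\sqrt{-\gamma(\Delta-1)}$, which is strictly less than $2$ iff $\gamma<-\frac{1}{4(\Delta-1)}$, so $T$ is elliptic, conjugate to rotation by some angle $\theta(\gamma)$, preserving a circle whose intersection with the real axis is all of $\Reals\cup\{\infty\}$. For irrational $\theta/\pi$, equidistribution on the rotated circle gives $O(1/N)$-density of $\{T^k(1)\}_{k\leq N}$. For the countably many resonant rational $\gamma$, I would introduce a second explicitly computable gadget (e.g., with root degree $j<\Delta-1$ padded by inert pendants) whose associated M\"obius map $T'$ has a different angle, so that $\langle T,T'\rangle$ acts densely on $\Reals\cup\{\infty\}$, with a quantitative (Solovay--Kitaev-style) bound on the word length needed to $\eps$-approximate any real target. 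The algorithm, given $\gamma'$ and $\eps$, then enumerates compositions of at most $N=\mathrm{poly}(\size{\gamma',\eps})$ base gadgets, evaluates the implemented activity through the recurrence in polynomial time per candidate (only rational arithmetic of bit-length $\mathrm{poly}(N)$ is needed), and returns a composition whose implemented activity lies within $\eps$ of $\gamma'$. To simultaneously drive $|Z_{G,u,\neg v}/Z_{G,\neg u,\neg v}|$ and $|Z_{G,\neg u,v}/Z_{G,\neg u,\neg v}|$ below $\eps$, I would nest an outer wrapping tree whose role is purely to suppress the ``off-diagonal'' ratios without perturbing the ``diagonal'' ratio tuned to $\gamma'$.

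\textbf{Main obstacle.} The principal difficulty is the quantitative density argument in Step~2: making the implicit constants explicit enough to yield the $\mathrm{poly}(\size{\gamma',\eps})$ bound required by the lemma, and especially handling resonant $\gamma$ uniformly by constructing a second generator whose compositions with $T$ are provably dense in polynomial time. A secondary subtlety is the simultaneous tuning in the wrapper of Step~1 so that $Z_{G,u,v}/Z_{G,\neg u,\neg v}\approx\gamma'$ holds while the two off-diagonal ratios are independently driven below $\eps$; decoupling these two degrees of freedom may require nesting several wrapper stages and arguing that the resulting multivariate recurrence inherits the density properties of the single-variable one.
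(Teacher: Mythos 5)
There is a genuine gap, and it is exactly at the point you flag as the ``main obstacle'': the quantitative, polynomial-time precision. Lemma~\ref{lem:expmatch} demands accuracy $\eps$ in time $poly(\size{\gamma',\eps})$, i.e.\ \emph{exponentially} small error relative to the running time, since $\size{\eps}$ is the bit size of $\eps$. Equidistribution of the single elliptic orbit $\{T^k(1)\}_{k\leq N}$ gives at best $O(1/N)$-density (and in fact the discrepancy depends on the Diophantine type of $\theta/\pi$, which you cannot control and which for Liouville-type angles forces far longer orbits), so reaching accuracy $\eps$ this way needs $N\gtrsim 1/\eps$, exponential in $\size{\eps}$; and ``enumerating compositions of at most $N$ base gadgets'' over two generators is itself an exponential search. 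The Solovay--Kitaev-style patch does not rescue this: that machinery is for compact groups such as $SU(2)$, whereas here the relevant action is by a two-generator subgroup of $\mathrm{PSL}(2,\Reals)$ on $\Reals\cup\{\infty\}$, and no effective dense-generation statement with polylogarithmic word length is provided (or known to apply off the shelf). The paper sidesteps this entirely: density of the tree ratios (Lemma~\ref{lem:notbad}) is used only \emph{existentially}, to implement a fixed finite set of constants (depending on $\gamma$ alone, not on $\gamma'$ or $\eps$) to constant accuracy (Lemma~\ref{lem:vertmatch}); the exponential precision then comes from the ``contracting maps that cover'' technique of Lemma~\ref{lem:exp2}: finitely many contracting M\"obius maps $\Phi_i$ around an attracting fixpoint whose images cover a fixed interval allow a binary-search-style selection of a composition of length $O(\log(1/\eps))$, computable in $poly(\size{\eps})$ time by inverting the maps. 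This amplification idea is the actual content of the lemma and is absent from your proposal.

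Two further points. First, in the resonant case $\gamma\in\mathcal{B}_\Delta$ the orbit you rely on passes through a zero of the tree partition function ($Z_{T_n}(\gamma)=0$), so the ratios themselves become ill-defined; the paper turns this to advantage by extracting from a minimal vanishing tree a \emph{perfect} implementation of edge activity $-1$ or $-1/4$ (Lemma~\ref{lem:notbad}, Item~2, and Lemma~\ref{lem:gammaminusonequarter}) and then working at $\gamma=-1$, where two explicit maps generate all nonzero rationals constructively (Lemma~\ref{lem:gammaminusone}); your proposed second generator with ``a different angle'' is neither exhibited nor proven to yield effective density. Second, the simultaneous suppression of the off-diagonal ratios is not an extra wrapper stage in the paper but an exact algebraic cancellation: in the three-gadget path with vertex activities $\lambda_1=\lambda_3=-\gamma''/\gamma$ and $\lambda_2=1/\gamma''$ one has $1+\gamma\lambda_1\lambda_2=1+\gamma\lambda_2\lambda_3=0$, so $Z_{G,u,\neg v}$ and $Z_{G,\neg u,v}$ vanish at the ideal values and are $O(\eps)$ by continuity (Lemma~\ref{lem:Ratioapprox}); without such an identity your decoupling of the diagonal and off-diagonal constraints remains an unresolved multivariate tuning problem.
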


\subsection{Proof of Main Hardness Results}
In this section, assuming our key Lemma~\ref{lem:expmatch} (which will be proved in Section~\ref{sec:keylemmahard}), we complete the proof of Theorems~\ref{thm:hard} and~\ref{thm:hardsign}. To capture the restriction on the input graph $G$ within those theorems and also for easy reference within the proofs, it will be convenient to define the following computational problems  which capture the problems of multiplicatively approximating the norm of $Z_G(\gamma)$ and determining the sign of $Z_G(\gamma)$.

\prob{
$\Matchings{K}$.} 
{ A bipartite graph $G$ with maximum degree at
most $\Delta$.} 
 { If $Z_G(\gamma)=0$  then the algorithm may output any rational number. Otherwise,
 it must output a  rational number $\widehat{N}$ such tha 
$\widehat{N}/K \leq
|Z_{G}(\gamma)|\leq K \widehat{N}$.}
\prob{$\SignMatchings$.} 
 { A bipartite graph $G$ with maximum degree at
most $\Delta$.}
{If $Z_G(\gamma)=0$ then the output may be either $+$ or $-$. Otherwise, the output is $\mathrm{sign}(Z_G(\gamma))$.
} 

Using this language, we restate Theorems~\ref{thm:hard} and~\ref{thm:hardsign} which we focus on proving next.
\begin{thmhard}
Let $\Delta\geq 3$ and $\gamma<-\frac{1}{4(\Delta-1)}$ be a rational number. Then,   $\Matchings{1.01}$  is $\numP$-hard. 
\end{thmhard}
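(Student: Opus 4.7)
My plan is to reduce from the $\#P$-hard problem of counting maximum matchings in graphs of maximum degree $\Delta$ (which follows from Valiant's hardness of $\#\mathrm{PM}$ on cubic graphs, since detecting a perfect matching is in P). The core idea is to use Lemma~\ref{lem:expmatch} to simulate a large-magnitude negative edge activity $\gamma' = -N$ using our fixed $\gamma$: given an input $H$ of the hard problem, I will build a bipartite graph $G$ of max degree $\Delta$ whose matching polynomial at $\gamma$ encodes $Z_H(-N)$, and then exploit the fact that $Z_H(-N) = \sum_k c_k(-N)^k$ is dominated in magnitude by its leading max-matching term $c_m N^m$ as soon as $N \gg \sum_k c_k$, so that a factor-$1.01$ approximation of $|Z_G(\gamma)|$ suffices to isolate $c_m$.

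Concretely, given $H$ of max degree $\Delta$, I invoke Lemma~\ref{lem:expmatch} with target $\gamma' = -N$ and accuracy $\epsilon$ (to be fixed) to obtain a bipartite gadget $G_0$ of max degree $\Delta$ with terminals $u,v$ in the same part of its bipartition. I then build $G$ by, for each edge $e=(x_e,y_e)\in E(H)$, taking a fresh copy of $G_0$ and identifying $u\equiv x_e$, $v\equiv y_e$. Since the terminals have degree one in $G_0$, $G$ inherits the max degree of $H$; and since both terminals lie in the \emph{same} part of every gadget's bipartition, I can consistently place every original vertex of $H$ in one common part of $G$ and every internal gadget vertex in the opposite part, producing a valid bipartition of $G$ irrespective of whether $H$ itself is bipartite. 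Expanding $Z_G(\gamma)$ by classifying each gadget's internal state (neither / only-$u$ / only-$v$ / both terminals matched) and invoking Definition~\ref{def:edgeimpl}'s bounds, the ``neither'' and ``both'' states aggregate exactly to $Z_{G_0,\neg u,\neg v}(\gamma)^{|E(H)|}\, Z_H(-N)$, while each single-matched cross term is an $O(\epsilon)$-fraction of the ``neither'' term. This yields
$$Z_G(\gamma) \;=\; Z_{G_0,\neg u,\neg v}(\gamma)^{|E(H)|}\bigl(Z_H(-N) + \mathrm{err}\bigr),$$
with $|\mathrm{err}|/|Z_H(-N)|$ controlled by $\epsilon \cdot \mathrm{poly}(|V(H)|,N)$, and small for suitable $\epsilon$.

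To read $c_m$ from a factor-$1.01$ oracle for $|Z_G(\gamma)|$, I pick $N$ with bit-length polynomial in $|V(H)|$, large enough that $|Z_H(-N)|/N^m = c_m(1\pm 1/\mathrm{poly})$, and I amplify the oracle to arbitrary inverse-polynomial relative accuracy by querying it on polynomially-many disjoint copies of $G$ and of $G_0\setminus\{u,v\}$ --- since $Z$ factorizes over disjoint unions, taking a $k$-th root converts the factor $1.01$ into $1.01^{1/k} = 1+o(1)$ for $k=\mathrm{poly}$. Dividing the sharpened approximations and rounding then recovers $c_m$ exactly, completing the reduction (note $Z_G(\gamma)\neq 0$ whenever $c_m\geq 1$, so the oracle's promise is satisfied in the nontrivial case; $c_m=0$ is detectable in polynomial time). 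The main obstacle is the error bookkeeping: I must choose $\log(1/\epsilon)$, $\log N$, and the amplification parameter $k$ polynomially in $|V(H)|$ so that the compounded gadget-accuracy errors across $|E(H)|$ edges, together with the (possibly cancelling) subleading terms of $Z_H(-N)$, remain well below the gap needed for the oracle to single out $c_m N^m$. These requirements fit comfortably within Lemma~\ref{lem:expmatch}'s polynomial-time guarantee on $\size{\gamma',\epsilon}$.
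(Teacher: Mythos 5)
There is a genuine gap at the final step of your reduction: you cannot recover the exact number of maximum matchings $c_m$ from the approximation oracle. Your amplification trick (querying the oracle on $k$ disjoint copies and taking a $k$-th root) converts the factor $1.01$ into a factor $1.01^{1/k}\approx 1+0.01/k$, so with polynomially many copies you only reach relative accuracy $1+1/\mathrm{poly}(n)$. But to round to the exact integer $c_m$ you need relative accuracy roughly $1+1/(3c_m)$, and $c_m$ (the number of maximum matchings of $H$) can be exponentially large in $|V(H)|$; achieving that accuracy by powering would require exponentially many copies, i.e.\ an exponentially large query graph. So your reduction only shows that a $\bigl(1+1/\mathrm{poly}\bigr)$-factor approximation of the number of maximum matchings reduces to $\Matchings{1.01}$ --- and approximate counting of (maximum/perfect) matchings is not known to be $\numP$-hard; indeed for bipartite graphs it admits an FPRAS, so this route cannot by itself deliver $\numP$-hardness. (Minor additional points: your bipartition argument is right in spirit but misstated --- the gadget's internal vertices lie on both sides of its bipartition, so you keep each gadget's own two-colouring, with the terminal side merged into the part containing the $H$-vertices; and the choice of exponentially large $N$ with exponentially small $\epsilon$ is fine, since Lemma~\ref{lem:expmatch} is polynomial in the bit-sizes.)

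The paper avoids exactly this precision barrier by a different mechanism: it does not try to read a huge integer out of the oracle, but instead pins down a single rational ratio by binary search near a zero of the partition function. Concretely, it reduces from exactly computing $Z_G(\gamma_0)$ at $\gamma_0=-1/10$ on degree-$3$ graphs (\#P-hard by Cai--Huang--Lu), and for an edge $e^*$ it targets $R_{\mathsf{goal}}=-Z_{G-e^*}(\gamma_0)/Z_{G\setminus\{u^*,v^*\}}(\gamma_0)$. Using Lemma~\ref{lem:expmatch} it implements both $\gamma_0$ and a trial activity $R$ on the edges of $G$, so that the resulting ratio $f_R$ is within $\epsilon'$ of $\alpha R+\beta$, which changes sign at $R=R_{\mathsf{goal}}$; because $|f_R|$ grows linearly as $R$ moves away from $R_{\mathsf{goal}}$, even a crude $1.01$-factor estimate of the two relevant partition-function norms (or just the sign oracle) identifies, among nine equally spaced trial values, a subinterval of length $7/8$ of the current one that still contains $R_{\mathsf{goal}}$. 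After polynomially many rounds the interval is shorter than the minimum gap between candidate rationals, so $R_{\mathsf{goal}}$ --- and hence, via self-reducibility, $Z_G(\gamma_0)$ --- is computed exactly. This exploitation of a sign change (a zero) is what lets a constant-factor oracle yield exact answers, and it is the ingredient your proposal is missing.
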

\begin{thmhardsign}
Let $\Delta\geq 3$ and $\gamma<-\frac{1}{4(\Delta-1)}$ be a rational number. Then, 
 $\SignMatchings$  is $\numP$-hard. 
\end{thmhardsign}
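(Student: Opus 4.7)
The plan is to prove both Theorems~\ref{thm:hard} and~\ref{thm:hardsign} by the same gadget reduction: Lemma~\ref{lem:expmatch} lets us realise any rational $\gamma^\ast\le 0$ as an effective edge activity at activity $\gamma$, so we can inherit $\numP$-hardness from a convenient target activity $\gamma^\ast$ back to activity $\gamma$.

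First I would fix a source $\numP$-hard problem of the form ``compute (or $1.01$-approximate, or determine the sign of) $Z_H(\gamma^\ast)$ on bipartite graphs $H$ of maximum degree at most~$\Delta$'' for a carefully chosen rational $\gamma^\ast\le 0$. Such hardness can be bootstrapped from Valiant's $\numP$-hardness of the bipartite $0/1$ permanent: since the coefficient of $\gamma^{|V(H)|/2}$ in $Z_H(\gamma)$ is precisely the number of perfect matchings, an oracle for $Z_H(\cdot)$ at polynomially many rational $\gamma^\ast$'s suffices (by Lagrange interpolation) to extract it. Standard amplification and interpolation arguments then lift this to hardness of $1.01$-approximation and of sign at a suitable $\gamma^\ast$.

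Second I would perform the reduction. Given an input $H$ for the source problem, I would apply Lemma~\ref{lem:expmatch} with target activity $\gamma^\ast$ and accuracy $\eps := 2^{-c|V(H)|}$, for an absolute constant $c$ to be chosen, to obtain a polynomial-size bipartite gadget $\Gamma$ of maximum degree~$\Delta$. The graph $G$ is built from $H$ by splicing in a fresh copy of $\Gamma$ for each edge-feature of $H$, identifying the two (same-part) terminals of $\Gamma$ with the two relevant vertices of $H$. Expanding $Z_G(\gamma)$ gadget-by-gadget via the three bounds in Definition~\ref{def:edgeimpl} yields
\[
Z_G(\gamma) \;=\; \Big(\prod_{e} Z_{\Gamma_e,\neg u_e,\neg v_e}(\gamma)\Big)\cdot Z_H(\gamma^\ast)\cdot(1+\xi), \qquad |\xi|\le \tfrac{1}{100},
\]
where the bound on $|\xi|$ follows from the choice of $\eps$ together with Items~\ref{it:bothmatched} and~\ref{it:ps123} of Definition~\ref{def:edgeimpl}. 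The normalising product is nonzero (by Item~\ref{it:bothmatched}) and can be computed exactly in polynomial time from the explicit $\Gamma_e$. Consequently, a $1.01$-approximation of $|Z_G(\gamma)|$ translates, after dividing out the normaliser, into a constant-factor approximation of $|Z_H(\gamma^\ast)|$, and $\mathrm{sign}(Z_G(\gamma))$ reveals $\mathrm{sign}(Z_H(\gamma^\ast))$, completing both reductions.

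The main obstacle is to juggle three constraints during the gluing step. First, preserving bipartiteness of $G$ is nontrivial because the two terminals of $\Gamma$ lie in the \emph{same} part of its bipartition and therefore cannot be naively identified with two opposite-part endpoints of a standard bipartite edge in $H$; the reduction therefore has to target same-part interactions in $H$, for instance by tailoring the source $\numP$-hard problem to involve pairs of same-part vertices, or by concatenating a pair of gadgets through an intermediate vertex to flip parity between endpoints. Second, the maximum-degree bound $\Delta$ is preserved only because the terminals of $\Gamma$ have degree one in $\Gamma$, so each edge-feature of $H$ contributes just one unit of degree per endpoint. Third, controlling the accumulated multiplicative error $\xi$ across $|E(H)|$ gadget substitutions requires a polynomial-gap lower bound of the form $|Z_H(\gamma^\ast)|\ge 2^{-\mathrm{poly}(|V(H)|)}$ whenever $Z_H(\gamma^\ast)\ne 0$; this follows from $Z_H$ being a polynomial evaluation at a rational with polynomially many bits, but has to be spelled out to justify the inverse-exponential choice of $\eps$ in the call to Lemma~\ref{lem:expmatch}.
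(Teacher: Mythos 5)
There is a genuine gap, and it sits at the very first step: your reduction needs, as its source, a problem of the form ``$1.01$-approximate $|Z_H(\gamma^\ast)|$'' or ``determine $\mathrm{sign}(Z_H(\gamma^\ast))$'' that is already known to be $\numP$-hard for some rational $\gamma^\ast\le 0$, and no such starting point exists non-circularly. For $\gamma^\ast\in(-\tfrac{1}{4(\Delta-1)},0]$ the sign is always positive (Theorem~\ref{thm:nozeros}/Corollary~\ref{lem:positive}) and Patel--Regts gives an FPTAS, so neither problem is $\numP$-hard there; for $\gamma^\ast\le -\tfrac{1}{4(\Delta-1)}$ the hardness of approximation and of sign is exactly the statement of Theorems~\ref{thm:hard} and~\ref{thm:hardsign} being proved. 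Your remark that ``standard amplification and interpolation arguments'' lift exact-evaluation hardness (which is indeed available, e.g.\ at $\gamma_0=-1/10$ via \cite{Cai2012}) to approximation/sign hardness is precisely the non-standard part: an interpolation from a $1.01$-approximate or sign oracle cannot be carried out directly, and bridging this is the heart of the paper's proof. What the paper actually does is reduce from \emph{exact} computation of $Z_G(\gamma_0)$ at $\gamma_0=-1/10$ (where $Z_G(\gamma_0)>0$), and it uses Lemma~\ref{lem:expmatch} not once with a fixed target activity but adaptively: every edge except a distinguished $e^\ast$ gets a gadget implementing $\gamma_0$, while $e^\ast$ gets a gadget implementing a \emph{variable} rational $R$, so that $f_R:=Z_{G_R}(\gamma)/Z_{T_R}(\gamma)\approx \alpha R+\beta$ with $\alpha,\beta>0$. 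The weak oracle (factor-$1.01$ norm, or sign) is then only used to compare $f_R$ with $0$, i.e.\ to decide on which side of $R_{\mathsf{goal}}=-\beta/\alpha$ the probe $R$ lies; a binary search \`a la \cite{SignTutte,ComplexIsing} pins down $R_{\mathsf{goal}}$ to exponential precision, whence its exact value (a bounded rational) and then $Z_G(\gamma_0)$ by self-reducibility. Your single-shot splicing with a fixed $\gamma^\ast$ cannot reproduce this.

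Two smaller points. First, your ``main obstacle'' about bipartiteness is a non-issue and in fact backwards: because the two terminals of the gadget lie in the same part of its bipartition, replacing \emph{every} edge of an arbitrary (not necessarily bipartite) host graph by such a gadget automatically yields a bipartite graph; no parity-flipping or tailoring of the source problem is needed, and indeed the paper's source instances of \cite{Cai2012} are general degree-$3$ graphs. Second, your error analysis (the $(1+\xi)$ factor and the lower bound $|Z_H(\gamma^\ast)|\ge 2^{-\mathrm{poly}}$ when nonzero) is fine as far as it goes and mirrors the paper's estimate \eqref{eq:rtv4tv12}, but it only becomes useful once the oracle is being applied inside the binary search described above rather than to recover $Z_H(\gamma^\ast)$ in one shot.
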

\begin{proof}[Proof of Theorems~\ref{thm:hard} and~\ref{thm:hardsign}]
Let $\gamma_0=-1/10$ and $\mathcal{G}$ be the set of graphs of maximum degree 3. It is well-known \cite[Theorem 3]{Cai2012} that the problem of computing $Z_G(\gamma_0)$ on input a graph $G\in \mathcal{G}$ is \#P-hard.  Moreover, by Corollary~\ref{lem:positive} we have that $Z_G(\gamma_0)>0$ for all graphs $G\in \mathcal{G}$. 

Using an oracle for either $\Matchings{1.01}$ or $\SignMatchings$, we will design a polynomial time algorithm to compute the ratio $\frac{Z_G(\gamma_0)}{Z_{G-e^*}(\gamma_0)}$ for an arbitrary graph $G\in \mathcal{G}$ and an arbitrary edge $e^*$ of $G$; note that this ratio is well-defined since $Z_{G-e^*}(\gamma_0)> 0$. With such a subroutine at hand, it is standard to compute $Z_G(\gamma_0)$ using self-reducibility techniques\footnote{Namely, let $e_1, e_2, \hdots, e_m$ be an enumeration of the edges of $G$ and let $G_i$ be the graph where the edges $e_i,\hdots, e_m$ are deleted (note that $G_{m+1}=G$ and $G_1$ is the empty graph). Then, we have that $Z_G(\gamma_0)=\prod^{m}_{i=1} \frac{Z_{G_{i+1}}(\gamma_0)}{Z_{G_{i}}(\gamma_0)}$.}, which therefore proves that  $\Matchings{K}$ and $\SignMatchings$  are both \#P-hard.

Let $G=(V,E)$ be an arbitrary  graph of maximum degree $3$ with $|V|=n$ and $|E|=m$.  Let also $e^*=(u^*,v^*)$ be an arbitrary edge of $G$. Our goal is to compute $\frac{Z_G(\gamma_0)}{Z_{G-e^*}(\gamma_0)}$. To do this, let 
\[\alpha=Z_{G\backslash\{u^*,v^*\}}(\gamma_0),\quad \beta=Z_{G-e^*}(\gamma_0)\]
and note that
\[\frac{Z_G(\gamma_0)}{Z_{G-e^*}(\gamma_0)}=\frac{\gamma_0 Z_{G\backslash\{u^*,v^*\}}(\gamma_0)+Z_{G-e^*}(\gamma_0)}{Z_{G-e^*}(\gamma_0)}=\gamma_0\frac{\alpha}{\beta}+1,\]
so it suffices to compute $R_{\mathsf{goal}}:=-\frac{\beta}{\alpha}$.  Note that $R_{\mathsf{goal}}$ is a (well-defined) negative number since both $\alpha=Z_{G\backslash\{u^*,v^*\}}(\gamma_0)$ and $\beta=Z_{G-e^*}(\gamma_0)$ are positive. Moreover, since $\gamma_0=-1/10$ and each of the graphs $G-e^*$ and $G\backslash\{u^*,v^*\}$ have at most $m\leq 2n$ edges, we have that $\alpha=P/10^{2n}$ and $\beta=Q/10^{2n}$ for some  integers $P,Q$ satisfying the crude bounds $1\leq P, Q\leq 20^{2n}$. It follows that $R_{\mathsf{goal}}\in \mathcal{R}_0$ where
\[\mathcal{R}_0:=\big\{-p/q\mid 0\leq p,q\leq 20^{2n}, q\neq 0 \big\}.\]
Let $N:=10^3n$, $L_0:=-20^{2n}$ and $U_0:=0$. For $i=1,\hdots, N$, we will  show a recursive procedure to compute rationals $L_{i}, U_i$ such that 
\begin{equation}\label{eq:binsearch}
R_{\mathsf{goal}}\in [L_i,U_i], \quad |U_i-L_i|\leq \frac{7}{8}|U_{i-1}-L_{i-1}|, \quad L_i, U_i \in \mathcal{R}_i:=\big\{-p/q\mid 0\leq p,q\leq 8^i20^{2n}, q\neq 0 \big\}.
\end{equation}
Observe that any two distinct rationals in $\mathcal{R}_0$ differ\footnote{\label{foot:distinct}Suppose that $\alpha=p/q$ and $\alpha'=p'/q'$ are distinct rationals, where  $p,q,p',q'$ are integers whose absolute values are all less than $M$ for some $M>0$. Then,  we have that $|\alpha-\alpha'|=\frac{|pq'-qp'|}{|qq'|}\geq 1/M^2$ since $pq'-qp'$ is an integer distinct from zero.}  by at least $1/400^{2n}$. 
Since $U_N-L_N\leq (7/8)^{N}{20^{2n}}< 1/400^{2n}$, it follows that using the values of $L_N, U_N$ we can in fact figure out the exact value of $R_{\mathsf{goal}}$ in $poly(n)$ time (see \cite[Footnote 8]{buys2020leeyang}).  

Let $\epsilon':=1/20^{8(n+N)}$. For a number $R\in \mathcal{R}_N$,  let  $G_R$ be the graph obtained as follows. First, using the algorithm in Lemma~\ref{lem:expmatch}, we construct in $poly(n)$ time a bipartite graph   $H_0$ of maximum degree $\Delta$ with terminals $u_0$ and $v_0$ 
(in the same part of the vertex partition)
which implements the activity $\gamma_0=-1/10$ with accuracy $\epsilon=\frac{\epsilon'}{5^{4n}\max\{|\gamma_0|,|R|\}}$. Similarly, we also construct  in $poly(n)$ time a bipartite graph $H_1$ of maximum degree $\Delta$ with terminals $u_1$ and $v_1$ 
(in the same part of the vertex partition)
which implements the activity $R$ with accuracy $\epsilon$.
For every edge $e\in E$ such that $e\neq e^*$, let $H^{(e)}$ be a copy of $H_0$ and set $\gamma^{(e)}=\gamma_0$. For $e=e^*$, let $H^{(e)}$ be a copy of $H_1$ and set $\gamma^{(e)}=R$. For $e\in E$, we also denote by $u^{(e)}$ and $v^{(e)}$ the terminals of $H^{(e)}$ and set\footnote{Note that $x^{(e)}_R(\cdot,\cdot)$  depends on $H^{(e)}$, which in turn depends on $R$ (via the choice of $\epsilon$). The reason that we explicitly note the dependence on $R$ but not on $H^{(e)}$ is for convenience.} 
\begin{align*}
x^{(e)}_{R}(1,1)=Z_{H^{(e)},u^{(e)},v^{(e)}}(\gamma), \quad x^{(e)}_{R}(0,0)=Z_{H^{(e)},\neg u^{(e)},\neg v^{(e)}}(\gamma),\\
x^{(e)}_{R}(1,0)=Z_{H^{(e)},u^{(e)},\neg v^{(e)}}(\gamma), \quad x^{(e)}_{R}(0,1)=Z_{H^{(e)},\neg u^{(e)}, v^{(e)}}(\gamma).
\end{align*}
Note that for all $e\in E$, we have that 
\begin{equation}\label{eq:tv4tv444}
x^{(e)}_{R}(0,0)\neq 0, \quad \Bigg|\frac{x^{(e)}_{R}(1,0)}{x^{(e)}_{R}(0,0)}\Bigg|\leq \epsilon, \quad \Bigg|\frac{x^{(e)}_{R}(0,1)}{x^{(e)}_{R}(0,0)}\Bigg|\leq \epsilon, \mbox{\ \  and\ \ }\bigg|\frac{x^{(e)}_{R}(1,1)}{x^{(e)}_{R}(0,0)}-\gamma^{(e)}\bigg|\leq \epsilon.
\end{equation} 
Let $G_R$ be the bipartite graph obtained from $G$ by replacing every edge $e\in E$ with the graph $H^{(e)}$ and  identifying the endpoints of $e$ with the terminals $u^{(e)}$ and $v^{(e)}$ (bipartiteness of  $G_R$ follows from the fact that $u^{(e)}$ and $v^{(e)}$ lie in the same part of the vertex partition of $H^{(e)}$). Also, let $T_R$ be the bipartite graph obtained from $G$ by replacing every edge $e\in E$ with the graph $H^{(e)}$ and  deleting the terminals $u^{(e)}$ and $v^{(e)}$ (so $T_R$ is a disjoint union of copies of $H_0$ and $H_1$ with all the terminal vertices deleted). We have that 
\begin{equation}\label{eq:3rf34f312}
Z_{T_R}(\gamma)=\prod_{e\in E}x^{(e)}_{R}(0,0)\neq 0.
\end{equation}
We will also show that  
\begin{equation}\label{eq:rtv4tv12}
\big|\alpha R +\beta-f_R\big|\leq \epsilon',\mbox{ where }f_R:=\frac{Z_{G_R}(\gamma)}{Z_{T_R}(\gamma)}.
\end{equation}
Assuming \eqref{eq:rtv4tv12} for the moment, we first conclude the reductions for $\Matchings{1.01}$ and $\SignMatchings$ using the binary search technique of \cite{ComplexIsing}. Note that if $R_1,R_2$ are distinct numbers in $\mathcal{R}_N$ with $R_1>R_2$, then we have from footnote~\ref{foot:distinct} that $R_1-R_2\geq 1/(8^{2N} 20^{4n})$   and therefore 
\begin{equation}\label{eq:r3rfrf21}
f_{R_1}-f_{R_2}\geq \alpha(R_1-R_2)-2\epsilon' \geq 1/(8^{2N} 20^{4n}20^{2n})-2\epsilon'\geq 2\epsilon'>0.
\end{equation}
From \eqref{eq:rtv4tv12} with $R=R_{\mathsf{goal}}$, we obtain that $|f_{R_{\mathsf{goal}}}|\leq \epsilon'$ and therefore, using also \eqref{eq:r3rfrf21}, 
\begin{equation}\label{eq:r3d2d22w}
\mbox{for $R\in \mathcal{R}_N$ it holds that $f_R>0$ if $R>R_{\mathsf{goal}}$ and $f_R<0$ if $R<R_{\mathsf{goal}}$.}
\end{equation} 

\textbf{\#P-hardness of $\Matchings{1.01}$:} Assume that for some $i\in \{1\hdots, N\}$ we have computed $L_{i-1},U_{i-1}\in\mathcal{R}_{i-1}$ such that $R_{\mathsf{goal}}\in [L_{i-1},U_{i-1}]$.   We will show how to compute $L_{i}, U_{i}$ satisfying \eqref{eq:binsearch}. 

Let $\ell:=(U_{i-1}-L_{i-1})/8$. For $j=0,\hdots, 8$, let $R_{j}=L_{i-1}+j \ell$ so that $R_0=L_{i-1}$ and $R_8=U_{i-1}$. Using the oracle to $\Matchings{1.01}$ on inputs $T_{R_j}$ and $G_{R_j}$ (note that $Z_{T_{R_j}}(\gamma)\neq 0$ from \eqref{eq:3rf34f312} and $Z_{G_{R_j}}(\gamma)\neq 0$, unless perhaps in the case where  $R_j= R_{\mathsf{goal}}$ which can happen for at most one index $j$), we can estimate the partition functions $Z_{G_{R_j}}(\gamma)$ and $Z_{T_{R_j}}(\gamma)$  within a factor of 1.01 and therefore we can compute $\hat{f}_{R_j}$ such that
\begin{equation}\label{eq:g45b3f4}
 (1-\eta) |{f}_{R_j}|\leq|\hat{f}_{R_j}|\leq (1+\eta) |{f}_{R_j}| \mbox{ where } \eta:=0.05.
\end{equation}
Suppose that $j\in\{0,1,\hdots, 7\}$ is an index such that $R_{\mathsf{goal}}<R_{j}$. Then, we have that $\alpha R_{j+1}+\beta>\alpha R_j+\beta>0$, so using \eqref{eq:g45b3f4} and \eqref{eq:rtv4tv12} we  obtain that
\begin{align*}
|\hat{f}_{R_{j+1}}|-|\hat{f}_{R_{j}}|&\geq (1-\eta)|{f}_{R_{j+1}}|-(1+\eta)|{f}_{R_{j}}|\\
&\geq  (1-\eta)(\alpha R_{j+1}+\beta-\epsilon')-(1+\eta)(\alpha R_{j}+\beta+\epsilon')\\
&=\alpha(R_{j+1}-R_{j})-\eta\big(\alpha(R_{j+1}+R_{j})+2\beta\big)-2\epsilon'\\
&=\alpha\big(R_{j+1}-R_{j}-\eta(R_{j+1}+R_{j}-2R_{\mathsf{goal}})\big)-2\epsilon'\\
&\geq \alpha \big(\ell-16\eta\ell\big)-2\epsilon'\geq \alpha \ell/10-2\epsilon'> 0,
\end{align*}
where the second equality follows from the fact $\alpha R_{\mathsf{goal}}+\beta=0$ and the last inequality follows from $\alpha\geq 1/10^{2n}$, $\ell\geq 1/8^{2(N+1)} 20^{4n}$ and $\epsilon'=1/20^{8(n+N)}$.
An analogous calculation shows that if $j\in\{1,\hdots, 8\}$ is an index  such that $R_{j}<R_{\mathsf{goal}}$ then $|\hat{f}_{R_{j-1}}|>|\hat{f}_{R_{j}}|$. Therefore, at least one of the following series of inequalities holds: either $|\hat{f}_{R_0}|>|\hat{f}_{R_1}|>|\hat{f}_{R_2}|>|\hat{f}_{R_3}|$ or $|\hat{f}_{R_5}|<|\hat{f}_{R_6}|<|\hat{f}_{R_7}|<|\hat{f}_{R_8}|$ (or both). In the first case, we can be sure that $R_{\mathsf{goal}}\notin [R_0,R_1]$ 
(since $R_{\mathsf{goal}}< R_2$ would imply $|\hat{f}_{R_3}|> |\hat{f}_{R_2}|$)
and therefore we can set $L_{i}=R_1, U_{i}=R_8$. In the second case, we can be sure that $R_{\mathsf{goal}}\notin [R_7,R_8]$ 
(since $R_6 < R_{\mathsf{goal}} $ would imply  $|\hat{f}_{R_{5}}|>|\hat{f}_{R_{6}}|$ )
and therefore we can set $L_{i}=R_0, U_{i}=R_7$. In both cases, we have that $L_i, U_i$ satisfy \eqref{eq:binsearch} as wanted.

\textbf{\#P-hardness of $\SignMatchings$:} This is analogous to the previous reduction, only easier
(so essentially the binary search follows the simpler method of~\cite{SignTutte}). Following the setting above, to compute $L_i, U_i$ we use the oracle to $\SignMatchings$ on inputs $T_{R_j}$ and $G_{R_j}$ to decide whether $f_{R_j}>0$ (as noted before, $Z_{T_{R_j}}(\gamma)\neq 0$ from \eqref{eq:3rf34f312} and $Z_{G_{R_j}}(\gamma)= 0$ is possible for at most one index $j$). Using \eqref{eq:r3d2d22w}, we obtain that either $\hat{f}_{R_0},\hat{f}_{R_1},\hat{f}_{R_2},\hat{f}_{R_3}<0$ or $\hat{f}_{R_5},\hat{f}_{R_6},\hat{f}_{R_7},\hat{f}_{R_8}>0$ (or both). In the former case we can then set $L_{i}=R_1, U_{i}=R_8$  and in the latter case we can set $L_{i}=R_0, U_{i}=R_7$;  in both cases, we have that $L_i, U_i$ satisfy \eqref{eq:binsearch}.

To finish the proof, it remains to establish \eqref{eq:rtv4tv12}.  We will need some definitions. For a matching $M_R$ of $G_R$, the phase of the matching $M_R$, denoted by $\mathcal{Y}(M_R)$, is a 0-1 vector indexed by pairs $(e,v)$ such that $e$ is an edge of $G$ and $v$ is an endpoint of $e$; we set the $(e,v)$ entry of $\mathcal{Y}(M_R)$ equal to 1 if $v$ is matched in the matching $M_{R}$ by edges in the gadget $H^{(e)}$. 

Let $\mathcal{P}$ be the set  consisting of all phases (i.e. the values of $\mathcal{Y}(M_R)$ as $M_R$ ranges over matchings of $G_R$). For a vertex $v\in V$ denote by $d_{v}$ its degree in $G$.   Note that $|\mathcal{P}|= \prod_{v\in V}(d_{v}+1)\leq 4^n$, since each vertex $v$ of $G$ can be either unmatched, or matched in exactly one of the gadgets $H^{(e_1)}, \hdots,H^{(e_{d_v})}$ where $e_1, \hdots, e_{d_v}$ are the edges incident to $v$. Fix a possible phase $Y\in \mathcal{P}$ and let $\Omega_Y$ be the set of matchings $M_R$ of $G_R$ such that $\mathcal{Y}(M_R)=Y$. The aggregate weight of matchings in $\Omega_Y$ is given by 
\[\prod_{e=(u,v)\in E}x^{(e)}_{R}(Y_{(e,u)}, Y_{(e,v)})=:W_Y.\]

Let $\mathcal{P}_1$ be the set of $Y\in \mathcal{P}$ such
that, for some edge   $\hat{e}=(\hat{u},\hat{v})\in E$,  we have $Y_{(\hat{e},\hat{u})}\neq Y_{(\hat{e},\hat{v})}$.
Let $\mathcal{P}_2 = \mathcal{P} \setminus \mathcal{P}_1$.
We next consider cases, depending on whether $Y\in \mathcal{P}_1$ or $Y\in \mathcal{P}_2$.
\begin{enumerate}
\item 
Suppose $Y \in \mathcal{P}_1$.
Let $\hat{e}=(\hat{u},\hat{v})$ 
be an edge in~$E$
such that $Y_{(\hat{e},\hat{u})}\neq Y_{(\hat{e},\hat{v})}$. 
In this case, using \eqref{eq:tv4tv444} and $|\gamma^{(e)}|-\epsilon\geq \epsilon$ and (hence) $|\gamma^{(e)}|+\epsilon\leq 2|\gamma^{(e)}|$, we obtain that 
\begin{align}
\frac{|W_Y|}{\prod_{e\in E}x^{(e)}_{R}(0,0)}&\leq \epsilon \prod_{e\in E; e\neq \hat{e}}(|\gamma^{(e)}|+\epsilon)\leq 2^m \epsilon \prod_{e\in E; e\neq 
\hat{e}}|\gamma^{(e)}|\notag\\
&\leq 2^m \epsilon |\gamma_0|^{m-2}\max\{|\gamma_0|, |R|\}\leq \epsilon'/5^n.\label{eq:ttvvr1}
\end{align}
\item 
Suppose $Y \in \mathcal{P}_2$.
Let $M$ be a subset of edges of $G$ such that $e\in M$ iff $Y_{(e,u)}= Y_{(e,v)}=1$. Since $Y\in \mathcal{P}_2$,   $M$ is a matching.  If $e^*\notin M$ then, using that $\frac{|a^n-b^n|}{|a-b|}=|\sum^{n-1}_{k=0}a^kb^{n-1-k}|\leq n\max\{|a|^{n-1},|b|^{n-1}\}$ for distinct real numbers $a,b$, we obtain using \eqref{eq:tv4tv444} that
\begin{align}
\Big|\frac{W_Y}{\prod_{e\in E}x^{(e)}_{R}(0,0)}-(\gamma_0)^{|M|}\Big|&=\Big|\prod_{e\in M}\frac{x^{(e)}_{R}(1,1)}{x^{(e)}_{R}(0,0)}-(\gamma_0)^{|M|}\Big|\notag\\
&\leq \max\Big\{\Big|(\gamma_0+\epsilon)^{|M|}-\gamma_0^{|M|}\Big|, \Big|(\gamma_0-\epsilon)^{|M|}-\gamma_0^{|M|}\Big|\Big\}\notag\\
&\leq \epsilon |M|(2|\gamma_0|)^{|M|-1}\leq \epsilon'/5^n.\label{eq:ttvvr2}
\end{align}
An analogous calculation shows that if $e^*\in M$ then with $A:=\prod_{e\in M; e\neq e^*}\frac{x^{(e)}_{R}(1,1)}{x^{(e)}_{R}(0,0)}$, we have that $|A-(\gamma_0)^{|M|-1}|\leq \epsilon |M|(2|\gamma_0|)^{|M|-2}$ and therefore
\begin{align}
\Bigg|\frac{W_Y}{\prod_{e\in E}x^{(e)}_{R}(0,0)}-(\gamma_0)^{|M|-1}R\Bigg|&=\Bigg|A\frac{x^{(e^*)}_{R}(1,1)}{x^{(e^*)}_{R}(0,0)}-(\gamma_0)^{|M|-1}R\Bigg|\notag\\
&=\Bigg|A\Big(\frac{x^{(e^*)}_{R}(1,1)}{x^{(e^*)}_{R}(0,0)}-R\Big)+R\Big(A-(\gamma_0)^{|M|-1}\Big)\Bigg|\notag\\
&\leq \epsilon(2|\gamma_0|)^{|M|-1}+ \epsilon |M| |R|(2|\gamma_0|)^{|M|-2}\leq \epsilon'/5^n.\label{eq:ttvvr3}
\end{align}
\end{enumerate}
Note that  
\[f_R=\frac{\sum_{Y\in \mathcal{P}} W_Y}{\prod_{e\in E}x^{(e)}_{R}(0,0)}\] and
\[\sum_{M\in \mathcal{M}_G; e^*\in M}(\gamma_0)^{|M|-1}=Z_{G\backslash\{u^*,v^*\}}(\gamma_0)=\alpha  \quad \text{and} \quad 
\sum_{M\in \mathcal{M}_G;e^*\notin M}(\gamma_0)^{|M|}=Z_{G-e^{*}}(\gamma_0)=\beta.
\]
So \eqref{eq:rtv4tv12}
is equivalent to
$$
\left|R 
 \sum_{M\in \mathcal{M}_G; e^*\in M}(\gamma_0)^{|M|-1}
+ 
\sum_{M\in \mathcal{M}_G;e^*\notin M}(\gamma_0)^{|M|}
- 
\frac{\sum_{Y\in \mathcal{P}_1} W_Y}{\prod_{e\in E}x^{(e)}_{R}(0,0)}
-
\frac{\sum_{Y\in \mathcal{P}_2} W_Y}{\prod_{e\in E}x^{(e)}_{R}(0,0)}
\right|\leq \epsilon'.
$$
Applying the triangle inequality and~\eqref{eq:ttvvr3},
the left-hand-side is at most
\[
 \sum_{Y\in \mathcal{P}_1}  \epsilon' / 5^n+\left|R 
 \sum_{M\in \mathcal{M}_G; e^*\in M}(\gamma_0)^{|M|-1}
+ 
\sum_{M\in \mathcal{M}_G;e^*\notin M}(\gamma_0)^{|M|}
-
\frac{\sum_{Y\in \mathcal{P}_2} W_Y}{\prod_{e\in E}x^{(e)}_{R}(0,0)}
\right|
.\]
Using once more  the triangle inequality, now in combination with   \eqref{eq:ttvvr2} and \eqref{eq:ttvvr3}, we obtain that the left-hand-side is at most  $\sum_{Y\in \mathcal{P}_1}  \epsilon' / 5^n+ \sum_{Y\in \mathcal{P}_2}  \epsilon' / 5^n\leq \epsilon$. This proves \eqref{eq:rtv4tv12}, therefore concluding the proofs of Theorems~\ref{thm:hard} and~\ref{thm:hardsign}.
\end{proof}

\section{Proof of Lemma~\ref{lem:expmatch} --- implementing edge activities}\label{sec:keylemmahard}
\subsection{Approximating the values of polynomials}
\begin{lemma}\label{lem:Polyapprox}
Let $P(x_1,\hdots, x_n)$ be a multivariate polynomial with rational coefficients.  Then, there is an algorithm which takes as input $P$ and rational numbers $\{a_i\}_{i=1,\hdots,n}$ and $\epsilon>0$, and outputs in time $poly(\size{P}, \size{a_1,\hdots,a_n, \epsilon})$ a rational number $\epsilon'>0$ such that for all real numbers $b_1,\hdots, b_n$ satisfying 
\[|b_1-a_1|,\hdots, |b_n-a_n|\leq \epsilon'\]
it holds that
\[\big|P(b_1,\hdots, b_n)- P(a_1,\hdots,a_n)\big|\leq \epsilon.\] 
\end{lemma}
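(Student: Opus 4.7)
The plan is to bound $|P(b) - P(a)|$ monomial-by-monomial using a standard telescoping identity for products, and then pick $\epsilon'$ small enough to absorb the resulting factors.

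First, I would write $P(x_1,\ldots,x_n) = \sum_\alpha c_\alpha x^\alpha$ where $\alpha=(\alpha_1,\ldots,\alpha_n)$ ranges over multi-indices with $|\alpha|:=\alpha_1+\cdots+\alpha_n \leq d:=\deg P$ and $c_\alpha \in \mathbb{Q}$. Let $R$ be a rational upper bound on $\max_i(|a_i|+1)$, which is readily computable from the inputs; if we insist $\epsilon'\leq 1$, then any $b_i$ with $|b_i-a_i|\leq\epsilon'$ satisfies $|b_i|\leq R$ too. Let $C:=\sum_\alpha |c_\alpha|$, which is again a rational number computable in time polynomial in $\size{P}$.

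The key elementary bound is the telescoping identity
\begin{equation*}
\prod_{j=1}^k b_{i_j} - \prod_{j=1}^k a_{i_j} = \sum_{j=1}^k (b_{i_j}-a_{i_j})\, b_{i_1}\cdots b_{i_{j-1}}\, a_{i_{j+1}}\cdots a_{i_k},
\end{equation*}
applied to the $k=|\alpha|$ factors making up $b^\alpha$ and $a^\alpha$. Since every $|a_i|,|b_i|\leq R$, this gives $|b^\alpha - a^\alpha|\leq |\alpha|\, R^{|\alpha|-1}\, \max_i|b_i-a_i|$. Summing over all monomials,
\begin{equation*}
|P(b)-P(a)| \leq \sum_\alpha |c_\alpha|\cdot |\alpha|\cdot R^{|\alpha|-1}\cdot \max_i|b_i-a_i| \leq d\, R^{d-1}\, C\, \max_i|b_i-a_i|.
\end{equation*}
Thus taking $\epsilon' := \min\{1,\ \epsilon/(d R^{d-1} C)\}$ (or any rational below this value) suffices, and $|b_i-a_i|\leq \epsilon'$ for all $i$ forces $|P(b)-P(a)|\leq \epsilon$.

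Finally, I would verify the running-time claim. The quantities $d$, $R$, $C$ are all computable in time polynomial in $\size{P}+\size{a_1,\ldots,a_n}$; the integer power $R^{d-1}$ has bit-length at most $(d-1)\size{R}$ which is polynomial in $\size{P}+\size{a_1,\ldots,a_n}$; and the single rational division $\epsilon/(d R^{d-1} C)$ is then polynomial in $\size{P}+\size{a_1,\ldots,a_n}+\size{\epsilon}$. No step here is an obstacle: the whole argument is routine, and the only mild care needed is to bound the size of the output $\epsilon'$, which is handled by the explicit formula above.
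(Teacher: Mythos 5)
Your proposal is correct and takes essentially the same route as the paper's proof: both establish a Lipschitz-type bound by telescoping the monomials (the paper first bounds $|a_i^{d_{i,j}}-b_i^{d_{i,j}}|$ and then telescopes over the variables, while you telescope over all $|\alpha|$ factors at once with the single uniform bound $R$), and then solve for $\epsilon'$; your closed-form choice $\epsilon'=\min\{1,\epsilon/(dR^{d-1}C)\}$ is just tidier bookkeeping than the paper's per-monomial inequalities. The only (cosmetic) caveat is the degenerate case $dR^{d-1}C=0$, i.e.\ $P$ constant, where your formula divides by zero and you should simply output $\epsilon'=1$; this is no worse than the paper's own requirement $\epsilon'\leq|a_i|$, which likewise needs a trivial fix when some $a_i=0$.
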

\begin{proof}
Suppose that $P(x_1,\hdots,x_n)=\sum^m_{j=1}c_j \prod^{n}_{i=1}x_i^{d_{i,j}}$ where the $c_j$'s are non-zero rational numbers and the $d_{i,j}$'s are non-negative integers. In time $poly(\size{P}, \size{a_1,\hdots,a_n, \epsilon})$, we can compute a rational  $\epsilon'>0$ satisfying the following inequalities for all $i\in \{1,\hdots, m\}$ and $j\in \{1,\hdots, n\}$:
\begin{gather}
\epsilon'\leq |a_i|,\qquad \epsilon'd_{i,j} 2^{d_{i,j}-1}\leq |a_i|, \label{eq:3fff}\\
\epsilon'd_{i,j} (2|a_{i}|)^{d_{i,j}-1}\prod_{k<i} |a_{k}|^{d_{k,j}} \prod_{k>i} \big(2|a_{k}|^{d_{k,j}}\big)\leq \epsilon/(m n |c_j|).\label{eq:4fff}
\end{gather}
Now let $b_1,\hdots, b_n$ be arbitrary reals such that $|b_i-a_i|\leq \epsilon'$ for all $i=1,\hdots, n$, we will show that  $|P(b_1,\hdots, b_n)- P(a_1,\hdots,a_n)|\leq \epsilon$. 

We first show that for all $i\in \{1,\hdots, m\}$ and $j\in \{1,\hdots, n\}$, it holds that 
\begin{equation}\label{eq:5fff}
\big|a_i^{d_{i,j}}-b_i^{d_{i,j}}\big|\leq \epsilon'd_{i,j} (2|a_{i}|)^{d_{i,j}-1}.
\end{equation}
This is clear if $d_{i,j}=0$; if $d_{i,j}$ is a strictly positive integer, then we have
\begin{align*}
\big|a_i^{d_{i,j}}-b_i^{d_{i,j}}\big|&=|a_i-b_i|\cdot \Big|\sum^{d_{i,j}-1}_{k=0}a_i^{k}b_i^{d_{i,j}-1-k}\Big|\leq \epsilon' \sum^{d_{i,j}-1}_{k=0}|a_i|^{k}|b_i|^{d_{i,j}-1-k}\\
&\leq \epsilon' \sum^{d_{i,j}-1}_{k=0}|a_i|^{k}(|a_i|+\epsilon')^{d_{i,j}-1-k}\leq  \epsilon'\sum^{d_{i,j}-1}_{k=0}|a_i|^{k}(2|a_i|)^{d_{i,j}-1-k}\leq \epsilon'd_{i,j}(2|a_i|)^{d_{i,j}-1},
\end{align*}
where in the second-to-last inequality we used that $\epsilon'\leq |a_i|$ from \eqref{eq:3fff}. This proves \eqref{eq:5fff}. From \eqref{eq:3fff} we also have that $\epsilon'd_{i,j} (2|a_{i}|)^{d_{i,j}-1}\leq |a_i|^{d_{i,j}}$, so \eqref{eq:5fff} yields that
\begin{equation}\label{eq:5fffb}
|b_i|^{d_{i,j}}\leq 2|a_{i}|^{d_{i,j}}.
\end{equation}

Moreover, for arbitrary reals $\{y_i,z_i\}_{i=1,\hdots,n}$ we have the identity  
$\prod_{i}y_i-\prod_{j} z_i=\sum_i (y_i-z_i)\prod_{k<i} y_k \prod_{k>i} z_k$, which   gives that for all $j\in \{1,\hdots, n\}$ it holds that 
\begin{align*}
\Big|\prod^{n}_{i=1}a_i^{d_{i,j}}-\prod^{n}_{i=1}b_i^{d_{i,j}}\Big|&\leq \sum^n_{i=1}\big|a_i^{d_{i,j}}-b_i^{d_{i,j}}\big|\prod_{k<i} |a_{k}|^{d_{k,j}} \prod_{k>i} |b_k|^{d_{k,j}}\\
&\leq \epsilon'\sum^n_{i=1}d_{i,j} (2|a_{i}|)^{d_{i,j}-1}\big|\prod_{k<i} |a_{k}|^{d_{k,j}} \prod_{k>i} \big(2|a_{k}|^{d_{k,j}}\big)\leq \epsilon/(m  |c_j|),
\end{align*}
where the second-to-last inequality follows from \eqref{eq:5fff} and \eqref{eq:5fffb} and the last inequality follows from the choice of $\epsilon'$ in \eqref{eq:4fff}. It now remains to observe that 
\[\big|P(a_1,a_2,\hdots, a_n)-P(b_1,b_2,\hdots, b_n)\big|\leq \sum^m_{j=1}|c_j|\cdot\Big|\prod^{n}_{i=1}a_i^{d_{i,j}}-\prod^{n}_{i=1}b_i^{d_{i,j}}\Big|\leq \epsilon,\]
therefore completing the proof.
\end{proof}
We can extend Lemma~\ref{lem:Polyapprox} to rational functions.
\begin{lemma}\label{lem:Ratioapprox}
Let $P(x_1,\hdots, x_n), Q(x_1,\hdots,x_n)$ be multivariate polynomials with rational coefficients and let $f=P/Q$. 

There is an algorithm which  takes as input $P$ and $Q$, rational numbers $\{a_i\}_{i=1,\hdots,n}$ satisfying $Q(a_1,\hdots,a_n)\neq 0$ and a rational number $\epsilon>0$, and outputs in time $poly(\size{P,Q},\size{a_1,\hdots,a_n, \epsilon})$ a rational number $\epsilon'>0$ such that for all real numbers $b_1,\hdots, b_n$ satisfying 
\[|b_1-a_1|,\hdots, |b_n-a_n|\leq \epsilon'\]
it holds that $Q(b_1,\hdots,b_n)\neq 0$ and $\big|f(b_1,\hdots, b_n)- f(a_1,\hdots,a_n)\big|\leq \epsilon$.
\end{lemma}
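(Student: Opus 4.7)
The plan is to reduce everything to Lemma~\ref{lem:Polyapprox}. First, I would evaluate the fixed rational numbers $P_0:=P(a_1,\hdots,a_n)$ and $Q_0:=Q(a_1,\hdots,a_n)$; these computations take time polynomial in $\size{P,Q}$ and $\size{a_1,\hdots,a_n}$, and by hypothesis $Q_0\neq 0$. The idea is to apply Lemma~\ref{lem:Polyapprox} three times (to the polynomial~$Q$ with two different target errors, and to the polynomial~$P$ with one target error) and then take $\epsilon'$ to be the minimum of the three resulting tolerances.

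The first application of Lemma~\ref{lem:Polyapprox} would be to~$Q$ with target error $|Q_0|/2$, yielding a rational $\epsilon_1'>0$ such that whenever $|b_i-a_i|\leq \epsilon_1'$ for all~$i$, we have $|Q(b_1,\hdots,b_n)-Q_0|\leq |Q_0|/2$; the reverse triangle inequality then gives $|Q(b_1,\hdots,b_n)|\geq |Q_0|/2$, which in particular establishes $Q(b_1,\hdots,b_n)\neq 0$ and lets us divide by it safely. Next, I would use the algebraic identity
\begin{equation*}
\frac{P(b)}{Q(b)}-\frac{P_0}{Q_0} \;=\; \frac{P(b)-P_0}{Q(b)} \;+\; \frac{P_0\bigl(Q_0-Q(b)\bigr)}{Q(b)\,Q_0},
\end{equation*}
which together with the bound $|Q(b)|\geq |Q_0|/2$ yields
\begin{equation*}
\Bigl|\tfrac{P(b)}{Q(b)}-\tfrac{P_0}{Q_0}\Bigr| \;\leq\; \tfrac{2}{|Q_0|}\bigl|P(b)-P_0\bigr| \;+\; \tfrac{2|P_0|}{|Q_0|^2}\bigl|Q(b)-Q_0\bigr|.
\end{equation*}
So it suffices to force each of $|P(b)-P_0|$ and $|Q(b)-Q_0|$ to be at most $\epsilon|Q_0|/4$ and at most $\epsilon|Q_0|^2/(4(|P_0|+1))$, respectively. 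Applying Lemma~\ref{lem:Polyapprox} to~$P$ with target error $\epsilon|Q_0|/4$ produces a rational $\epsilon_2'>0$, and applying Lemma~\ref{lem:Polyapprox} to~$Q$ with target error $\epsilon|Q_0|^2/(4(|P_0|+1))$ produces a rational $\epsilon_3'>0$. Setting $\epsilon':=\min\{\epsilon_1',\epsilon_2',\epsilon_3'\}$ accomplishes both conclusions of the lemma.

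The routine verification is that $\epsilon'$ is a positive rational computable in time $poly(\size{P,Q},\size{a_1,\hdots,a_n,\epsilon})$, which follows because each target error passed to Lemma~\ref{lem:Polyapprox} is an explicit rational of bounded size (using that $|Q_0|$ has size polynomial in the input, since it is the value of~$Q$ at rational points), and Lemma~\ref{lem:Polyapprox} itself runs in polynomial time. I do not foresee a genuine obstacle here: the only mild care needed is to avoid dividing by zero in the size bounds (handled by the ``$+1$'' guard on $|P_0|$) and to make sure the lower bound $|Q(b)|\geq |Q_0|/2$ is established \emph{before} the division-based error estimate is invoked, which is why $\epsilon_1'$ must be incorporated into the final minimum.
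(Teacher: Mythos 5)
Your proposal is correct and follows essentially the same route as the paper: both reduce to Lemma~\ref{lem:Polyapprox} with suitably chosen target errors, first secure the lower bound $|Q(b_1,\hdots,b_n)|\geq \tfrac12|Q(a_1,\hdots,a_n)|$ (hence nonvanishing), and then bound the difference of the two ratios by a standard algebraic decomposition; the paper merely packages the tolerances into a single $\eta$ and one invocation of Lemma~\ref{lem:Polyapprox} for each polynomial, whereas you use three invocations and a minimum, which is an immaterial bookkeeping difference.
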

\begin{proof}
Consider input polynomials $P$ and $Q$, and rational numbers  $a_1,\ldots,a_n$ and $\epsilon>0$ such that
$Q(a_1,\hdots,a_n)\neq 0$. 

Let $\eta:=\min\Big\{\frac{\epsilon|Q(a_1,\hdots,a_n)|^2}{2\big(|P(a_1,\hdots,a_n)|+|Q(a_1,\hdots,a_n)|\big)},\frac{1}{2}|Q(a_1,\hdots,a_n)|\Big\}$. Using the algorithm of Lemma~\ref{lem:Polyapprox}, we can compute in time $poly(\size{P,Q},\size{a_1,\hdots,a_n, \eta})=poly(\size{P,Q},\size{a_1,\hdots,a_n, \epsilon})$ a rational number $\epsilon'$ such that for all $b_1,\hdots, b_n$ satisfying $|b_1-a_1|,\hdots, |b_n-a_n|\leq \epsilon'$
it holds that 
\begin{equation}\label{eq:Qwwf}
\big|P(b_1,\hdots,b_n)-P(a_1,\hdots, a_n)\big|\leq \eta \quad\mbox{ and }\quad \big|Q(b_1,\hdots,b_n)-Q(a_1,\hdots, a_n)\big|\leq \eta.
\end{equation}
Let  $b_1,\hdots, b_n$ be arbitrary numbers satisfying  $|b_1-a_1|,\hdots, |b_n-a_n|\leq \epsilon'$, we will show that $Q(b_1,\hdots,b_n)\neq 0$ and  $|f(b_1,\hdots,b_n)-f(a_1,\hdots,a_n)|\leq \epsilon$. For convenience, let
\begin{equation*}
\begin{array}{lll}
N_1=P(a_1,\hdots,a_n),& & N_2=P(b_1,\hdots,b_n),\\
D_1=Q(a_1,\hdots,a_n),& & D_2=Q(b_1,\hdots,b_n).
\end{array}
\end{equation*}
Since $\eta\leq \frac{1}{2}|Q(a_1,\hdots,a_n)|$, we obtain from \eqref{eq:Qwwf} that
\begin{equation}\label{eq:ttv3t343fw4343}
|N_1-N_2|\leq \eta,\quad |D_1-D_2|\leq \eta,\quad |D_2|\geq \frac{1}{2}|D_1|.
\end{equation}
In particular, we have $D_2>0$. Moreover, we have that
\begin{align*}
\big|f(a_1,\hdots,a_n)-f(b_1,\hdots,b_n)\big|&= \frac{|N_1D_2-N_2D_1|}{|D_1D_2|}=\frac{|N_1(D_2-D_1)+D_1(N_1-N_2)|}{|D_1D_2|}\\
&\leq\frac{|N_1||D_2-D_1|+|D_1||N_2-N_1|}{|D_1D_2|}\\
&\leq \eta \frac{|N_1|+|D_1|}{|D_1||D_2|}\leq \eta \frac{2(|N_1|+|D_1|)}{|D_1|^2}\leq \epsilon,
\end{align*}
where in the last three inequalities we used \eqref{eq:ttv3t343fw4343} and  the choice of $\eta$. This concludes the proof.
\end{proof}

\subsection{Implementing vertex activities}
\begin{definition}\label{def:verteximpl}
Fix a real number $\gamma$. Given $\gamma$,
the graph $G=(V,E)$  is said to \emph{implement the vertex activity} $\lambda\in \Reals$ with \emph{accuracy} $\epsilon>0$ if there is vertex $u$ in $G$ such that  
\begin{enumerate}
\item \label{it:uone} $u $ has degree one in $G$, 
\item  \label{it:123ps1}  $Z_{G}(\gamma)\neq 0$ and  $\displaystyle \Big|\frac{Z_{G,\neg u}(\gamma)}{Z_{G}(\gamma)}-\lambda\Big|\leq \epsilon $.
\end{enumerate}
We call $u$ the terminal of $G$. If Item~\ref{it:123ps1} holds with $\epsilon=0$, we say that $G$ implements $\lambda$ (perfectly).
\end{definition}

\newcommand{\statelemexpvertmatch}{Let $\Delta\geq 3$ be an integer and $\gamma<-\frac{1}{4(\Delta-1)}$ be a rational number.

There is an algorithm which, on input a rational number $\lambda$ and $\eps>0$, outputs in $poly(\size{\lambda,\eps})$ time a bipartite graph $G$ of maximum degree at most  $\Delta$ that implements the vertex activity $\lambda$ with accuracy $\eps$.}
\begin{lemma}\label{lem:expvertmatch}
\statelemexpvertmatch
\end{lemma}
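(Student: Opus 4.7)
\textbf{Proof plan for Lemma~\ref{lem:expvertmatch}.}

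The plan is to implement the target vertex activity $\lambda$ by a bipartite tree gadget, exploiting the chaoticity of the tree recurrence that arises in the regime $\gamma<-1/(4(\Delta-1))$. For each $k\geq 0$, let $T_k$ be the complete $(\Delta-1)$-ary tree of depth $k$, rooted at $v_k$, and set $R_k := Z_{T_k,\neg v_k}(\gamma)/Z_{T_k}(\gamma)$ (with values on the Riemann sphere $\Riem$ when a denominator vanishes). A direct calculation on matchings gives $R_0=1$ and $R_{k+1}=F(R_k)$, where $F(r):=1/(1+(\Delta-1)\gamma r)$. The map $F$ is a M\"obius transformation with discriminant $1+4(\Delta-1)\gamma<0$, so it has complex conjugate fixed points $\alpha,\bar\alpha$; conjugating by the M\"obius map that sends $\alpha\to 0$ and $\bar\alpha\to\infty$ turns $F$ into a rotation $z\mapsto e^{i\theta}z$. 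Under this conjugation, $\Reals\cup\{\infty\}$ is an invariant circle on which $F$ acts as a rotation by $\theta$, so whenever $\theta/\pi$ is irrational --- which, by standard transcendence arguments, holds for all but countably many rational $\gamma$ in the required range --- the orbit $\{R_k\}$ is dense in $\Reals\cup\{\infty\}$.

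To realize a target $\lambda$, I would form the gadget $G(d;k_1,\dots,k_d)$ obtained by attaching to a fresh root $v^*$ of degree $d+1\leq\Delta$: (i) a degree-$1$ pendant terminal $u$, and (ii) $d$ disjoint copies of $T_{k_1},\dots,T_{k_d}$ at the remaining neighbours of $v^*$. The resulting graph is a bipartite tree of maximum degree at most $\Delta$, and a standard two-step application of the matching recurrence at $u$ and at $v^*$ yields the implemented vertex activity
\begin{equation*}
\lambda_{\mathrm{impl}}(d;k_1,\dots,k_d) \;=\; \frac{1}{1+\gamma\,q}, \qquad q\;=\;\frac{1}{1+\gamma\sum_{i=1}^d R_{k_i}}.
\end{equation*}
Inverting this expression converts the task of implementing $\lambda$ within accuracy $\eps$ into that of approximating a derived real target $s^*$ by $\sum_{i=1}^d R_{k_i}$ within an accuracy $\eps'$, where Lemma~\ref{lem:Ratioapprox} produces $\eps'$ from $\eps$ with $\size{\eps'}=\mathrm{poly}(\size{\lambda,\eps})$.

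The main obstacle --- and the crux of the argument --- is to show that such $k_1,\dots,k_d$ of size $\mathrm{poly}(\size{\lambda,\eps})$ exist and can be found efficiently. Because $F$ is an isometry in its invariant metric, iterating $F$ alone does not give any exponential contraction toward $s^*$: a single-variable iteration yields only an $O(1/k)$ discrepancy after $k$ steps, which is too coarse for $\eps$-accuracy in $\mathrm{poly}(\log(1/\eps))$ size. The remedy I would pursue is to exploit the multivariate combining at the root $v^*$ together with a recursive nesting of the construction: combine $\Delta-1\geq 2$ independently chosen orbit points via the nonlinear rule above, then use the resulting ratio as input to further copies of $F$ (by attaching subtrees of the form $G(d;k_1,\dots,k_d)$ as children of a new root). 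At depth $L$ this gives exponentially many distinct achievable values as a function of the total gadget size, and a careful quantitative density argument is needed to certify that these values $\eps$-cover a neighbourhood of $s^*$ with $L=\mathrm{poly}(\log(1/\eps))$. The finitely many exceptional rational $\gamma$ for which $\theta/\pi$ is rational are handled by a small perturbation of the base gadget (e.g.\ replacing $T_k$ by a tree in which the root has a pendant leaf in addition to its $(\Delta-1)$-ary children), which shifts $\theta$ off every rational multiple of $\pi$ while preserving bipartiteness and the degree bound. Once the exponents are identified by search over a moderately-sized candidate set, the gadget is assembled and its exact rational activity is computed by tree dynamic programming, and Lemmas~\ref{lem:Polyapprox}--\ref{lem:Ratioapprox} propagate the intermediate errors to guarantee $|\lambda_{\mathrm{impl}}-\lambda|\leq\eps$.
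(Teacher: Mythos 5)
There is a genuine gap, and it sits exactly where you flag it yourself. Your construction up to that point (density of the orbit $R_k$ of the $(\Delta-1)$-ary tree recurrence, combining several subtrees at a root, pushing errors through Lemma~\ref{lem:Ratioapprox}) only reproduces the \emph{non-uniform} statement of Lemma~\ref{lem:vertmatch}: for every $\lambda$ and $\eps$ a suitable gadget exists, with no control on its size in terms of $\size{\eps}$. The whole content of Lemma~\ref{lem:expvertmatch} is the exponential precision, i.e.\ accuracy $\eps$ with a gadget of size $\mathrm{poly}(\size{\lambda,\eps})=\mathrm{poly}(\log(1/\eps))$, and your proposed remedy --- nest the multivariate combination to depth $L$, observe that this produces exponentially many achievable values, and invoke ``a careful quantitative density argument'' --- is not an argument: exponentially many points need not $\eps$-cover anything (they can cluster), and no mechanism is given for locating the right choice of exponents efficiently. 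The paper resolves this with the ``contracting maps that cover'' technique of \cite{complexhard} (Lemma~\ref{lem:exp2}): one fixes an attracting fixed point $x_0$ of an auxiliary map $x\mapsto 1/(1+\gamma(\lambda+x))$, uses the crude Lemma~\ref{lem:vertmatch} only to implement, with \emph{constant} accuracy, a fixed finite family of activities $\lambda_1,\dots,\lambda_t$ near a $\delta$-net of an interval around $\lambda$, and proves that the resulting maps $\Phi_i$ are uniformly contracting on a fixed interval $I\ni x_0$ while their images cover $I$. Contraction plus covering is what turns density into an efficient algorithm: one can reach any target in $I$ within $\eps$ by a composition of length $O(\log(1/\eps))$, found greedily via the inverse maps, and each composition step corresponds to attaching one constant-size gadget, giving a polynomial-size graph. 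Nothing in your sketch plays the role of this contraction/covering structure, so the crucial step of the lemma is missing; note also that the initial data of your scheme are orbit values $R_{k_i}$, which are at best constant-accuracy handles, so some argument like the paper's (which is robust to constant-accuracy implementations of the $\lambda_i$, because the constants in Lemma~\ref{lem:exp2} do not depend on the input) is unavoidable.

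A second, smaller problem is the exceptional set. ``All but countably many rational $\gamma$'' is vacuous (the rationals are countable); the correct dichotomy is $\gamma\in\mathcal{B}_\Delta$ versus not, and by Niven's theorem there \emph{are} rational values in $\mathcal{B}_\Delta$, e.g.\ $\gamma=-\tfrac{1}{\Delta-1}$, $-\tfrac{1}{2(\Delta-1)}$, $-\tfrac{1}{3(\Delta-1)}$, for which $Z_{T_n}(\gamma)=0$ for some $n$ and your density argument breaks down. Your fix --- perturb the base gadget so that the rotation angle becomes an irrational multiple of $\pi$ --- is unsupported: modifying the gadget replaces the recurrence by another M\"obius map whose multiplier is again some specific algebraic number, and you give no reason why its angle avoids rational multiples of $\pi$. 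The paper instead handles these $\gamma$ inside Lemma~\ref{lem:vertmatch} by extracting a tree that \emph{perfectly} implements the edge activity $-1$ or $-1/4$ from a minimal zero of the matching polynomial, and then building on the dedicated constructions for $\gamma=-1$ and $\gamma=-1/4$ (Lemmas~\ref{lem:gammaminusone} and~\ref{lem:gammaminusonequarter}); since Lemma~\ref{lem:exp2} only needs constant-accuracy implementations of fixed constants, this suffices for the bootstrap. You would need either this or a comparably concrete treatment of those values.
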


Assuming Lemma~\ref{lem:expvertmatch} for now, we can conclude Lemma~\ref{lem:expmatch}.
\begin{proof}[Proof of Lemma~\ref{lem:expmatch}]
Let $\gamma'\leq  0$ be an arbitrary rational edge activity that we wish to implement with some accuracy $\epsilon>0$. We can compute in time $poly(\size{\gamma',\epsilon})$ a rational $\gamma''<0$ such that $|\gamma'+(\gamma'')^{2}|\leq \epsilon/2$. Thus, to implement the edge activity $\gamma'$ with accuracy $\epsilon$, it suffices to implement the edge activity $-(\gamma'')^{2}$ with accuracy $\epsilon/2$ (this way we will avoid irrational square roots in the following argument). We begin by specifying some parameters that will be important later. 

Let $\lambda_1=-\gamma''/\gamma, \lambda_2=1/\gamma'', \lambda_3=-\gamma''/\gamma$ and note that 
\[1+\gamma\lambda_1 \lambda_2=1+\gamma\lambda_2 \lambda_3=0,\quad  \frac{\gamma^2\lambda_1\lambda_3}{1+\gamma\lambda_1\lambda_2+\gamma \lambda_2\lambda_3}=-(\gamma'')^2.\]
Consider the multivariate polynomials 
\begin{gather*}
P_1(x_1,x_2,x_3)=\gamma x_1(1+\gamma x_2 x_3),\quad P_2(x_1,x_2,x_3)=\gamma^2x_1x_3,\quad P_3(x_1,x_2,x_3)=\gamma x_3(1+\gamma x_1 x_2)\\
Q(x_1,x_2,x_3)=1+\gamma x_1 x_2+\gamma x_2 x_3.
\end{gather*}
For $i\in \{1,2,3\}$, let $f_i=P_i/Q$ so that 
\[f_1(\lambda_1,\lambda_2,\lambda_3)=f_3(\lambda_1,\lambda_2,\lambda_3)=0, \quad f_2(\lambda_1,\lambda_2,\lambda_3)=-(\gamma'')^2.\]
Using the algorithm of Lemma~\ref{lem:Ratioapprox}, we can compute in $poly(\size{P_1,P_2,P_3,Q}, \size{\lambda_1,\lambda_2,\lambda_3,\epsilon})=poly(\size{\gamma',\epsilon})$ time a rational number $\epsilon'$ such that for all $\lambda_1',\lambda_2',\lambda_3'$ satisfying $|\lambda_1-\lambda_1'|,|\lambda_2-\lambda_2'|, |\lambda_3-\lambda_3'|\leq \epsilon'$ it holds that 
\begin{equation}\label{eq:rc3r664c3}
Q(\lambda_1',\lambda_2',\lambda_3')\neq 0 \mbox{\ \ and \ \ }\big|f_i(\lambda_1',\lambda_2',\lambda_3')-f_i(\lambda_1,\lambda_2,\lambda_3)\big|\leq \epsilon/2 \mbox{\ \ for all $i\in \{1,2,3\}$.}
\end{equation}

Let $i\in\{1,2,3\}$. Using the algorithm of Lemma~\ref{lem:expvertmatch},  we can construct in $poly(\size{\lambda_i,\epsilon'})=poly(\size{\gamma',\epsilon})$ time  a bipartite graph $H_i$ of maximum degree at most $\Delta$ that implements the vertex activity $\lambda_i$ with accuracy $\epsilon'$. Let $y_i$ be the terminal of $H_i$ and let 
\begin{equation}\label{eq:43frf4f3}
q_i:=Z_{H_i,\neg y_i}(\gamma),\quad z_i:=Z_{H_i}(\gamma)\quad \mbox{ so that } z_i\neq 0 \mbox{ and } \Big|\frac{q_i}{z_i}-\lambda_i\Big|\leq \epsilon'.
\end{equation}
Let $G$ be the graph obtained by taking the disjoint union of $H_1,H_2,H_3$, two new vertices $u,v$ and adding the edges $(u,y_1), (y_1,y_2), (y_2,y_3),(y_3,v)$. Note that $G$ is bipartite and $u,v$ lie in the same part of the vertex partition of $G$. Then, we have that
\begin{equation}\label{eq:ZGuvall}
\begin{array}{llll}
Z_{G,u,v}(\gamma)=\gamma^2 q_1 z_2 q_3,&  & & Z_{G, \neg u, \neg v}(\gamma)=z_1 z_2 z_3 +\gamma (q_1 q_2 z_3+z_1 q_2 q_3),\\
Z_{G,  u,\neg v}(\gamma)=\gamma q_1 (z_2 z_3+\gamma q_2 q_3),& & & Z_{G, \neg u,v}(\gamma)=\gamma q_3 (z_1 z_2+\gamma q_1 q_2).
\end{array}
\end{equation}
For illustration, we next justify the expression for $Z_{G,u,\neg v}(\gamma)$. Let $M$ be a matching such that $u$ is matched but not $v$.  Note that the only way  that $u$ can be matched in $G$ is if $(u,y_1)\in M$. Since $v$ is unmatched in $M$ and $(u,y_1)\in M$, the edges $(y_1,y_2)$ and $(y_3,v)$ do not belong to $M$. The remaining edge $(y_2,y_3)$ can either belong to $M$ or not. The aggregate contribution of matchings $M$ with $(y_2,y_3)\in M$ is $\gamma^2 q_1 q_2 q_3$: the factor $\gamma^2$ comes from the edges $(u,y_1)$ and $(y_2,y_3)$  and the factor $q_1q_2q_3$ comes from the fact that $y_1,y_2,y_3$ are unmatched in $H_1,H_2,H_3$ respectively. The aggregate contribution of matchings $M$ with $(y_2,y_3)\notin M$ is $\gamma q_1 z_2 z_3$: the factor $\gamma$ comes from the edge $(u,y_1)$, the factor $q_1$ comes from the fact that $y_1$ is unmatched in $H_1$ and the factor $z_2z_3$ accounts for the weight of all matchings in $H_2, H_3$ (note that there is no restriction in this case on the vertices $y_2,y_3$ whether they are matched). This proves the expression for $Z_{G,u,\neg v}(\gamma)$ in \eqref{eq:ZGuvall}, the remaining expressions therein can be justified analogously.

Let $\lambda_1'=q_1/z_1, \lambda_2'=q_2/z_2, \lambda_3'=q_3/z_3$. Then, using \eqref{eq:ZGuvall}, we see that
\begin{equation}\label{eq:tvt4vr}
Z_{G, \neg u, \neg v}(\gamma)=z_1z_2z_3\Big(1+\gamma\frac{q_1q_2}{z_1z_2}+\gamma\frac{q_2q_3}{z_2z_3}\Big)=z_1z_2z_3\cdot  Q(\lambda_1',\lambda_2',\lambda_3')\neq 0,
\end{equation}
where the disequality follows from $z_1z_2z_3\neq 0$ (cf. \eqref{eq:43frf4f3}) and $Q(\lambda_1',\lambda_2',\lambda_3')\neq 0$ from \eqref{eq:rc3r664c3}; note that the $\lambda_j'$ fall within the scope of \eqref{eq:rc3r664c3} since \eqref{eq:43frf4f3} guarantees that $|\lambda_1-\lambda_1'|,|\lambda_2-\lambda_2'|, |\lambda_3-\lambda_3'|\leq \epsilon'$. Using again \eqref{eq:ZGuvall}, we  have that
\begin{equation}\label{eq:tvt4vrv}
\Big|\frac{Z_{G,u,v}(\gamma)}{Z_{G, \neg u, \neg v}(\gamma)}+(\gamma'')^2\Big|=\Big|\frac{\gamma^2\frac{q_1 q_3}{z_1 z_3}}{1+\gamma (\frac{q_1 q_2}{z_1 z_2}+\frac{q_2q_3}{z_2z_3})}+(\gamma'')^2\Big|=\big|f_2(\lambda_1',\lambda_2',\lambda_3')-f_2(\lambda_1,\lambda_2,\lambda_3)\big|\leq \epsilon/2,
\end{equation}
where the last inequality follows from \eqref{eq:rc3r664c3} for $i=2$. Analogously, we obtain that
\begin{equation}\label{eq:tvt4vrve}
\begin{aligned}
\Big|\frac{Z_{G,u,\neg v}(\gamma)}{Z_{G, \neg u, \neg v}(\gamma)}\Big|&=\Big|\frac{\gamma \frac{q_1}{z_1}\big(1+\gamma\frac{q_2 q_3}{z_2z_3}\big)}{1+\gamma \big(\frac{q_1 q_2}{z_1 z_2}+\frac{q_2q_3}{z_2z_3}\big)}\Big|=\big|f_1(\lambda_1',\lambda_2',\lambda_3')-f_1(\lambda_1,\lambda_2,\lambda_3)\big|\leq \epsilon/2,\\
\Big|\frac{Z_{G,\neg u,v}(\gamma)}{Z_{G, \neg u, \neg v}(\gamma)}\Big|&=\Big|\frac{\gamma \frac{q_3}{z_3}\big(1+\gamma\frac{q_1 q_2}{z_1z_2}\big)}{1+\gamma \big(\frac{q_1 q_2}{z_1 z_2}+\frac{q_2q_3}{z_2z_3}\big)}\Big|=\big|f_3(\lambda_1',\lambda_2',\lambda_3')-f_3(\lambda_1,\lambda_2,\lambda_3)\big|\leq \epsilon/2.
\end{aligned}
\end{equation}
Combining \eqref{eq:tvt4vr}, \eqref{eq:tvt4vrv} and \eqref{eq:tvt4vrve}, we obtain that the bipartite graph $G$ (with terminals $u,v$) implements the edge activity $-(\gamma'')^2$ with accuracy $\epsilon/2$. Since $|\gamma'+(\gamma'')^2|\leq \epsilon/2$, we therefore obtain that $G$ implements  the edge activity $\gamma'$ with accuracy $\epsilon$, as wanted. This concludes the proof of Lemma~\ref{lem:expmatch}.
\end{proof}

To prove Lemma~\ref{lem:expvertmatch}, we will first  need to prove the following lemma.

\newcommand{\statelemvertmatch}{Let $\Delta\geq 3$ be an integer and $\gamma<-\frac{1}{4(\Delta-1)}$ be a real number.

For every $\lambda\in \mathbb{R}$ and $\eps>0$, there is a bipartite graph $G$ of maximum degree at most  $\Delta$ that implements the vertex activity $\lambda$ with accuracy $\eps$.}
\begin{lemma}\label{lem:vertmatch}
\statelemvertmatch
\end{lemma}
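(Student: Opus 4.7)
The plan is to implement arbitrary real vertex activities by rooted tree gadgets, exploiting the elliptic dynamics of the matching recurrence below the Heilmann--Lieb threshold.

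\textbf{Reduction.} Given a target $\lambda \in \mathbb{R}$ and accuracy $\eps > 0$, I will attach a fresh degree-$1$ terminal $u$ by a single edge to the root $w$ of a carefully chosen bipartite tree $T$ with $\deg_T(w) \le \Delta - 1$. A direct computation with the matching polynomial shows that the resulting gadget $G$ satisfies $p_u(G,\gamma) = 1/(1 + \gamma\, p_w(T,\gamma))$, which is a Möbius bijection of $\hat{\mathbb{R}} := \mathbb{R} \cup \{\infty\}$. The lemma thus reduces to establishing density in $\hat{\mathbb{R}}$ of the set
\[
 S := \{\, p_w(T,\gamma) : T \text{ is a tree rooted at } w \text{ with } \deg_T(w) \le \Delta - 1\,\},
\]
followed by a continuity argument (the Möbius map is a bi-Lipschitz homeomorphism away from its pole) to transfer $\eps$-accuracy from $w$ to $u$.

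\textbf{Elliptic recurrence and generic density.} Set $d := \Delta - 1$. The complete $d$-ary tree of depth $k$ rooted at $w$ has root ratio $r_k = \phi^k(1)$, where $\phi(r) := 1/(1 + \gamma d r)$ and $r_0 = 1$. The associated matrix $\bigl(\begin{smallmatrix}0 & 1 \\ \gamma d & 1\end{smallmatrix}\bigr)$ has trace $1$ and determinant $-\gamma d > 1/4$, so its eigenvalues are complex conjugates. Consequently $\phi$ is elliptic on $\hat{\mathbb{R}}$: it is an orientation-preserving homeomorphism with no real fixed points, and is topologically conjugate to a rigid rotation by some angle $\theta = \theta(\gamma, \Delta) \in (0,2\pi)$. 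When $\theta/(2\pi)$ is irrational, the orbit $\{r_k\}_{k\ge 0}$ is dense in $\hat{\mathbb{R}}$ by the Poincaré rotation theorem, immediately giving $S$ dense.

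\textbf{Breaking periodicity (main obstacle).} The substantive difficulty is the resonant case $\theta/(2\pi) \in \mathbb{Q}$, where the pure $\phi$-orbit is a finite cycle and is not dense. I plan to resolve this by enlarging the gadget family: let the root $w$ have $d$ children, of which $d-1$ carry a complete $d$-ary subtree of depth $k$ while the last carries a complete $d$-ary subtree of depth $k'$. The resulting map $\psi(r) = 1/(1 + \gamma((d-1)r + r_{k'}))$ is a second Möbius transformation whose (possibly complex) fixed points depend analytically on the parameter $r_{k'} = \phi^{k'}(1)$, so they cannot coincide with those of $\phi$ for every $k'$. Hence for some $k'$ the two maps have distinct fixed-point sets, so they generate a non-elementary subgroup of $\operatorname{PSL}(2,\mathbb{R})$, whose action on $\hat{\mathbb{R}}$ has dense orbits. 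Making this last step rigorous --- ruling out pathological coincidences of fixed-point sets and invoking the classification of two-generator subgroups of $\operatorname{PSL}(2,\mathbb{R})$ --- is the main obstacle of the proof; the rest is routine. Bipartiteness is automatic throughout, since all gadgets are trees and the terminal $u$ can be placed in the part of the bipartition opposite to $w$.
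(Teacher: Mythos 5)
Your reduction and your treatment of the non-resonant case are essentially the paper's Case I: the root-ratio recurrence of the $(\Delta-1)$-ary tree is an elliptic M\"obius map, its orbit is dense in $\hat{\mathbb{R}}$ when the rotation number is irrational, and a final edge attaching the terminal applies $x\mapsto 1/(1+\gamma x)$ (this is exactly Lemma~\ref{lem:notbad}, Item 1, together with Lemma~\ref{lem:easydensity}). The problem is the resonant case, i.e.\ $\gamma\in\mathcal{B}_\Delta$, which you correctly identify as the crux but do not prove, and your sketched resolution has genuine defects beyond being unfinished. First, the group-theoretic fact you invoke is false as stated: a non-elementary subgroup of $\mathrm{PSL}(2,\mathbb{R})$ need not have dense orbits on $\hat{\mathbb{R}}$ --- two elliptic elements of finite order with sufficiently separated fixed points generate a discrete free product whose limit set is a Cantor set, and boundary orbits then accumulate only on that Cantor set. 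You would need non-discreteness (or a lattice), which requires a separate argument. Second, your gadgets only realize forward compositions of the maps $\phi,\psi$, i.e.\ a semigroup, not the group; inverses are available only when the relevant elements are elliptic (of finite order or irrational rotation number), so even dense group orbits would not immediately finish the proof. Third, the ``analytic dependence on $r_{k'}$'' argument is vacuous here: in the resonant case the $\phi$-orbit $\{r_{k'}\}$ is a finite cycle (which moreover passes through $\infty$, since $Z_{T_{k'}}(\gamma)=0$ for some depths), so only finitely many candidate maps $\psi$ exist and no perturbative or analyticity argument rules out the bad coincidences; one would need an explicit verification that some admissible $\psi$ together with $\phi$ generates a non-discrete, non-elementary group, and then still handle the semigroup issue.

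For contrast, the paper's route in the resonant case avoids all of this by turning the degeneracy into an asset: the periodicity forces $Z_{T_{n_0}}(\gamma)=0$ for some complete tree, and a minimal zero tree is used to build a two-terminal tree that \emph{perfectly} implements the edge activity $-1$ or $-1/4$ (Lemma~\ref{lem:notbad}, Item 2); the activity $-1/4$ is then boosted to $-1$ (Lemma~\ref{lem:gammaminusonequarter}), every rational vertex activity is implemented exactly at $\gamma=-1$ by an explicit Euclidean-algorithm-style construction on trees of maximum degree $3$ (Lemma~\ref{lem:gammaminusone}), and these constructions are transported back to the original $\gamma$ by series-composing the edge gadget (Lemma~\ref{lem:impleme12ntation}). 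Unless you can supply the non-discreteness and semigroup arguments above (or switch to an argument of the paper's type), your proof of the lemma is incomplete precisely on the set of activities $\mathcal{B}_\Delta$ that the paper's hardness result for bounded connective constant later relies on.
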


\subsection{Proof of Lemma~\ref{lem:vertmatch}}
To prove Lemma~\ref{lem:vertmatch}, we will need to consider two cases for the value of $\gamma$. Namely, for an integer $\Delta\geq 3$,  the following subset of the negative reals will be relevant:
\begin{equation}\label{def:bdelta}
\mathcal{B}_{\Delta}=\Big\{\gamma\in \mathbb{R}\mid \gamma=-\frac{1}{4(\Delta-1)(\cos \theta)^2} \mbox{ for some $\theta\in (0,\pi/2) $ which is a rational multiple of $\pi$}\Big\}.
\end{equation}

\begin{lemma}\label{lem:notbad}
Let $\Delta\geq 3$ be an integer and $\gamma<-\frac{1}{4(\Delta-1)}$. For an integer $n\geq 0$, let $T_n$ be the $(\Delta-1)$-ary tree of height $n$ rooted at $\rho_n$.
\begin{enumerate}
\item \label{it:33rc3d22d} If $\gamma\notin \mathcal{B}_{\Delta}$, then $Z_{T_n}(\gamma)\neq 0$ for all $n\geq 0$. Moreover, for every $\lambda \in \mathbb{R}$ and $\epsilon>0$, there exists $n$ such that
\[\Big|\frac{Z_{T_n,\neg \rho_n}(\gamma)}{Z_{T_n}(\gamma)}-\lambda\Big|\leq \epsilon.\]
\item \label{it:33rc3d22e} If $\gamma\in \mathcal{B}_{\Delta}$, there exists $n$ such that  $Z_{T_n}(\gamma)=0$. Moreover, there exists a tree of maximum degree at most $\Delta$ which implements either the edge activity $\gamma'=-1$ or the edge activity $\gamma'=-1/4$ (perfectly).
\end{enumerate}
\end{lemma}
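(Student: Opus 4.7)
My plan is to encode the matching-polynomial ratios on $T_n$ through a M\"obius iteration on the Riemann sphere, linearise it to a rotation by angle $-2\theta$, and then split into the irrational case (dense orbit) and the rational case (periodic orbit). Set $x_n:=Z_{T_n,\neg\rho_n}(\gamma)/Z_{T_n}(\gamma)$; as in Section~\ref{sec:matchingmethod}, $x_n=1/(1+\gamma(\Delta-1)x_{n-1})$ with $x_0=1$. Rescaling $y_n:=(\Delta-1)\gamma x_n$ gives $y_n=f(y_{n-1})$ for the M\"obius map $f(y):=a/(1+y)$, with $a:=(\Delta-1)\gamma<-1/4$ and $y_0=a$. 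Parameterise $a=-1/(4\cos^2\theta)$ for the unique $\theta\in(0,\pi/2)$; then $\gamma\in\mathcal{B}_\Delta$ is exactly the condition $\theta/\pi\in\mathbb{Q}$. From the factorisation $Z_{T_{n+1}}(\gamma)=Z_{T_n}(\gamma)^{\Delta-1}(1+y_n)$, induction yields $Z_{T_m}(\gamma)=0\Leftrightarrow y_{m-1}=-1$.

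The linearisation is direct. The fixed points of $f$ are $y^\pm=(-1\pm i\tan\theta)/2$ with $1+y^\pm=e^{\pm i\theta}/(2\cos\theta)$, so $f'(y^\pm)=-a/(1+y^\pm)^2=e^{\mp 2i\theta}$. The conjugating map $g(y):=(y-y^+)/(y-y^-)$ satisfies $g\circ f=e^{-2i\theta}\cdot g$; using $a-y^\pm=e^{\mp 2i\theta}/(4\cos^2\theta)$ and $-1-y^\pm=-e^{\pm i\theta}/(2\cos\theta)$, we get $g(a)=e^{-4i\theta}$ and $g(-1)=e^{2i\theta}$. Therefore $g(y_n)=e^{-2i(n+2)\theta}$, and the pivotal criterion is that $y_{m-1}=-1$ iff $(m+2)\theta/\pi\in\mathbb{Z}$.

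For part~\ref{it:33rc3d22d}, if $\theta/\pi\notin\mathbb{Q}$ the integrality condition fails for every $m\geq 0$, so $Z_{T_n}(\gamma)\neq 0$ for all $n$. The real line is the Apollonius circle $|y-y^+|=|y-y^-|$, mapped by $g$ onto $|z|=1$, hence $f$-invariant, so each $y_n$ is a finite real. By Weyl equidistribution, $\{e^{-2i(n+2)\theta}\}$ is dense on $|z|=1$, and pulling back through $g^{-1}$ gives density of $\{y_n\}$ in $\widehat{\mathbb{R}}$, hence in $\mathbb{R}$. Selecting $n$ with $|y_n-(\Delta-1)\gamma\lambda|\leq|(\Delta-1)\gamma|\varepsilon$ then yields $|x_n-\lambda|\leq\varepsilon$.

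For part~\ref{it:33rc3d22e}, write $\theta=p\pi/q$ in lowest terms; $0<p/q<1/2$ and $\gcd(p,q)=1$ force $q\geq 3$, and $m=q-2$ makes $(m+2)\theta/\pi=p\in\mathbb{Z}$, yielding $Z_{T_{q-2}}(\gamma)=0$. For the edge-activity implementation, I plan to exploit the triple-gadget $u\!-\!y_1\!-\!y_2\!-\!y_3\!-\!v$ that appears in the proof of Lemma~\ref{lem:expmatch}: when vertex gadgets $H_i$ implement activities $\lambda_i$ satisfying $1+\gamma\lambda_1\lambda_2=1+\gamma\lambda_2\lambda_3=0$, a direct calculation gives vanishing off-diagonal ratios and diagonal ratio $-1/\lambda_2^2$. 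So it suffices to construct, for each $\gamma\in\mathcal{B}_\Delta$, perfect vertex gadgets realising some triple with $\lambda_2\in\{\pm 1,\pm 2\}$ and $\lambda_1=\lambda_3=-1/(\gamma\lambda_2)$ --- giving edge activity $-1$ or $-1/4$ respectively. The zero tree $T_{q-2}$ is the key primitive here: attaching it through its root to an auxiliary vertex $w$ forces $w$ to be matched through the gadget in every non-vanishing matching (otherwise the $T_{q-2}$-contribution equals $Z_{T_{q-2}}(\gamma)=0$), which enables exact control of matching polynomial ratios. The main obstacle is this final combinatorial step: while the M\"obius analysis hands us $Z_{T_{q-2}}(\gamma)=0$ essentially for free, concretely building trees whose matching polynomials realise the required vertex-activity triples \emph{exactly} for every $\gamma\in\mathcal{B}_\Delta$ requires a case analysis on $q$ and a careful gadget design, in contrast to the relatively routine density argument of part~1.
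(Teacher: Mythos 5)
Your handling of Item~\ref{it:33rc3d22d} and of the first claim of Item~\ref{it:33rc3d22e} is correct and is essentially the paper's argument in a different (arguably cleaner) parameterisation: the paper proves by induction that $u_n/z_n=2\cos\theta\cdot\sin((n+1)\theta)/\sin((n+2)\theta)$ and cites \cite{GGS} for density, whereas you conjugate the M\"obius recursion to the rotation $z\mapsto e^{-2i\theta}z$ and invoke equidistribution; your fixed points, multipliers, and the values $g(a)=e^{-4i\theta}$, $g(-1)=e^{2i\theta}$ check out, and your zero criterion $(m+2)\theta/\pi\in\mathbb{Z}$ coincides with the paper's condition $\sin((n_0+2)\theta)=0$. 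One small caveat: the equivalence ``$Z_{T_m}(\gamma)=0\Leftrightarrow y_{m-1}=-1$'' only makes sense up to the first zero (the ratios $y_n$ must be defined), but since in the rational case you take the minimal $m=q-2$ and in the irrational case no $y_n$ ever equals $-1$, this does not damage the argument.

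The genuine gap is the second claim of Item~\ref{it:33rc3d22e}: the existence of a tree of maximum degree at most $\Delta$ that implements the edge activity $-1$ or $-1/4$ \emph{perfectly}. You only sketch a plan and explicitly defer the key construction (``requires a case analysis on $q$ and a careful gadget design''), and that construction is the hard core of the lemma. The path gadget of Lemma~\ref{lem:expmatch} with $\lambda_2\in\{1,2\}$ and $\lambda_1=\lambda_3=-1/(\gamma\lambda_2)$ is indeed the right template, but the whole difficulty is producing trees whose ratios $Z_{H,\neg y}(\gamma)/Z_H(\gamma)$ equal $-1/\gamma$ or $-1/(2\gamma)$ \emph{exactly} for an arbitrary $\gamma\in\mathcal{B}_{\Delta}$; no density argument can do this, and your proposed primitive of attaching the zero tree $T_{q-2}$ to an auxiliary vertex $w$ only yields vertex activity exactly $0$ at $w$ (forcing $w$ matched), which cannot appear in the triple since $1+\gamma\lambda_1\lambda_2=0$ forces every $\lambda_i\neq0$; moreover the root of $T_{q-2}$ already has degree $\Delta-1$, so attaching gadgets there collides with the degree bound, and for $\Delta\geq4$ the natural quantity the zero tree hands you, $Z_{T,\neg\rho}/Z_T=-1/((\Delta-1)\gamma)$, is not one of the two values you need. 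The paper closes this gap with a minimal-counterexample argument: among all trees of maximum degree $\Delta$ with vanishing partition function (nonempty by the first half of Item~\ref{it:33rc3d22e}) take one, $T$, with fewest vertices; pruning a leaf $l$ (when its parent $p$ has degree one in $T\setminus\{l\}$) or a pair of sibling leaves $l,l'$ and expanding $Z_T(\gamma)=0$ over their matching status yields the exact relations $\gamma Z_{T^*,\neg p}(\gamma)+Z_{T^*}(\gamma)=0$ or $2\gamma Z_{T^*,\neg p}(\gamma)+Z_{T^*}(\gamma)=0$, with $Z_{T^*}(\gamma)\neq0$ guaranteed by minimality and $p$ of degree at most $\Delta-2$ in $T^*$; feeding these into \eqref{eq:ZGuvall} gives edge activity $-1$ in the first structural case and $-1/4$ in the second, which is precisely why the lemma concludes with that disjunction. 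Without this (or an equivalent exact construction), your proof of Item~\ref{it:33rc3d22e} is incomplete.
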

\begin{proof}
Fix arbitrary $\gamma<-\frac{1}{4(\Delta-1)}$ and let $\theta\in (0,\pi/2)$ be such that $\gamma=-\frac{1}{4(\Delta-1)(\cos \theta)^2}$. For integer $n\geq 0$, set 
\[u_n=Z_{T_n, \neg \rho_n}(\gamma), \quad z_n=Z_{T_n}(\gamma).\] 
Note that $u_0=z_0=1$, while for $n\geq 1$ we have that
\begin{equation}\label{eq:4r34f3d}
u_n=\big(z_{n-1})^{\Delta-1}, \quad z_n=\big(z_{n-1})^{\Delta-1}+(\Delta-1)\gamma\big(z_{n-1}\big)^{\Delta-2}u_{n-1}.
\end{equation}
Moreover, let $W_n:=\displaystyle \frac{\sin ((n+1) \theta)}{\sin ((n+2)\theta)}$. \vskip 0.3cm

\noindent \textbf{Proof of Item~\ref{it:33rc3d22d}: $\gamma\notin \mathcal{B}_\Delta$, i.e., $\theta$ is not a rational multiple of $\pi$.} In this case, we clearly have that $W_n\neq 0$ for all $n\geq 0$. We will show by induction on $n$ that
\begin{equation}\label{eq:dcf2422}
z_n\neq 0 \quad \mbox{and} \quad \frac{u_n}{z_n}=(2\cos \theta) W_n.
\end{equation}
Indeed, \eqref{eq:dcf2422} holds for $n=0$ (using the identity $\sin(2\theta)=2\sin \theta \cos \theta$). Now, let $n\geq 1$ and suppose that \eqref{eq:dcf2422} holds for all integers less than $n$. We first show that $z_n\neq 0$. From \eqref{eq:4r34f3d}, we have that
\begin{align}
z_n&=(z_{n-1})^{\Delta-1}\Big(1+(\Delta-1)\gamma\frac{u_{n-1}}{z_{n-1}}\Big)\notag\\
&=(z_{n-1})^{\Delta-1}\Big(1-\frac{1}{2\cos \theta}W_{n-1}\Big).\label{eq:rv3tvt3}
\end{align}
Dividing the identity  $2\cos \theta\sin((n+1)\theta)-\sin (n \theta)=\sin ((n+2) \theta)$ by $2\cos \theta \sin((n+1)\theta)$ (note that this is non-zero by our assumption on  $\theta$), we obtain that
\begin{equation*}
1-\frac{1}{2\cos \theta}W_{n-1}=\frac{1}{(2\cos \theta)W_n}.
\end{equation*}
From this and \eqref{eq:rv3tvt3}, we obtain that $z_n=\frac{(z_{n-1})^{\Delta-1}}{(2\cos \theta)W_n}$. We therefore have that $z_n\neq 0$. Moreover, \eqref{eq:4r34f3d} yields that $u_n/z_n=(2\cos \theta)W_n$, therefore completing the proof of \eqref{eq:dcf2422}.

In light of \eqref{eq:dcf2422}, we immediately have that $Z_{T_n}(\gamma)\neq 0$ for all integers $n\geq 0$, therefore proving the first part of Item~\ref{it:33rc3d22d}. For the second part,  \cite[Proof of Lemma 11]{GGS} shows that the sequence $W_n$ is dense in $\mathbb{R}$ as $n$ ranges over the positive integers. Since we have that  $u_n/z_n=(2\cos \theta)W_n$ for all $n\geq 0$ from \eqref{eq:dcf2422}, this immediately yields the second part of Item~\ref{it:33rc3d22d}.
\vskip 0.3cm

\noindent  \textbf{Proof of Item~\ref{it:33rc3d22e}: $\gamma\in \mathcal{B}_\Delta$, i.e., $\theta$ is a rational multiple of $\pi$.}  Since $\theta\in (0,\pi/2)$, we have that $\sin(\theta), \sin(2\theta)\neq 0$. Let $n_0$ be the smallest non-negative integer such that $\sin((n+2) \theta)=0$. We first show that $Z_{T_{n_0}}(\gamma)= 0$.

Note first that $W_n$ is well-defined and non-zero for all non-negative integers $<n_0$.  As in Case I above, we obtain by induction on $n$ that for all $n=0,1\hdots, n_0-1$, it holds  that
\begin{equation}\label{eq:dcf2422b}
z_n\neq 0 \quad \mbox{and} \quad \frac{u_n}{z_n}=(2\cos \theta) W_n.
\end{equation}
Just as in \eqref{eq:rv3tvt3}, we then obtain that
\[z_{n_0}=\big(z_{n_0-1}\big)^{\Delta-1}\Big(1-\frac{1}{2\cos \theta}W_{n_0-1}\Big).\]
Now, using again the identity  $2\cos \theta\sin((n+1)\theta)-\sin (n \theta)=\sin ((n+2) \theta)$ for $n=n_0$, we obtain that $2\cos \theta\sin((n_0+1)\theta)-\sin (n_0 \theta)=0$ and dividing by $2\cos \theta \sin((n_0+1)\theta)$, we obtain that $1-\frac{1}{2\cos \theta}W_{n_0-1}=0$. It follows that $z_{n_0}=0$, therefore proving that $Z_{T_{n_0}}(\gamma)= 0$, as wanted.

To complete the proof of Item~\ref{it:33rc3d22e}, it remains to show that there exists a tree of maximum degree at most $\Delta$ which implements either the edge activity $\gamma'=-1$ or the edge activity $\gamma'=-1/4$ perfectly.
 For $\gamma=-1$,  the result follows by considering the path with four vertices (and using the endpoints as terminals), so we assume $\gamma\neq -1$ in what follows. 

To implement the desired edge activities for $\gamma\neq -1$, we will adapt the construction in the proof of Lemma~\ref{lem:expmatch}. Namely, for $i\in\{1,2,3\}$, let $H_i$ be  a tree of maximum degree $\Delta$ such that $Z_{H_i}(\gamma)\neq 0$ and suppose that $y_i$ is a vertex in $H_i$ of degree $\leq \Delta-2$. Let
\begin{equation}\label{eq:43frf4f3b}
q_i:=Z_{H_i,\neg y_i}(\gamma),\quad z_i:=Z_{H_i}(\gamma)\neq 0.
\end{equation}
Let $G$ be the tree obtained by taking the disjoint union of $H_1,H_2,H_3$, two new vertices $u,v$ and adding the edges $(u,y_1), (y_1,y_2), (y_2,y_3),(y_3,v)$. Note that the restriction of the degrees of $y_i$'s in the graph $H_i$'s ensures that $G$ is a tree of maximum degree at most $\Delta$. Then, we have that
\begin{equation*}\tag{\ref{eq:ZGuvall}}
\begin{array}{llll}
Z_{G,u,v}(\gamma)=\gamma^2 q_1 z_2 q_3,&  & & Z_{G, \neg u, \neg v}(\gamma)=z_1 z_2 z_3 +\gamma (q_1 q_2 z_3+z_1 q_2 q_3),\\
Z_{G,  u,\neg v}(\gamma)=\gamma q_1 (z_2 z_3+\gamma q_2 q_3),& & & Z_{G, \neg u,v}(\gamma)=\gamma q_3 (z_1 z_2+\gamma q_1 q_2).
\end{array}
\end{equation*} Among those trees $T'$ of maximum degree $\Delta$ such that $Z_{T'}(\gamma)= 0$, let $T$ be a tree with the minimum number of vertices. Since $\gamma\neq -1$, we have that $T$ has more than two vertices. 

In the tree $T$ (more generally, in any tree with more than two vertices), either there exists a vertex $p$ with at least two leaves as children or else there exists a leaf $l$ whose parent $p$ has degree one in the tree $T\backslash \{l\}$.  We consider the two cases separately.
\begin{enumerate}
\item There exists a leaf $l$ whose parent $p$ has degree one in the tree $T\backslash \{l\}$.  Let $T^*=T\backslash \{l\}$. Since $Z_{T}(\gamma)=0$, we have that $Z_{T,l}(\gamma)+Z_{T,\neg l}(\gamma)=0$. We have $Z_{T,l}(\gamma)=\gamma Z_{T^*,\neg p}(\gamma)$ and $Z_{T,\neg l}(\gamma)=Z_{T^*}(\gamma)$. Moreover, we have that $Z_{T^*}(\gamma)\neq 0$ from the choice of $T$. Hence
\begin{equation}\label{eq:4f4rf4f4b}
\gamma Z_{T^*,\neg p}(\gamma)+Z_{T^*}(\gamma)=0, \quad Z_{T^*}(\gamma)\neq 0.
\end{equation}
Now we let $H_1,H_3$ be disjoint copies of $T^*$ and set $y_1$, $y_3$ to be the corresponding copies of $p$. Let $H_2$ be the single-vertex graph consisting only of the vertex $y_2$. Then, we have that 
\[q_1=q_3=Z_{T^*,\neg p}(\gamma),\mbox{\ \ }z_1=z_3=Z_{T^*}(\gamma), \mbox{ and } q_2=z_2=1,\]
so \eqref{eq:ZGuvall} yields that 
\[Z_{G,  u,\neg v}(\gamma)=Z_{G,  \neg u,v}(\gamma)=0 \mbox{ and } \frac{Z_{G,  u,v}(\gamma)}{Z_{G, \neg u,\neg v}(\gamma)}=-1.\]
Therefore, $G$ with terminals $u$ and $v$ implements the edge activity $\gamma'=-1$. 

\item There exists a vertex $p$ with at least two leaves as children, say $l$ and $l'$.  Note that $l$ and $l'$ cannot be simultaneously matched in $T$ and therefore, since $Z_{T}(\gamma)=0$, we have that 
\[Z_{T,l,\neg l'}(\gamma)+Z_{T,\neg l, l'}(\gamma)+Z_{T,\neg l, \neg l'}(\gamma)=0.\]
Let $T^*=T\backslash \{l,l'\}$ and note that $p$ has degree $\leq \Delta-2$ in $T^*$.  We have 
\[\mbox{$Z_{T,l,\neg l'}(\gamma)=Z_{T,\neg l, l'}(\gamma)=\gamma Z_{T^*,\neg p}(\gamma)$ and $Z_{T,\neg l,\neg l'}(\gamma)=Z_{T^*}(\gamma)$.}\] Moreover, we have that $Z_{T^*}(\gamma)\neq 0$ from the choice of $T$. Hence,
\begin{equation}\label{eq:yrtyrf4f4c}
2\gamma Z_{T^*,\neg p}(\gamma)+Z_{T^*}(\gamma)=0, \quad Z_{T^*}(\gamma)\neq 0.
\end{equation}
Now we let $H_1,H_3$ be disjoint copies of $T^*$ and set $y_1$, $y_3$ to be the corresponding copies of $p$. Then, from \eqref{eq:yrtyrf4f4c},  we have that 
\[2\gamma q_1+z_1=2\gamma q_3+z_3=0, \quad z_1,z_3\neq 0.\]
Let $H_2$ be the tree obtained from  $T^*$ by adding the vertex $y_2$ and connecting it to $p$. Then, we have that 
\[q_2=Z_{H_2,\neg y_2}(\gamma)= Z_{T^*}(\gamma) \mbox{ and } z_2=Z_{H_2}(\gamma)= Z_{T^*}(\gamma)+\gamma Z_{T^*,\neg p}(\gamma),\]
so from \eqref{eq:yrtyrf4f4c} we obtain that 
\[-\frac{1}{2}q_2+z_2=0, \quad z_2\neq 0.\]
Equation \eqref{eq:ZGuvall} then gives  that 
\[Z_{G,  u,\neg v}(\gamma)=Z_{G,  \neg u,v}(\gamma)=0 \mbox{ and } \frac{Z_{G,  u,v}(\gamma)}{Z_{G, \neg u,\neg v}(\gamma)}=-1/4,\]
and, therefore, $G$ with terminals $u$ and $v$ implements the edge activity $\gamma'=-1/4$.
\end{enumerate}
This completes the proof of Item~\ref{it:33rc3d22e} in the lemma, and therefore establishes Lemma~\ref{lem:notbad}. 
\end{proof}

\subsection{Handling the edge activities $-1$ and $-1/4$}
The proof of the following lemma builds upon the techniques  in \cite[Lemmas 21 \& 22]{GGS}.
\begin{lemma}\label{lem:gammaminusone}
Let $\gamma=-1$. Then, for every rational number $\lambda$, there exists 
a tree of maximum degree $\Delta=3$ that implements the vertex activity $\lambda$. 
\end{lemma}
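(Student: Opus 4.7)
The plan is to view the gadget as a tree rooted at the terminal~$u$ and to track partition-function data at each vertex via the integer pair
\[(u_v, z_v) := \bigl(Z_{T_v,\neg v}(-1),\, Z_{T_v}(-1)\bigr),\]
where $T_v$ is the subtree rooted at~$v$. A leaf carries the pair $(1,1)$, and the standard matching-polynomial tree recurrence at $\gamma=-1$ gives, for a vertex $v$ with one child $v_1$ (pair $(u_1,z_1)$) or two children $v_1,v_2$ (pairs $(u_1,z_1),(u_2,z_2)$) respectively,
\[(u_v,z_v)=(z_1,\,z_1-u_1)\quad\text{or}\quad (u_v,z_v)=(z_1z_2,\,z_1z_2-u_1z_2-z_1u_2).\]
Since the terminal $u$ has degree one and no other vertex can have more than two children, I will restrict the tree-building to two moves: (i) the \emph{pendant} move, adding a single child, and (ii) the \emph{leaf-augmented binary} move, which adds a new parent with two children, one of which is a fresh leaf with pair $(1,1)$, so the binary rule collapses to $(u_v,z_v)=(z_2,-u_2)$. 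Writing the pair as a column vector, these two moves correspond to left-multiplication by the $SL_2(\mathbb{Z})$ matrices
\[g := \begin{pmatrix}0 & 1\\ -1 & 1\end{pmatrix}\qquad\text{and}\qquad \sigma := \begin{pmatrix}0 & 1\\ -1 & 0\end{pmatrix}.\]

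The next step is to show that $g$ and $\sigma$ generate $PSL_2(\mathbb{Z})$. In $PSL_2(\mathbb{Z})$, the matrix $g$ has order three (since $g^3 = -I$) and $\sigma$ has order two; a direct computation gives $\sigma g \equiv T^{-1} \pmod{\pm I}$, where $T=\begin{pmatrix}1 & 1\\ 0 & 1\end{pmatrix}$ is the standard translation. Since $PSL_2(\mathbb{Z}) = \langle \sigma, T\rangle$, this yields $\langle g,\sigma\rangle = PSL_2(\mathbb{Z})$. Because $SL_2(\mathbb{Z})$ acts transitively on primitive vectors of $\mathbb{Z}^2$ and $(1,1)$ is primitive, for every coprime pair $(a,b)\in\mathbb{Z}^2$ there is a finite word in $g,\sigma$ that carries the initial leaf pair $(1,1)^T$ to $\varepsilon(a,b)^T$ for some $\varepsilon\in\{\pm 1\}$.

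Given the target rational $\lambda=p/q$ in lowest terms, I will apply this to realize the pair $(u_w,z_w)=\varepsilon(p-q,p)$ at the root's child~$w$; this pair is primitive because $\gcd(p-q,p)=\gcd(p,q)=1$. One final pendant move (attaching $u$ to $w$) then yields $(u_u,z_u)=\varepsilon(p,q)$, so $Z_G(-1)=\varepsilon q\neq 0$ and $Z_{G,\neg u}(-1)/Z_G(-1)=p/q=\lambda$. The resulting graph is a tree in which the terminal $u$ has degree one and every other vertex has at most two children, hence degree at most three.

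The main obstacle is the group-theoretic identification: recognising the two move-matrices as generators of $PSL_2(\mathbb{Z})$ and invoking transitivity of its action on primitive integer pairs. Once this is established, the construction is a direct readout of a word in $g,\sigma$, with each letter translated into a pendant or leaf-augmented binary extension. Intermediate pairs may have $z_v=0$, corresponding to an internal subtree with $Z_{T_v}(-1)=0$, but this causes no difficulty because $SL_2(\mathbb{Z})$ preserves primitivity, so no intermediate pair is $(0,0)$, and the final pair $(u_u,z_u)=\varepsilon(p,q)$ has $q\neq 0$, ensuring $Z_G(-1)\neq 0$ as required.
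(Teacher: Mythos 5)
Your proposal is correct, and while the underlying tree-building moves are exactly the two used in the paper (the pendant extension and the ``new parent with one fresh leaf'' extension, which induce the M\"obius maps $x\mapsto 1/(1-x)$ and $x\mapsto -1/x$ on the ratio $Z_{T,\neg\rho}/Z_T$), your proof of reachability of all rationals is organized differently. The paper works projectively with the ratio itself, starts from $-1$ (a three-vertex path), and shows the orbit is $\mathbb{Q}\setminus\{0\}$ by an explicit descent on $p+q$ using composed identities such as $f_1(f_1(f_2(x)))=1+x$, then converts $s=(\lambda-1)/\lambda$ to $\lambda$ with one last pendant move and treats $\lambda=0,1$ separately. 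You instead lift to the integer pair $(Z_{T,\neg\rho}(-1),Z_T(-1))$, recognize the two moves as the matrices $g,\sigma\in SL_2(\mathbb{Z})$, observe $g^3=-I$, $\sigma^2=-I$, $\sigma g\equiv T^{-1}$, and invoke generation of $PSL_2(\mathbb{Z})$ together with transitivity of $SL_2(\mathbb{Z})$ on primitive vectors to hit $\pm(p-q,p)$, finishing with one pendant move to get the pair $\pm(p,q)$. This buys a uniform treatment of all rationals (including $0$ and $1$), removes the need to verify nonvanishing of intermediate partition functions (intermediate $z$-coordinates may be $0$ without harm, whereas the paper must check $x_{j-1}\neq 1$ resp.\ $\neq 0$ at each step), and replaces the hands-on Euclidean-style induction by a standard group-theoretic fact; the paper's argument, by contrast, is self-contained and elementary. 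One point you should make explicit: the tree construction only allows \emph{positive} words in $g$ and $\sigma$ (moves cannot be undone), while generation of $PSL_2(\mathbb{Z})$ a priori gives words with inverses; this is harmless precisely because both generators have finite order in $SL_2(\mathbb{Z})$ ($\sigma^4=g^6=I$), so $g^{-1}=g^5$ and $\sigma^{-1}=\sigma^3$ and the subgroup they generate coincides with the monoid they generate. With that one-line remark added, the argument is complete.
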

\begin{proof}
Since the value of $\gamma$ is fixed to $-1$, for this proof we will omit it from the notation of partition functions, i.e., for a  graph $G$ we will simply write $Z_G$ instead of $Z_G(\gamma)$.

Consider the functions 
\[f_1(x)=\frac{1}{1-x} \mbox{ for $x\neq 1$}\quad \mbox{ and }\quad f_2(x)=-\frac{1}{x} \mbox{ for $x\neq 0$}\]
Let $S$ be the set of all numbers $s$ for which there exist an integer $n\geq 0$ and indices $i_1,\hdots,i_n\in\{1,2\}$ such that the sequence given by
\begin{equation}\label{eq:defS}
x_0=-1, \quad x_{j}=f_{i_j}(x_{j-1}) \mbox{ for }j\in\{1,\hdots,n\}
\end{equation}
satisfies $x_n=s$ (note that $-1\in S$ by taking $n=0$).  We will show the following.
\begin{enumerate}
\item \label{it:Sim} Let $i_1,\hdots,i_n\in \{1,2\}$ be arbitrary indices and consider the sequence $x_0,x_1,\hdots, x_n$ defined in \eqref{eq:defS}. Then, for every $j\in \{0,1,\hdots, n\}$, there exists a tree $G_j$ of maximum degree at most $3$ and a vertex $u_j$ in $G_j$ which has degree at most 2 such that $\frac{Z_{G_j,\neg u_j}}{Z_{G_j}}=x_j$. 
\item \label{it:SQ} $S=\mathbb{Q}\backslash\{0\}$, i.e., $S$ is the set of all non-zero rational numbers.
\end{enumerate} 
To conclude the lemma from these two Items, let $\lambda$ be an arbitrary rational number. If $\lambda=0$, then we can implement the vertex activity $\lambda$ with a three-vertex path and using as terminal one of its endpoints. If $\lambda=1$, then we can implement the vertex activity $\lambda$ with a four-vertex path and using as  terminal one of its endpoints. For $\lambda\neq 0,1$, let $s=\frac{\lambda-1}{\lambda}$ and note that $s\in \mathbb{Q}\backslash \{0\}$. Then, by Item~\ref{it:SQ}, there exists a sequence $x_1,\hdots,x_n$ of the form \eqref{eq:defS} such that $x_n=s$. Therefore, by Item~\ref{it:Sim}, there exists a tree $G$ of maximum degree at most $3$ and a vertex $u$ in $G$ which has degree at most 2 such that $\frac{Z_{G,\neg u}}{Z_G}=s$. Add a new vertex $u'$ to $G$ and connect it to $u$, and let $G'$ be the resulting graph. Then, we have that
\begin{equation}\label{eq:5vtvtvtqa}
Z_{G',\neg u'}=Z_G \mbox{ and } Z_{G'}=Z_{G',\neg u'}+Z_{G',u'}=Z_G+\gamma Z_{G,\neg u}=Z_G-Z_{G,\neg u}.
\end{equation}
It follows that $G'$ with terminal $u'$ is a tree of maximum degree $3$ that implements the vertex activity $\frac{Z_{G',\neg u'}}{Z_{G'}}=\frac{Z_G}{Z_G-Z_{G,\neg u}}=\frac{1}{1-s}=\lambda$. 

To prove Item~\ref{it:Sim}, we use induction on $j$. For the base case $j=0$, we can take the tree $G_0$ to be the path with three vertices and $u_0$ to be the vertex in the middle of the path. So assume that $j\geq 1$. To obtain $G_{j}$ from $G_{j-1}$, we consider cases on the value of $i_{j}$. 
\begin{enumerate}
\item If $i_{j}=1$, we have $x_{j}=f_1(x_{j-1})$ and therefore $x_{j-1}\neq 1$. Let $G_{j}$ be the tree obtained from $G_{j-1}$ by adding a new vertex $u_{j}$ and joining it to $u_{j-1}$. Then, analogously to \eqref{eq:5vtvtvtqa}, we obtain that 
\begin{equation*}
Z_{G_{j},\neg u_{j}}=Z_{G_{j-1}} \mbox{ and } Z_{G_{j}}=Z_{G_{j-1}}-Z_{G_{j-1},\neg u_{j-1}}\neq 0,
\end{equation*}
where the disequality follows from $x_{j-1}=\frac{Z_{G_{j-1},\neg u_{j-1}}}{Z_{G_{j-1}}}\neq 1$. We therefore obtain that $\frac{Z_{G_{j},\neg u_{j}}}{Z_{G_{j}}}=\frac{Z_{G_{j-1}}}{Z_{G_{j-1}}-Z_{G_{j-1},\neg u_{j-1}}}=\frac{1}{1-x_{j-1}}=f_1(x_{j-1})=x_{j}$, as wanted.
\item If $i_{j}=2$, we have $x_{j}=f_2(x_{j-1})$ and therefore $x_{j-1}\neq 0$. Let $G_{j}$ be the tree obtained from $G_{j-1}$ by adding two new vertices $u_{j},w$ and joining $u_{j}$ to both $u_{j-1}$ and $w$ (note that $w$'s only neighbour is $u_{j}$). Then, analogously to \eqref{eq:5vtvtvtqa}, we obtain that 
\begin{equation*}
Z_{G_{j},\neg u_{j}}=Z_{G_{j-1}} \mbox{ and } Z_{G_{j}}=Z_{G_{j-1}}+\gamma(Z_{G_{j-1}}+Z_{G_{j-1},\neg u_{j-1}})=-Z_{G_{j-1},\neg u_{j-1}}\neq 0,
\end{equation*}
where the disequality follows from $x_{j-1}=\frac{Z_{G_{j-1},\neg u_{j-1}}}{Z_{G_{j-1}}}\neq 0$. We therefore obtain that $\frac{Z_{G_{j},\neg u_{j}}}{Z_{G_{j}}}=-\frac{Z_{G_{j-1}}}{Z_{G_{j-1},\neg u_{j-1}}}=-\frac{1}{x_{j-1}}=f_2(x_{j-1})=x_{j}$, as wanted.
\end{enumerate}
This finishes the proof of Item~\ref{it:Sim}.

To prove Item~\ref{it:SQ}, first note that the inclusion $S\subseteq \mathbb{Q}\backslash\{0\}$ is trivial, so it suffices to show that $\mathbb{Q}\backslash\{0\}\subseteq S$. The following equalities will be useful:
\begin{align}
f_1(f_1(f_2(x)))&=1+x\mbox{  for all $x\neq 0,-1$,}\label{eq:er123qwe12a}\\
f_2(f_1(f_1(x)))&=\frac{x}{1-x} \mbox{ for all $x\neq 0,1$},\label{eq:er123qwe12b}\\
f_2(f_1(f_1(f_2(f_1(f_1(f_2(x)))))))&=-\frac{1}{2+x}\mbox{  for all $x\neq 0,-1,-2$.}\label{eq:er123qwe12c}
\end{align}
Establishing \eqref{eq:er123qwe12a} and \eqref{eq:er123qwe12b} is straightforward, and \eqref{eq:er123qwe12c} follows by applying \eqref{eq:er123qwe12a} after writing the left-hand-side as $f_2(g(g(x)))$ where $g(x)=f_1(f_1(f_2(x)))$.

We next show that  
\begin{equation}\label{eq:wsx12}
(-1,0)\cap \mathbb{Q}\subseteq S.
\end{equation}
For the sake of contradiction, assume that there exists a rational number $\alpha\in (-1,0)$ such that $\alpha\notin S$. Note that $\alpha=-p/q$ for some relative coprime positive integers $p,q$ satisfying $p<q$; we may assume that $\alpha$ is such that the sum $p+q$ is minimised over all rational $\alpha\in (-1,0)$ with $\alpha\notin S$. We consider three cases:
\begin{enumerate}
\item $q=2p$. Since $p,q$ are coprime positive integers, this means that $p=1$ and $q=2$ and therefore $\alpha=-1/2$. But $-1/2\in S$ since $f_2(f_1(f_1(-1)))=-1/2$.
\item $q>2p$. Let $p'=p$, $q'=q-p$  and note that $p',q'$ are coprime positive integers satisfying $p'<q'$. Since $p'+q'<p+q$ we conclude that the rational number $\alpha'=-p'/q'$ belongs to $S$. But from \eqref{eq:er123qwe12b} we have that  $f_2(f_1(f_1(\alpha')))=\frac{p'/q'}{1+p'/q'}=-p/q=\alpha$, contradicting that $\alpha\notin S$. 
\item $q<2p$. Let $p'=2p-q$, $q':=p$ and note that $p',q'$ are coprime positive integers satisfying $p'<q'$. Since $p'+q'<p+q$ we conclude that the rational number $\alpha'=-p'/q'$ belongs to $S$. But from \eqref{eq:er123qwe12b} we have that  $f_2(f_1(f_1(f_2(f_1(f_1(f_2(\alpha')))))))=-\frac{1}{2-p'/q'}=-p/q=\alpha$, contradicting that $\alpha\notin S$. 
\end{enumerate}
This finishes the proof of \eqref{eq:wsx12}. From \eqref{eq:wsx12}, we can conclude the desired inclusion $\mathbb{Q}\backslash\{0\}\subseteq S$ as follows. First, note that $-1\in S$ and, because $f_2(-1)=1$, we have that $1\in S$ as well. Let $\alpha\neq -1,0,1$ be an arbitrary  rational number, we will show that $\alpha\in S$. 
\begin{enumerate}
\item  If $\alpha\in (-1,0)$, then $\alpha\in S$ from \eqref{eq:wsx12}. 
\item  If $\alpha\in (0,1)$, let $\alpha':=\alpha-1$ and note that $\alpha'\in (-1,0)\cap \mathbb{Q}$. From \eqref{eq:wsx12}, we have that  $\alpha-1\in S$. Since  $f_1(f_1(f_2(\alpha')))=\alpha$ from \eqref{eq:er123qwe12a}, we conclude that $\alpha\in S$. 
\item  If $\alpha\notin (-1,1)$, then  $-1/\alpha$ is a non-zero rational number in the interval $(-1,1)$ and hence $-1/\alpha\in S$ from the two cases above. It follows that   $f_2(-1/\alpha)=\alpha$ belongs to $S$ as well. 
\end{enumerate}
This finishes the proof of  $\mathbb{Q}\backslash\{0\}\subseteq S$ and therefore establishes Item~\ref{it:SQ} in the beginning of the proof,  yielding Lemma~\ref{lem:gammaminusone}.
\end{proof}

\begin{lemma}\label{lem:gammaminusonequarter}
Let $\gamma=-1/4$. Then, there exists  a tree of maximum degree $\Delta=3$ that implements the edge activity $\gamma'=-1$. 
\end{lemma}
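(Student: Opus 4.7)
The plan is to assemble the tree using the three-subtree gadget construction from the proof of Lemma~\ref{lem:expmatch}: we will build trees $H_1, H_2, H_3$ of maximum degree $3$ realizing vertex activities $\lambda_1, \lambda_2, \lambda_3$ at their terminals $y_1, y_2, y_3$, then connect them by adding two new terminal leaves $u, v$ together with the path edges $(u, y_1), (y_1, y_2), (y_2, y_3), (y_3, v)$. Writing $\lambda_i = q_i/z_i$ as in that proof, equation~\eqref{eq:ZGuvall} shows that the conditions $\lambda_1 \lambda_2 = \lambda_2 \lambda_3 = -1/\gamma = 4$ force $Z_{G,u,\neg v} = Z_{G,\neg u,v} = 0$. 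Choosing $\lambda_1 = \lambda_3 = 4$ and $\lambda_2 = 1$ then gives $Z_{G,\neg u,\neg v} = (1 + 8\gamma)\, z_1 z_2 z_3 = -z_1 z_2 z_3 \neq 0$ and $Z_{G,u,v} = \gamma^2 \lambda_1 \lambda_3\, z_1 z_2 z_3 = z_1 z_2 z_3$, so the ratio equals $\gamma' = -1$ as required. Observe that each $y_i$ has degree~$1$ in $H_i$ and acquires exactly two new edges in the combined gadget, so the assembled tree has maximum degree $3$.

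The task therefore reduces to implementing vertex activities $\lambda = 4$ and $\lambda = 1$ using trees of maximum degree $3$ with $\gamma = -1/4$. For $\lambda = 4$, the plan is to take the path $y_1 y_2 w x_1 x_2$ on five vertices and attach a pendant leaf $u$ to the middle vertex $w$. A direct enumeration of matchings yields $Z_T(-1/4) = 1 + 5\gamma + 5\gamma^2 + \gamma^3 = 3/64$ and $Z_{T,\neg u}(-1/4) = 1 + 4\gamma + 3\gamma^2 = 3/16$, whose ratio is $4$. The maximum degree is $3$, attained at~$w$.

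For $\lambda = 1$, the plan is to leverage the $6$-vertex tree $T_6$ consisting of the path $w_1 w_2 w_3 w_4$ together with pendant leaves $v_1, v_2$ attached to $w_2$ and $w_3$ respectively, for which a direct computation gives $Z_{T_6}(-1/4) = 1 + 5\gamma + 4\gamma^2 = 0$. Attach a new degree-$2$ vertex $w$ to the leaf $w_4$ and a new terminal leaf $u$ to $w$, obtaining a tree $T$ with $T \setminus \{u,w\} = T_6$. Then $Z_{T,u}(\gamma) = \gamma Z_{T_6}(\gamma) = 0$, so $Z_{T,\neg u} = Z_T$ and the vertex activity at $u$ is exactly~$1$. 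Nondegeneracy $Z_T(-1/4) \neq 0$ follows from $Z_T = \gamma Z_{T_6 \setminus w_4}$ together with the calculation $Z_{T_6 \setminus w_4}(-1/4) = 1 + 4\gamma + 2\gamma^2 = 1/8$; the maximum degree of $T$ remains~$3$.

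Feeding these explicit $H_1, H_2, H_3$ into the three-subtree construction produces a tree of maximum degree $3$ that implements $\gamma' = -1$ perfectly by the calculation of the first paragraph (noting that $u, v$ end up at even distance in a tree, so the resulting graph is bipartite with $u, v$ in the same part). The main obstacle in the plan is finding the explicit small trees realizing vertex activities $4$ and $1$ at $\gamma = -1/4$; once they are in hand, the assembly is just substitution into~\eqref{eq:ZGuvall}.
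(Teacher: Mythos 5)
Your proposal is correct. The assembly step is exactly the paper's: the paper also reduces the lemma to implementing the vertex activities $4$ and $1$ at $\gamma=-1/4$ and then feeds $q_1/z_1=q_3/z_3=4$, $q_2/z_2=1$ into the three-subtree construction of Lemma~\ref{lem:expmatch}, reading off from \eqref{eq:ZGuvall} that $Z_{G,u,\neg v}=Z_{G,\neg u,v}=0$ and $Z_{G,u,v}/Z_{G,\neg u,\neg v}=-1$. Where you diverge is in how the two vertex activities are realized. The paper mimics the machinery of Lemma~\ref{lem:gammaminusone}: it introduces the maps $f_1(x)=\frac{1}{1-\frac14 x}$ and $f_2(x)=\frac{2}{1-\frac12 x}$, each corresponding to a tree operation (attach a pendant vertex, respectively attach a vertex joined to the old terminal and to a leaf of a $3$-leaf star), starts from a tree whose terminal realizes the value $0$ (a height-$2$ binary tree with a pendant root), and reaches $1=f_1(0)$ and $4=f_1(f_2(f_1(f_2(f_2(f_1(0))))))$ by iteration, proving correctness of each step by the same kind of recurrence you use. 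You instead exhibit explicit small trees and verify them by direct computation: the $6$-vertex spider (a $5$-path with a pendant leaf on its centre) gives $Z_{T,\neg u}/Z_T=(3/16)/(3/64)=4$, and the double-broom $T_6$ with $Z_{T_6}(-1/4)=1+5\gamma+4\gamma^2=0$, extended by a two-edge tail, gives activity exactly $1$ with $Z_T=\gamma Z_{T_6\setminus w_4}=-1/32\neq 0$; all these evaluations check out, the terminals have degree $1$, and the maximum degree stays $3$ throughout, so the assembled tree implements $\gamma'=-1$ perfectly. Your route buys concreteness and very short verifications (two fixed trees instead of an inductively analysed sequence of operations); the paper's route is more systematic and reuses the function-iteration framework it needs anyway for $\gamma=-1$, but for this lemma only the two values $4$ and $1$ are required, so your shortcut is entirely adequate.
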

\begin{proof}
As in Lemma~\ref{lem:gammaminusone}, we will omit the argument $\gamma$ from the notation of partition functions (since we will have $\gamma=-1/4$ fixed).

To prove the lemma, we will show that there exists a tree $T$ of maximum degree at most 3 with terminal $u$ that implements the vertex activity $\lambda=4$ and a tree $T'$ of maximum degree at most 3 with terminal $u'$ that implements the vertex activity $\lambda'=1$. From this, we can conclude that there exists a tree that implements the edge activity 
$\gamma'=-1$ using the construction of Lemma~\ref{lem:expmatch}. Namely, let $H_1,H_3$ be disjoint copies of the tree $T$ and $y_1,y_3$ be the corresponding copies of the terminal $u$. Let $H_2$ be the tree $T'$ and let $y_2$ denote the terminal $u'$. Then, for $i\in\{1,2,3\}$, we set 
\begin{equation}\label{eq:b4b4wqw35b}
q_i:=Z_{H_i,\neg y_i},\quad z_i:=Z_{H_i}\neq 0,
\end{equation}
so that $q_1/z_1=q_3/z_3=4$, $q_2/z_2=1$. Let $G$ be the tree obtained by taking the disjoint union of $H_1,H_2,H_3$, two new vertices $u,v$ and adding the edges $(u,y_1), (y_1,y_2), (y_2,y_3),(y_3,v)$. Then, we have that
\begin{equation*}\tag{\ref{eq:ZGuvall}}
\begin{array}{llll}
Z_{G,u,v}=\gamma^2 q_1 z_2 q_3,&  & & Z_{G, \neg u, \neg v}=z_1 z_2 z_3 +\gamma (q_1 q_2 z_3+z_1 q_2 q_3),\\
Z_{G,  u,\neg v}=\gamma q_1 (z_2 z_3+\gamma q_2 q_3),& & & Z_{G, \neg u,v}=\gamma q_3 (z_1 z_2+\gamma q_1 q_2).
\end{array}
\end{equation*}
and hence we obtain that
\[\frac{Z_{G,u,\neg v}}{Z_{G,\neg u,\neg v}}=\frac{Z_{G,\neg u, v}}{Z_{G,\neg u,\neg v}}=0, \quad \frac{Z_{G,u, v}}{Z_{G, \neg u, \neg v}}=-1.\]
Therefore, $G$ with terminals $u,v$ implements the edge activity $\gamma'=-1$, as desired.

It remains to implement the vertex activities $\lambda=4$ and $\lambda'=1$ using trees of maximum degree at most 3. Consider the functions 
\[f_1(x)=\frac{1}{1-\frac{1}{4}x} \mbox{ for $x\neq 4$}, \quad\quad f_2(x)=
\frac{2}{1-\frac{1}{2}x}  \mbox{ for $x\neq 2$}.\]
Analogously to Lemma~\ref{lem:gammaminusone}, we will consider the set $S$ of all numbers $s$ for which there exist an integer $n\geq 0$ and indices $i_1,\hdots,i_n\in\{1,2\}$ such that the sequence given by
\begin{equation}\label{eq:defSbb}
x_0=0, \quad x_{j}=f_{i_j}(x_{j-1}) \mbox{ for }j\in\{1,\hdots,n\}
\end{equation}
satisfies $x_n=s$ (note that  $0\in S$ by taking $n=0$).  We will show that 
\begin{equation}\label{eq:3edcw2}
\begin{array}{l}
\mbox{for indices $i_1,\hdots,i_n\in \{1,2\}$, for any term $\{x_j\}_{\{j=0,1,\hdots,n\}}$ of the sequence in \eqref{eq:defS},}\\
\mbox{there exists a tree $G_j$ of maximum degree at most $3$ and a vertex $u_j$ in $G_j$ such that}\\
\mbox{$\frac{Z_{G_j,\neg u_j}}{Z_{G_j}}=x_j$ and $u_j$ has degree 1 in $G_{j}$ if $j=0$ or $i_j=1$, and degree 2 otherwise.}
\end{array}
\end{equation}
From this, the desired implementations follow by observing that  
\[f_1(f_2(f_1(f_2(f_2(f_1(0))))))=4 ,\quad f_{1}(0)=1.\]

To prove \eqref{eq:3edcw2}, we proceed by induction on $j$.  For the base case $j=0$, consider the binary tree of height 2 and connect its root to a new vertex $u_0$; denote by $G_0$ the tree thus obtained. Then, we have that $Z_{G_0,\neg u_0}=1+6\gamma+8\gamma^2=0$ and  
$Z_{G_0}=Z_{G_0,\neg u_0} +
\gamma {(1+2\gamma)}^2 \neq 0$, so $G_0, u_0$ establish \eqref{eq:3edcw2} in the case $j=0$.  For the inductive step, assume that $j\geq 1$. To obtain $G_{j}$ from $G_{j-1}$, we consider cases on the value of $i_{j}$. 
\begin{enumerate}
\item If $i_{j}=1$, we have $x_{j}=f_1(x_{j-1})$ and therefore $x_{j-1}\neq 4$. Let $G_{j}$ be the graph obtained from $G_j$ by adding a new vertex $u_{j}$ and joining it to $u_{j-1}$. Then, 
\begin{equation*}
Z_{G_{j},\neg u_{j}}=Z_{G_{j-1}} \mbox{ and } Z_{G_{j}}=Z_{G_{j-1}}-\frac{1}{4}Z_{G_{j-1},\neg u_{j-1}}\neq 0,
\end{equation*}
where the disequality follows from $x_{j-1}=\frac{Z_{G_{j-1},\neg u_{j-1}}}{Z_{G_{j-1}}}\neq 4$. We therefore obtain that $\frac{Z_{G_{j},\neg u_{j}}}{Z_{G_{j}}}=\frac{Z_{G_{j-1}}}{Z_{G_{j-1}}-\tfrac14 Z_{G_{j-1},\neg u_{j-1}}}=\frac{1}{1-\frac{1}{4}x_{j-1}}=f_1(x_{j-1})=x_{j}$, as wanted.

\item If $i_{j}=2$, we have $x_{j}=f_2(x_{j-1})$ and therefore $x_{j-1}\neq 2$. 
Let $S$ be a star with three leaves, one of which is called~$\rho$.
Then 
 $Z_{S,\neg \rho}= 1+2\gamma=1/2$ and 
 $Z_{S}=1+3\gamma=1/4$. To obtain  $G_{j}$, we add to $G_{j-1}$ the  star~$S$ and a new vertex $u_{j}$ which is connected to both $u_{j-1}$ and the  leaf $\rho$ of $S$. Then, we have that 
\begin{equation*}
\begin{aligned}
Z_{G_{j},\neg u_{j}}&=Z_{G_{j-1}}Z_{S}=Z_{G_{j-1}}/4,\\
Z_{G_{j}}&=Z_{G_{j-1}}Z_{S}+\gamma(Z_{G_{j-1},\neg u_{j-1}}Z_S+Z_{G_{j-1}}Z_{S,\neg \rho})
=  Z_{G_{j-1}} /8 - Z_{G_{j-1},\neg u_{j-1}}/16
\neq 0,
\end{aligned}
\end{equation*}
where the disequality follows from $x_{j-1}=\frac{Z_{G_{j-1},\neg u_{j-1}}}{Z_{G_{j-1}}}\neq 2$. 
We therefore obtain that 
$$\frac{Z_{G_{j},\neg u_{j}}}{Z_{G_{j}}}=
\frac{
Z_{G_{j-1}}/4
}
{Z_{G_{j-1}}/8 - Z_{G_{j-1},\neg u_{j-1}}/16} =
\frac{2}{1-\tfrac12 \frac{Z_{G_{j-1},\neg u_{j-1}}}{Z_{G_{j-1}}} } = 
\frac{2}{1-\tfrac12  x_{j-1} } =f_2(x_{j-1})=x_{j},$$ as wanted.
\end{enumerate}
This finishes the proof of \eqref{eq:3edcw2}, therefore completing the proof of Lemma~\ref{lem:gammaminusonequarter}.
\end{proof}

\subsection{Concluding the proof of Lemma~\ref{lem:vertmatch}}
In this section, we conclude the proof of Lemma~\ref{lem:vertmatch}. We will need the following technical facts.
\begin{lemma}\label{lem:easydensity}
Let $\gamma\neq 0$ be a real number and suppose that the sequence $\{x_n\}_{n\geq 0}$ is dense in $\mathbb{R}$. Then, for every real $\lambda$ and $\epsilon>0$, there exists $n$ such that $x_n\neq -1/\gamma$ and $|\frac{1}{1+\gamma x_n}-\lambda|\leq \epsilon$. 
\end{lemma}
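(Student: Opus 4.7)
The plan is to exploit the fact that the map $f(x) = 1/(1+\gamma x)$ is a M\"obius transformation, continuous on $\mathbb{R}\setminus\{-1/\gamma\}$, with $\lim_{|x|\to\infty} f(x) = 0$. The density of $\{x_n\}$ in $\mathbb{R}$ will let us pick an $x_n$ that lands in any prescribed open subset of $\mathbb{R}$, and in particular near any preimage of $\lambda$ under $f$. The only subtlety is handling the value $\lambda = 0$ separately, since $0$ is not attained by $f$ on the real line (only in the limit).

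The argument splits naturally into two cases. First, when $\lambda \neq 0$, set $y^\ast := (1/\lambda - 1)/\gamma$, which satisfies $1 + \gamma y^\ast = 1/\lambda \neq 0$, so in particular $y^\ast \neq -1/\gamma$ and $f(y^\ast) = \lambda$. Since $f$ is continuous at $y^\ast$, there exists $\delta > 0$ (which we may further shrink so that $\delta < |y^\ast + 1/\gamma|$) such that $|x - y^\ast| < \delta$ implies $x \neq -1/\gamma$ and $|f(x) - \lambda| \leq \epsilon$. By density, some term $x_n$ satisfies $|x_n - y^\ast| < \delta$, and this $n$ works. Second, when $\lambda = 0$, pick $M$ large enough that $M > 1/|\gamma|$ and $1/(|\gamma|M - 1) \leq \epsilon$ (e.g.\ $M = (1 + 1/\epsilon)/|\gamma| + 1$); by density, some term $x_n$ satisfies $|x_n| > M$, from which $|1 + \gamma x_n| \geq |\gamma||x_n| - 1 \geq |\gamma| M - 1 \geq 1/\epsilon$, so $|f(x_n)| \leq \epsilon$ and trivially $x_n \neq -1/\gamma$.

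There is no substantive obstacle here: the statement is essentially a transcription of continuity of $f$ together with its behaviour at infinity into the language of dense sequences. The only thing to be careful about is the case $\lambda=0$, where we cannot use continuity at any finite preimage and must instead invoke the limit $\lim_{|x|\to\infty} f(x)=0$, which is why the two-case structure is needed.
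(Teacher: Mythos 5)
Your proof is correct. The underlying idea is the same as the paper's (use density to land in the preimage of a neighbourhood of $\lambda$ under the M\"obius map $f(x)=1/(1+\gamma x)$), but your decomposition differs: you treat $\lambda\neq 0$ by continuity of $f$ at the explicit preimage $y^\ast=(1/\lambda-1)/\gamma$ and handle $\lambda=0$ separately via the limit of $f$ at infinity, taking $|x_n|>M$ so that $|f(x_n)|\leq\epsilon$. The paper avoids the case split entirely: it picks a closed subinterval $I\subseteq[\lambda-\epsilon,\lambda+\epsilon]$ of positive length with $0\notin I$ (which exists whether or not $\lambda=0$), pulls it back under the inverse map $g(x)=(1-x)/(\gamma x)$, notes that $g(I)$ is an interval of positive length, and then applies density once. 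Your route costs one extra case but is equally elementary; the paper's choice of a subinterval avoiding $0$ is just a tidier way of packaging the same pullback argument. Both arguments correctly ensure $x_n\neq-1/\gamma$ (you by shrinking $\delta$ below $|y^\ast+1/\gamma|$ and by $|x_n|>M>1/|\gamma|$, the paper because $f$ is defined on all of $g(I)$), so there is no gap.
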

\begin{proof}
Consider the function $f(x)=1/(1+\gamma x)$ for $x\neq -1/\gamma$. Fix arbitrary real $\lambda$ and $\epsilon>0$, and let $I$ be a closed subinterval of $[\lambda-\epsilon,\lambda+\epsilon]$ that has non-zero length and does not contain the point 0.   Let $g(x)$ be the inverse function of $f$, i.e., $g(x)=(1-x)/(\gamma x)$ for $x\neq 0$, and note that $f(g(x))=x$ for all $x\neq 0$. Since $0\notin I$, $g$ is continuous on the closed interval $I$ and therefore $g(I)$ is also an interval that has non-zero length (from $\gamma\neq 0$). Since the sequence $\{x_n\}_{n\geq 0}$ is dense in $\mathbb{R}$, there exists $n$ such that $x_n\neq -1/\gamma$ and $x_n\in g(I)$. It follows that $f(x_n)\in I\subseteq [\lambda-\epsilon,\lambda+\epsilon]$, as needed. 
\end{proof}

\begin{lemma}\label{lem:impleme12ntation}
Let $\gamma$ be a  real number. Let $G$ be a graph with terminals $u$ and $v$ that implements the edge activity $\gamma'$ (cf. Definition~\ref{def:edgeimpl}). 

For a graph $H=(V_H,E_H)$, let $\widehat{H}$ be the graph obtained by replacing every edge $e$ of $H$ with a distinct copy of the graph $G$ and identifying the endpoints of $e$ with the corresponding copies of the terminals $u$ and $v$. Then, for $C:=(Z_{G,\neg u,\neg v}(\gamma))^{|E_H|}$ and any two vertices $w,z$ of $H$ it holds that $C\neq 0$ and
\begin{equation}\label{eq:rrfr3frv12}
\begin{array}{lll}
Z_{\widehat{H},w}(\gamma)=C\cdot Z_{H,w}(\gamma'),& & Z_{\widehat{H},\neg w}(\gamma)=C\cdot Z_{H, \neg w}(\gamma'),
\end{array}
\end{equation}
and
\begin{equation}\label{eq:rrfr3frv12b}
\begin{array}{lll}
Z_{\widehat{H},w,z}(\gamma)=C\cdot Z_{H,w,z}(\gamma'),& & Z_{\widehat{H},\neg w,\neg z}(\gamma)=C\cdot Z_{H, \neg w, \neg z}(\gamma'),\\
Z_{\widehat{H},w,\neg z}(\gamma)=C\cdot Z_{H,w,\neg z}(\gamma'),& & Z_{\widehat{H},\neg w, z}(\gamma)=C\cdot Z_{H, \neg w,  z}(\gamma').
\end{array}
\end{equation}
\end{lemma}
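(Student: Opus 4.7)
The plan is to expand each matching of $\widehat{H}$ in terms of its behaviour on the copies of $G$, group terms according to whether the terminals are matched, and then exploit the fact that $G$ implements $\gamma'$ perfectly to collapse the sum to a weighted sum over matchings of $H$. The bound $C\neq 0$ is immediate, because the definition of implementation requires $Z_{G,\neg u,\neg v}(\gamma)\neq 0$.

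For each edge $e=\{x,y\}\in E_H$, let $G_e$ denote its copy of $G$ in $\widehat{H}$, with terminals $u_e=x$ and $v_e=y$. Any matching $M$ of $\widehat{H}$ decomposes as the disjoint union of $M_e:=M\cap E(G_e)$ (each $M_e$ a matching of $G_e$), and I would encode $M$ by the \emph{phase vector} $\Phi=(\phi_e)_{e\in E_H}$ where $\phi_e=(a_e,b_e)\in\{0,1\}^2$ records whether $u_e$ (resp.\ $v_e$) is matched by an edge in $M_e$. Because $u$ and $v$ have degree one in $G$, a vertex $w\in V_H$ is matched in $M$ precisely when, for some $e\in E_H$ incident to $w$, the $w$-coordinate of $\phi_e$ equals $1$; in particular, the matching condition in $\widehat{H}$ at $w$ corresponds exactly to requiring that at most one such $e$ exists. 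Grouping matchings by their phase and factorizing yields
\[Z_{\widehat{H}}(\gamma) \;=\; \sum_{\Phi\text{ consistent}} \prod_{e\in E_H} w(\phi_e),\]
where $w(\phi_e)$ equals $Z_{G,\neg u,\neg v}(\gamma), Z_{G,u,\neg v}(\gamma), Z_{G,\neg u,v}(\gamma), Z_{G,u,v}(\gamma)$ when $\phi_e = (0,0), (1,0), (0,1), (1,1)$ respectively.

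Since $G$ implements $\gamma'$ perfectly, $w((1,0))=w((0,1))=0$ and $w((1,1))=\gamma'\cdot w((0,0))$, so only phases with every $\phi_e\in\{(0,0),(1,1)\}$ contribute. Such phases are in bijection with matchings $M'$ of $H$ via $e\in M'\iff \phi_e=(1,1)$, and the phase-consistency condition at each vertex of $H$ translates precisely to the matching condition on $M'$. The contribution of the phase corresponding to $M'$ is therefore $C\cdot(\gamma')^{|M'|}$, and summing over $M'\in\mathcal{M}_H$ gives $Z_{\widehat{H}}(\gamma)=C\cdot Z_H(\gamma')$.

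For the conditional identities~\eqref{eq:rrfr3frv12} and~\eqref{eq:rrfr3frv12b}, the same argument applies verbatim after one observes that, in the surviving phases (with entries in $\{(0,0),(1,1)\}$), a vertex $w\in V_H$ is matched in $M$ if and only if $w$ is matched in the associated $M'$, and analogously for $z$; so restricting the sums on both sides to matchings with prescribed matched/unmatched status at $w$ (and at $z$) yields the six claimed identities simultaneously. No step presents a serious obstacle; the only points that need a little care are checking that perfect implementation really does kill the off-diagonal phases $(1,0)$ and $(0,1)$, and that the consistency condition on phases coincides with the matching condition on $H$, both of which depend crucially on $u$ and $v$ having degree one in $G$.
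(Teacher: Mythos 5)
Your proposal is correct and follows essentially the same route as the paper's proof: a phase decomposition of matchings of $\widehat{H}$ by the matched/unmatched status of the terminals in each gadget copy, factorization of the weight over gadgets, using perfect implementation to kill the mixed phases $(1,0),(0,1)$, and identifying the surviving phases with matchings of $H$ of weight $C\cdot(\gamma')^{|M'|}$, with the matched status of $w$ (and $z$) transferring between $\widehat{H}$ and $H$. No substantive differences.
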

\begin{proof}
The proof is a simpler version of an analysis appearing in the last part of the proof of Theorems~\ref{thm:hard} and~\ref{thm:hardsign}. Namely, for a matching $\widehat{M}$ of $\widehat{H}$, we will consider the phase of the matching $\widehat{M}$, which we will denote by $\mathcal{Y}(\widehat{M})$. This is a 0-1 vector indexed by pairs $(e,x)$ such that $e$ is an edge of $H$ and $x$ is an endpoint of $e$; we set the $(e,x)$ entry of $\mathcal{Y}(\widehat{M})$ equal to 1 if $x$ is matched in the matching $\widehat{M}$ by edges in the copy of $G$ corresponding to the edge $e$ (and 0 otherwise). 

Let $\mathcal{P}$ be the set  consisting of all phases, i.e., the values of $\mathcal{Y}\big(\widehat{M}\big)$ as $\widehat{M}$ ranges over matchings of $\widehat{H}$.    Fix a possible phase $Y\in \mathcal{P}$ and let $\Omega_Y$ be the set of matchings $\widehat{M}$ of $\widehat{H}$ such that $\mathcal{Y}\big(\widehat{M}\big)=Y$. The aggregate weight of matchings in $\Omega_Y$ is given by 
\begin{equation}\label{eq:ws222A}
W_Y:=\prod_{e=(x,y)\in E_H}t(Y_{(e,x)}, Y_{(e,y)}),
\end{equation}
where
\[\begin{array}{lll} t(0,0):=Z_{G,\neg u,\neg v}(\gamma),& & t(1,1):=Z_{G, u,v}(\gamma),\\
t(0,1):=Z_{G, \neg u, v}(\gamma),& & t(1,0):=Z_{G,u,\neg v}(\gamma).
\end{array}\]
Note that, since the graph $G$ with terminals $u,v$ implements the activity $\gamma'$, we have $t(0,1)=t(1,0)=0$ and $t(1,1)/t(0,0)=\gamma'$. Let $\mathcal{P}_1$ be the set of phases $Y$ in $\mathcal{P}$ such that there exists an edge $e=(x,y)\in E$ such that $Y_{(e,x)}\neq Y_{(e,y)}$ and let $\mathcal{P}_2=\mathcal{P}\backslash P_1$. Then, for all $Y\in \mathcal{P}_1$ we have from \eqref{eq:ws222A} that $W_Y=0$. On the other hand, for $Y\in \mathcal{P}_2$, we have that  the set of edges $M_Y:=\{e\in E_H\mid Y_{(e,x)}= Y_{(e,y)}=1\}$ is a matching in $H$ and $W_Y=(Z_{G,u, v}(\gamma))^{|M|}(Z_{G,\neg u, \neg v}(\gamma))^{|E_H|-|M|}=C\cdot (\gamma')^{|M|}$. 

We can now conclude \eqref{eq:rrfr3frv12} and \eqref{eq:rrfr3frv12b}. For \eqref{eq:rrfr3frv12}, consider an arbitrary vertex $w$ of $H$ and recall that, for a matching $M$, $\mathsf{ver}(M)$ denotes the set of matched vertices in $M$. Then, 
\begin{align*}
Z_{\widehat{H},w}(\gamma)&=\sum_{\widehat{M}\in \mathcal{M}_{\widehat{H}}; w\in \mathsf{ver}(\widehat{M})} \gamma^{|\widehat{M}|}=\sum_{Y\in \mathcal{P}_2; w\in \mathsf{ver}(M_Y) } W_Y=C\sum_{M\in \mathcal{M}_H; w\in \mathsf{ver}(M) }(\gamma')^{|M|}=C\cdot Z_{H,w}(\gamma'),\\
Z_{\widehat{H},\neg w}(\gamma)&=\sum_{\widehat{M}\in \mathcal{M}_{\widehat{H}}; w\notin \mathsf{ver}(\widehat{M})} \gamma^{|\widehat{M}|}=\sum_{Y\in \mathcal{P}_2; w\notin \mathsf{ver}(M_Y) } W_Y=C\sum_{M\in \mathcal{M}_H; w\notin \mathsf{ver}(M) }(\gamma')^{|M|}=C\cdot Z_{H,\neg w}(\gamma').
\end{align*}
The proof of \eqref{eq:rrfr3frv12b} is completely analogous, therefore concluding the proof of the lemma.
\end{proof}
We are now ready to prove Lemma~\ref{lem:vertmatch}, which we restate here for convenience.
\begin{lemvertmatch}
\statelemvertmatch
\end{lemvertmatch}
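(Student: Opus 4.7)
The plan is to split on whether $\gamma\in \mathcal{B}_\Delta$ (defined in \eqref{def:bdelta}), matching the dichotomy of Lemma~\ref{lem:notbad}. In both sub-cases the target tree is built by inflating a model tree whose partition function ratios are understood exactly, and the max-degree, bipartite, and non-vanishing conditions will be maintained throughout.

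For $\gamma\notin \mathcal{B}_\Delta$, Item~\ref{it:33rc3d22d} of Lemma~\ref{lem:notbad} tells us that $Z_{T_n}(\gamma)\neq 0$ for every $n$ and that the ratios $x_n:=Z_{T_n,\neg \rho_n}(\gamma)/Z_{T_n}(\gamma)$ are dense in $\mathbb{R}$. I would form the bipartite tree $G$ by attaching a single pendant vertex $u$ to the root $\rho_n$ of $T_n$; then $u$ has degree $1$, $G$ has maximum degree $\Delta$, and a one-step tree recurrence at $\rho_n$ gives
\[
\frac{Z_{G,\neg u}(\gamma)}{Z_G(\gamma)}=\frac{Z_{T_n}(\gamma)}{Z_{T_n}(\gamma)+\gamma\, Z_{T_n,\neg \rho_n}(\gamma)}=\frac{1}{1+\gamma x_n},
\]
so $G$ implements $1/(1+\gamma x_n)$ perfectly, the condition $Z_G(\gamma)\neq 0$ being equivalent to $x_n\neq -1/\gamma$. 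Applying Lemma~\ref{lem:easydensity} to $\{x_n\}$ then supplies an $n$ with $x_n\neq -1/\gamma$ and $|1/(1+\gamma x_n)-\lambda|\leq \epsilon$, finishing this case.

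For $\gamma\in \mathcal{B}_\Delta$, Item~\ref{it:33rc3d22e} of Lemma~\ref{lem:notbad} yields a tree $T^*$ of maximum degree at most $\Delta$ that, with activity $\gamma$, perfectly implements an edge activity $\gamma_0\in\{-1,-1/4\}$. I would first upgrade to $\gamma_0=-1$: if $\gamma_0=-1/4$, Lemma~\ref{lem:gammaminusonequarter} produces a tree $T^{**}$ of maximum degree $3$ perfectly implementing edge activity $-1$ using edge activity $-1/4$, and substituting each edge of $T^{**}$ with a fresh copy of $T^*$ via \eqref{eq:rrfr3frv12b} yields a tree $T_\dagger$ of maximum degree at most $\Delta$ that perfectly implements edge activity $-1$ using $\gamma$; otherwise set $T_\dagger:=T^*$. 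Next, pick a rational $\lambda'$ with $|\lambda-\lambda'|\leq \epsilon$, invoke Lemma~\ref{lem:gammaminusone} to obtain a tree $H$ of maximum degree $3$ that perfectly implements vertex activity $\lambda'$ using edge activity $-1$, and apply Lemma~\ref{lem:impleme12ntation} once more to replace each edge of $H$ with a copy of $T_\dagger$; the identity \eqref{eq:rrfr3frv12} then yields $Z_{\widehat{H},\neg w}(\gamma)/Z_{\widehat{H}}(\gamma)=\lambda'$, which is within $\epsilon$ of $\lambda$.

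Three invariants need to be tracked through these constructions: maximum degree $\leq \Delta$, bipartiteness, and non-vanishing of the ambient partition function. The first is straightforward since every edge-activity gadget used has its terminals of degree $1$, so identifying a terminal with a host vertex does not change its degree, and the hosts $H$ and $T^{**}$ have maximum degree $3\leq \Delta$. The other two follow from the observation that each substitution is tree-on-tree: replacing an edge of a tree with a tree gadget (whose unique internal $u$--$v$ path supplants the removed edge) yields another tree, hence bipartite, while the product formulas of Lemma~\ref{lem:impleme12ntation} combined with perfectness of the intermediate implementations propagate non-vanishing of the relevant $Z$'s. I expect the main obstacle to lie in Case~II: orchestrating the cascade $\gamma\to\gamma_0\to -1$ (via Lemmas~\ref{lem:gammaminusonequarter} and~\ref{lem:impleme12ntation}) and then $-1\to\lambda'$ (via Lemma~\ref{lem:gammaminusone}) while simultaneously keeping all three invariants under control at each stage.
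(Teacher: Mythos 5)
Your proposal is correct and follows essentially the same route as the paper's own proof: the same dichotomy on whether $\gamma\in\mathcal{B}_\Delta$, the same pendant-vertex construction on $T_n$ combined with Lemma~\ref{lem:easydensity} in the first case, and in the second case the same cascade of perfect implementations $\gamma\to -1/4\to -1$ (via Lemmas~\ref{lem:gammaminusonequarter} and~\ref{lem:impleme12ntation}) followed by Lemma~\ref{lem:gammaminusone} applied to a rational within $\epsilon$ of $\lambda$ and one more edge-substitution. Your orientation of the substitutions (host tree evaluated at $-1/4$ or $-1$, gadget implementing that activity at $\gamma$) is exactly the intended use of Lemma~\ref{lem:impleme12ntation}, and your bookkeeping of degree, bipartiteness and non-vanishing matches the paper's.
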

\begin{proof}
Fix arbitrary $\lambda \in \mathbb{R}$ and $\epsilon>0$. We consider the cases $\gamma\in \mathcal{B}_{\Delta}$ and $\gamma\notin \mathcal{B}_{\Delta}$ separately (where $\mathcal{B}_{\Delta}$ is given in \eqref{def:bdelta}). 

\textbf{Case I: $\gamma\notin \mathcal{B}_{\Delta}$.} For an integer $n\geq 0$, let $T_n$ be the $(\Delta-1)$-ary tree of height $n$ rooted at $\rho_n$. By Item~\ref{it:33rc3d22d} of Lemma~\ref{lem:notbad}, we have that $Z_{T_n}(\gamma)\neq0$ and therefore the sequence $x_n=\frac{Z_{T_n,\neg \rho_n}(\gamma)}{Z_{T_n}(\gamma)}$ is well-defined (for all $n\geq 0$). Moreover, again by Item~\ref{it:33rc3d22d} of Lemma~\ref{lem:notbad}, we have that the sequence $\{x_n\}_{n\geq 0}$ is dense in $\mathbb{R}$ and therefore, from Lemma~\ref{lem:easydensity},  there exists $n$ such that  
\begin{equation}\label{eq:rwew45ce}
x_n\neq -1/\gamma \mbox{\ \ and \ \ } \Big|\frac{1}{1+\gamma x_n}-\lambda\Big|\leq \epsilon.
\end{equation}
Let $T_n'$ be the tree obtained from $T_n$ by adding a new vertex $\rho_n'$ and connecting it to $\rho_n$. Then, 
\[Z_{T_n',\neg \rho_n'}(\gamma)=Z_{T_n}(\gamma) \mbox{\ \ and \ \ } Z_{T_n'}(\gamma)=Z_{T_n}(\gamma)+\gamma Z_{T_n,\neg \rho_n}(\gamma)\neq 0,\]
where the disequality follows from $\frac{Z_{T_n,\neg \rho_n}(\gamma)}{Z_{T_n}(\gamma)}=x_n\neq -1/\gamma$. We also have that $\frac{Z_{T_n',\neg \rho_n'}(\gamma)}{Z_{T_n'}(\gamma)}=\frac{1}{1+\gamma x_n}$. Therefore, from \eqref{eq:rwew45ce}, we obtain that $T_n'$ with terminal $\rho_n'$ implements the vertex activity $\lambda$ with accuracy $\epsilon$.

\textbf{Case II: $\gamma\in \mathcal{B}_{\Delta}$.}  We first show that there exists a tree $T$ of maximum degree at most $\Delta$ with terminals $u,v$ that implements the edge activity $-1$. By Item~\ref{it:33rc3d22d} of Lemma~\ref{lem:notbad}, we  have that there exists a tree $T_1$ of maximum degree at most $\Delta$ with terminals $u_1,v_1$ that implements either the edge activity $-1$ or the edge activity $-1/4$. In the former case, we are clearly done (and we can take $T=T_1$), so consider the latter case. Then, we have that 
\[Z_{T_1,u_1,\neg v_1}(\gamma)=Z_{T_1, \neg u_1,v_1}(\gamma)=0, \quad \frac{Z_{T_1,u_1,v_1}(\gamma)}{Z_{T_1,\neg u_1,\neg v_1}(\gamma)}=-1/4. \] By Lemma~\ref{lem:gammaminusonequarter}, there exists a tree $T_2$ of maximum degree at most 3 with terminals $u_2,v_2$  such that 
\[Z_{T_2,u_2,\neg v_2}(-1/4)=Z_{T_2,\neg u_2,v_2}(-1/4)=0, \quad \frac{Z_{T_2,u_2,v_2}(-1/4)}{Z_{T_2,\neg u_2,\neg v_2}(-1/4)}=-1.\] 
We let $T$ be the tree obtained by replacing every edge $e$ of $T_1$ by a distinct copy of $T_2$ and identifying the endpoints of $e$ with the corresponding copies of the terminals $u_2$ and $v_2$. We let the terminals $u,v$ of $T$ be the terminals $u_1,v_1$ in $T_1$. Note that, since $u_2,v_2$ have degree 1 in $T_2$ and $T_2$ has maximum degree at most 3, the maximum degree of $T$ is at most $\Delta$ and  $u,v$ have degree 1 in $T$. Moreover, Lemma~\ref{lem:impleme12ntation} gives that 
\[Z_{T,u,\neg v}(\gamma)=Z_{T,\neg u,v}(\gamma)=0, \quad \frac{Z_{T,u, v}(\gamma)}{Z_{T,\neg u,\neg v}(\gamma)}=\frac{Z_{T_2,u_2,v_2}(-1/4)}{Z_{T_2,\neg u_2,\neg v_2}(-1/4)}=-1,\] 
and so the tree $T$ implements the edge activity $-1$, as claimed.

Now, to implement the (real) vertex activity $\lambda$ with accuracy $\epsilon>0$, note that by Lemma~\ref{lem:gammaminusone} there exists a tree $H$ of maximum degree at most $3$ with terminal $w$ such that $Z_{H}(-1)\neq 0$ and 
\[\Big|\frac{Z_{H,\neg w}(-1)}{Z_{H}(-1)}-\lambda\Big|\leq \epsilon.\]
($H$ perfectly implements a rational that is close to~$\lambda$.)
We let $G$ be the tree obtained by replacing every edge $e$ of $H$ by a distinct copy of $T$ and identifying the endpoints of $e$ with the corresponding copies of the terminals $u$ and $v$ in $T$. We let the terminal $u$ of $G$ be the terminal $w$ of $H$. As before, we have that $G$ has maximum degree at most $\Delta$ and that $u$ has degree 1 in $G$. Further, Lemma~\ref{lem:impleme12ntation} gives that $Z_{G}(\gamma)\neq 0$ and  
\[\frac{Z_{G,\neg u}(\gamma)}{Z_{G}(\gamma)}=\frac{Z_{H,\neg w}(-1)}{Z_{H}(-1)},\] 
and so the tree $G$ implements the vertex activity $\lambda$ with accuracy $\epsilon$,  as needed.

This finishes the proof of Lemma~\ref{lem:vertmatch}.
\end{proof}

\subsection{Implementing activities with exponential precision}
In this section, we bootstrap Lemma~\ref{lem:vertmatch} to obtain the exponential precision required in Lemma~\ref{lem:expvertmatch}. The following lemma lies at the heart of the argument and is based on the ``contracting maps that cover'' technique of \cite{complexhard}. 
\begin{lemma}\label{lem:exp2}
Let $\gamma<0$ be a rational number. Then, there exist rationals $x_0$ and $r,\delta>0$ and reals $\lambda_1^*,\hdots,\lambda_t^*$ (for some positive integer $t$) such that the following holds for all rational $\lambda_1,\hdots, \lambda_t$ satisfying $|\lambda_i-\lambda_i^*|\leq \delta$ for $i\in [t]$.

Let $I:=[x_0-r,x_0+r]$ and, for $i\in [t]$, consider the  map $\Phi_i:x\mapsto \frac{1}{1+\gamma(\lambda_i+ x)}$ for $x\neq -(1+\gamma \lambda_i)/\gamma$. There is an algorithm which, on input (i) a starting point $y_0\in I\cap \mathbb{Q}$, (ii) a target $y\in I\cap \mathbb{Q}$,  and (iii) a rational $\epsilon>0$, outputs in $poly(\size{y_0,y,\eps})$ time a number $\hat{y}\in I\cap \mathbb{Q}$ and  a sequence $i_1,i_2,\hdots,i_k\in [t]$ such that  
\[\hat{y}=\Phi_{i_k}(\Phi_{i_{k-1}}(\cdots\Phi_{i_1}(y_0)\cdots))\mbox{ and }|\hat{y}-y|\leq \epsilon.\]
\end{lemma}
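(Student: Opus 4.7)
The plan is to follow the ``contracting maps that cover'' strategy of \cite{complexhard}: pick parameters so that the maps $\Phi_i^*$ with activities $\lambda_i^*$ are strict contractions of $I$ into $I$ whose images together cover $I$ with a positive margin; then, given $y_0$ and $y$, iterate backward from $y$ through inverse maps to produce a sequence $i_1,\hdots,i_k$ of length $O(\log(1/\epsilon))$ whose forward composition applied to $y_0$ lands near $y$ by contraction.

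For the setup I would fix a rational $x_0\in (0,1/\sqrt{|\gamma|})$ so that $|\gamma|x_0^2<1$, and pick a reference $\lambda_0^*$ for which $x_0$ is the (attracting) fixed point of $x\mapsto 1/(1+\gamma(\lambda_0^*+x))$. By continuity, for $r>0$ small enough (and rational) the base map is a strict contraction on $I=[x_0-r,x_0+r]$ with some factor $\rho<1$, and $\Phi_\lambda^*(I)\subseteq I$ for $\lambda$ in a small real neighborhood of $\lambda_0^*$. Since $\partial \Phi/\partial \lambda=-\gamma\Phi^2\neq 0$, varying $\lambda$ translates $\Phi_\lambda^*(I)$ at a uniformly nonzero rate. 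Choosing $\lambda_1^*,\hdots,\lambda_t^*$ to be finitely many real values spaced sufficiently finely in this neighborhood, the union $\bigcup_{i=1}^t \Phi_i^*(I)$ covers $I$ with some positive margin $\mu$: every point of $I$ lies at distance at least $\mu$ from the boundary of at least one image $\Phi_i^*(I)$.

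The perturbation step is a standard continuity argument: there exists $\delta>0$ such that if $|\lambda_i-\lambda_i^*|\leq \delta$ for each $i$, the rational maps $\Phi_i$ still satisfy $\Phi_i(I)\subseteq I$, remain contractions on $I$ with some factor $\rho'\in(\rho,1)$, and have images that cover $I$ with margin $\mu/2$.

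The algorithm, given $y_0,y\in I\cap \QQ$ and rational $\epsilon>0$, first sets $k:=\lceil \log_{1/\rho'}(2r/\epsilon)\rceil=O(\log(1/\epsilon))$. It then builds the sequence backward: set $w_k:=y$; for $j=k,k-1,\hdots,1$, test each $i\in[t]$ to find one with $\Phi_i^{-1}(w_j)\in I$ (such $i$ exists by the $\mu/2$-covering), set $w_{j-1}:=\Phi_i^{-1}(w_j)$ and record $i_j:=i$. To avoid bit-size blowup in the backward sweep I would truncate each $w_j$ to $O(\log(k/\mu))$ bits after computing it as a rational; the margin $\mu/2$ absorbs the truncation error, so the chosen $i_j$ remains a valid index. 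Finally, output $\hat{y}:=\Phi_{i_k}(\Phi_{i_{k-1}}(\cdots \Phi_{i_1}(y_0)\cdots))$ by direct forward evaluation; since $\gamma$ and the $\lambda_i$ are constants and each $\Phi_i$ is a fractional linear map with constant-size coefficients, $\hat{y}$ has bit size $\mathrm{poly}(\size{y_0,\epsilon})$. Correctness follows from contraction: both $y_0$ and $w_0$ lie in $I$, so
\[
|\hat{y}-y|=\bigl|\Phi_{i_k}\circ\cdots\circ\Phi_{i_1}(y_0)-\Phi_{i_k}\circ\cdots\circ\Phi_{i_1}(w_0)\bigr|\leq (\rho')^k |y_0-w_0|\leq (\rho')^k\cdot 2r\leq \epsilon.
\]

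The main technical obstacle is the existence of the covering configuration --- showing that finitely many $\lambda_1^*,\hdots,\lambda_t^*$ can be picked so that the images $\Phi_i^*(I)$ simultaneously contract and cover $I$ with a margin --- which rests on the nonvanishing of $\partial \Phi/\partial \lambda$ together with the contraction of $\Phi$ near $x_0$. Once that geometric configuration is in hand, the rest is a clean backward iteration with a routine truncation analysis, and the error estimate is immediate from iterated contraction.
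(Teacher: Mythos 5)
Your overall route is the same as the paper's: the ``contracting maps that cover'' scheme of \cite{complexhard}, with a rational attracting fixpoint $x_0$, a small interval $I$ around it, a finite family of nearby activities whose maps contract and whose images cover $I$, robustness under the $\delta$-perturbation of the $\lambda_i$, and a backward-then-forward iteration. Your more direct choice of $x_0$ (any rational in $(0,1/\sqrt{|\gamma|})$, then solving for $\lambda_0^*$) is fine and equivalent to the paper's construction via rationals $x_1,x_2$ with $\gamma x_1x_2=-1$. However, two steps fail as written. First, the geometric configuration you demand cannot exist: since $\gamma<0$, each $\Phi_i$ is increasing on $I$, so if $\Phi_i(I)\subseteq I$ then the endpoint $x_0+r$ of $I$ can only belong to an image whose right endpoint is exactly $x_0+r$, i.e.\ it lies at distance $0$ from the boundary of any image containing it. Hence you cannot have both ``$\Phi_i(I)\subseteq I$ for all $i$'' and ``every point of $I$ at distance $\geq\mu$ from the boundary of some image'' (and the latter, at the endpoints, would in any case not survive perturbing the $\lambda_i$). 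The paper's formulation avoids this: the derivative bound (hence contraction) is established on a strictly larger interval $[x_0-\eta,x_0+\eta]$ with $r=|\Phi'(x_0)|\eta/4$, and the covering statement proved is that for every $y\in I$ some $\Phi_j^{-1}(y)$ lies within $r/2$ of $x_0$, i.e.\ preimages land in the middle half of $I$; this is made precise using that $\Phi_\lambda^{-1}(y)=\tfrac1\gamma(\tfrac1y-1)-\lambda$ is an exact translation in $\lambda$ (the rigorous version of your ``$\partial\Phi/\partial\lambda\neq 0$'' heuristic), it is stable under the $\delta$-perturbation, and it supplies the slack needed for truncation, while the forward iterates of $y_0$ stay in the larger interval because they track the $w_j\in I$ within distance $2r<\eta$.

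Second, your error analysis is inconsistent with the truncation you introduce: once $w_{j-1}$ is a truncation of $\Phi_{i_j}^{-1}(w_j)$, the identity $\Phi_{i_k}\circ\cdots\circ\Phi_{i_1}(w_0)=y$ no longer holds, so the bound $|\hat y-y|\leq(\rho')^k|y_0-w_0|$ is unjustified. A truncation error $\tau$ made at step $j$ is contracted through the remaining $k-j$ forward applications, giving a final error of order $(\rho')^k\cdot 2r+\tau/(1-\rho')$; to achieve accuracy $\eps$ you must truncate to precision $\tau=\Theta\big(\eps(1-\rho')\big)$, i.e.\ $O(\log(1/\eps))$ bits, whereas your $\eps$-independent $O(\log(k/\mu))$ bits only yields error of order $1/k$. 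This bookkeeping (and the check that $\hat y\in I$) is exactly what is carried out in \cite[Lemmas 12 \& 26]{complexhard}, to which the paper defers; with the reformulated covering property and the $\eps$-dependent truncation, your argument goes through.
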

\begin{proof}
Let $x_1,x_2$ be rationals such that $\gamma x_1 x_2=-1$ and $x_1\neq \pm x_2$. Let $\lambda$ be such that $1+\gamma \lambda=-\gamma(x_1+x_2)$. Then,   the fixpoints of the map $\Phi: x\mapsto \frac{1}{1+\gamma(\lambda+ x)}$ are $x_1$ and $x_2$, and at least one of the two points is attracting.\footnote{To see this, note that $\Phi(x)=x$ is equivalent to  $x(1+\gamma \lambda)+\gamma x^2=1$ and therefore $x_1$ and $x_2$ are (the only) fixpoints of $\Phi$. Moreover, we have that $\Phi'(x)=-\frac{\gamma}{(1+\gamma(\lambda+x))^2}$ and hence $\Phi'(x_1)=-\gamma x_1^2$, $\Phi'(x_2)=-\gamma x_2^2$. Therefore $|\Phi'(x_1)|\neq |\Phi'(x_2)|$ and   $1=|\gamma x_1 x_2|=\sqrt{|\Phi'(x_1)| |\Phi'(x_2)|}$. Therefore either $|\Phi'(x_1)|<1$ or $|\Phi'(x_2)|<1$.} Denote by $x_0$ the attracting fixpoint of $\Phi$, so that $x_0$ satisfies $\Phi(x_0)=x_0$ and $0<|\Phi'(x_0)|<1$.  By Lemma~\ref{lem:Ratioapprox}, there exists $\eta>0$ such that for all $x\in[x_0-\eta,x_0+\eta]$ and all $\lambda'\in [\lambda-\eta,\lambda+\eta]$ it holds that  
\begin{equation}\label{eq:jntbr45}
\mbox{$1+\gamma(\lambda'+x)\neq 0$ and }\bigg|\frac{\gamma}{\big(1+\gamma(\lambda'+x)\big)^2}-\frac{\gamma}{\big(1+\gamma(\lambda+x_0)\big)^2}\bigg|\leq \frac{1}{2}\min\big\{|\Phi'(x_0)|,1-|\Phi'(x_0)|\big\}.
\end{equation}
Let $r:=\frac{|\Phi'(x_0)|}{4}\eta$, $\delta:=(r/4)$ and  let $\lambda_1^*,\hdots, \lambda_t^*$ form a $\delta$-covering of the interval $[\lambda-\eta/2,\lambda+\eta/2]$. Let $\lambda_1,\hdots, \lambda_t$ be arbitrary rationals satisfying $|\lambda_i-\lambda_i^*|\leq \delta$. For $i\in [t]$ consider the maps $\Phi_i:x\mapsto \frac{1}{1+\gamma(\lambda_i+x)}$. Finally, let $I$ be the interval $[x_0-r,x_0+r]$.
We will show 
\begin{description}
\item[Property 1:]
The  maps $\{\Phi_i\}_{i\in [t]}$ are contracting on the interval $I$, and
\item[Property 2:] $I\subseteq \Phi_1(I)\cup \cdots \cup \Phi_t(I)$. 
\end{description}
Once these two properties of the maps $\{\Phi_i\}_{i\in [t]}$ are proved, the algorithm in the statement of the lemma and its analysis are almost identical to those in \cite[Proof of Lemmas 12 \& 26]{complexhard}. The only difference here is that the maps $\{\Phi_i\}_{i\in [t]}$ have  different expressions.
The fact that we need about the expression of the maps is that, for $i\in [t]$ and 
for every rational $x$, $\Phi^{-1}_i(x)$ can be computed in time $poly(\size{x,\lambda_i,\gamma})$. This is clear since $\Phi^{-1}_i(x)=\frac{1}{\gamma}(\frac{1}{x}-1)-\lambda_i$.

{\bf Proof of Property 1:\quad} 
Fix $i\in[t]$. We will show that $\Phi_i$ is contracting on the interval $I$. Observe that  $r<\eta/4$ since $|\Phi'(x_0)|<1$ and therefore $\delta<\eta/4$ as well.  Then, we have by the triangle inequality that 
\[|\lambda_i-\lambda|\leq |\lambda_i-\lambda_i^*|+|\lambda_i^*-\lambda|\leq \delta+\eta/2<\eta.\]
Therefore, we can apply \eqref{eq:jntbr45} to $\lambda'=\lambda_i$ and $x\in I$. Observe that $\Phi'(x)=-\gamma/\big(1+\gamma(\lambda_i+x)\big)^2$ and $\Phi'(x_0)=-\gamma/\big(1+\gamma(\lambda_i+x_0)\big)^2$ 
and hence we obtain that for all $x\in I$ it holds that 
\[|\Phi_i'(x)|\leq \frac{1}{2}(1+|\Phi'(x_0)|)<1.\]
It follows that the maps $\Phi_i$ are contracting on the interval $I$ for all $i\in [t]$. 

{\bf Proof of Property 2:\quad} It suffices to consider an arbitrary $y\in I$ and show that there exists $j\in [t]$ such that $\Phi_j^{-1}(y)\in I$. To do this, we set $J$ to be the interval $[x_0-\eta/2,x_0+\eta/2]$ and consider the map $\Phi$ on the interval $J$. Then, \eqref{eq:jntbr45} for $\lambda'=\lambda$ and $x\in J$  gives that  
\[0<\frac{1}{2}|\Phi'(x_0)|\leq |\Phi'(x)|,\]
and therefore, by the Mean Value Theorem, for $z,w\in J$ we have that 
\begin{equation}\label{eq:phixO352}
\frac{1}{2}|\Phi'(x_0)|\cdot |z-w|\leq |\Phi(z)-\Phi(w)|.
\end{equation}
We thus have that 
\begin{equation*}
\begin{aligned}
|\Phi(x_0+\eta/2)-x_0|&=|\Phi(x_0+\eta/2)-\Phi(x_0)|\geq \eta |\Phi'(x_0)|/4=r,\\
|\Phi(x_0-\eta/2)-x_0|&=|\Phi(x_0-\eta/2)-\Phi(x_0)|\geq \eta |\Phi'(x_0)|/4=r.
\end{aligned}
\end{equation*}
Since $\Phi$ is monotonically increasing and continuous on the interval $J$, we therefore obtain that $I\subseteq \Phi(J)$. Therefore, for arbitrary $y\in I$ it holds that $\Phi^{-1}(y)\in J$ and hence from \eqref{eq:phixO352} applied to $z=\Phi^{-1}(y)$ and $w=\Phi^{-1}(x_0)$,   we obtain that 
\[|\Phi^{-1}(y)-x_0|=|\Phi^{-1}(y)-\Phi^{-1}(x_0)|\leq (2/|\Phi'(x_0)|) (y-x_0)\leq \eta/2.\]
Since $\lambda_1^*,\hdots, \lambda_t^*$ is a $\delta$-covering of the interval $[\lambda-\eta/2,\lambda+\eta/2]$, it follows that there exists $j\in [t]$ such that 
\[\big|\lambda+\Phi^{-1}(y)-x_0-\lambda_j^*\big|\leq \delta= r/4.\] 
Now, observe that $\Phi^{-1}_j(y)=\frac{1}{\gamma}\big(\frac{1}{y}-1\big)-\lambda_j$ and $\Phi^{-1}(y)=\frac{1}{\gamma}\big(\frac{1}{y}-1\big)-\lambda$, so we have that 
\begin{align*}
\big|\Phi^{-1}_j(y)-x_0\big|&=\bigg|\frac{1}{\gamma}\Big(\frac{1}{y}-1\Big)-\lambda_j-x_0\bigg|=|\lambda+\Phi^{-1}(y)-x_0-\lambda_j|\\
&\leq |\lambda+\Phi^{-1}(y)-x_0-\lambda_j^*|+|\lambda_j-\lambda_j^*|\leq r/4+r/4=r/2.
\end{align*}
It follows that $y\in \Phi_j(I)$ and therefore, since $y$ was arbitrary, we have that $I\subseteq \Phi_1(I)\cup \cdots \cup \Phi_t(I)$.

This completes the proof of Properties~1 and~2, and hence the proof of Lemma~\ref{lem:exp2}.
\end{proof}

We are now ready to give the proof of Lemma~\ref{lem:expvertmatch}, which we restate here for convenience.
\begin{lemexpvertmatch}
\statelemexpvertmatch
\end{lemexpvertmatch}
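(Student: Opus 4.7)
The plan is to combine the efficient reachability inside the interval $I$ from Lemma~\ref{lem:exp2} with the (non-algorithmic) density of implementable vertex activities from Lemma~\ref{lem:vertmatch}. First, invoke Lemma~\ref{lem:exp2} (for the given $\gamma$ and $\Delta$) to obtain fixed rationals $x_0 \in \mathbb{Q}$ and $r,\delta>0$, together with reals $\lambda_1^*,\ldots,\lambda_t^*$, all independent of the input. Since these constants have size $O(1)$, I use Lemma~\ref{lem:vertmatch} offline to hard-code constant-size bipartite graphs of maximum degree at most $\Delta$: an initial graph $G_{\mathrm{init}}$ implementing a rational activity $y_0\approx x_0$ with a small constant accuracy $\delta_0$, and for each $i\in[t]$ a graph $H_i$ with terminal $v_i$ of degree one implementing a rational $\lambda_i$ with $|\lambda_i-\lambda_i^*|\leq \delta$.

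Next, I describe a graph-level gadget realizing the map $\Phi_i$. Given a bipartite graph $G$ with terminal $u$, form $G_i'$ by taking a fresh disjoint copy of $H_i$ and adding a new vertex $w$ adjacent to both $u$ and $v_i$. A direct matching recursion at $w$ (viewing $w$ as a tree-root with two children, one being $G$ and one being $H_i$) yields $p_w(G_i',\gamma)=1/(1+\gamma(\lambda_i+p_u(G,\gamma)))=\Phi_i(p_u(G,\gamma))$, up to the small deviation between $\lambda_i$ and the effective activity realized by $H_i$. Bipartiteness is preserved by placing $w$ opposite to $u$ and $v_i$, and the maximum degree stays at most $\Delta$ because $\Delta\geq 3$. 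Iterating this gadget, the ``working terminal'' has degree at most two throughout, so I will apply a constant-size ``finisher'' at the very end to restore a degree-one terminal.

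The main algorithm on input $(\lambda,\eps)$ proceeds as follows. First, pick from a small fixed library of finishing gadgets $\{\Psi_j\}_j$ (each a hard-coded constant-size bipartite tree attached to the working terminal, which implements a known Möbius-type transformation) a gadget $\Psi_j$ together with a rational target $y\in I$ satisfying $\Psi_j(y)=\lambda$. This is possible provided the repertoire of $\Psi_j$'s is chosen so that $\bigcup_j \Psi_j(I)=\mathbb{R}$, which can be achieved by a finite covering of the real line by images of $I$ under bounded-depth compositions of ``pendant'' and $H_i$-extension gadgets. Then invoke the algorithm of Lemma~\ref{lem:exp2} with starting point $y_0$, target $y$, and accuracy $\eps':=\eps/C$ for a sufficiently large constant $C$, obtaining a sequence $i_1,\ldots,i_k$ of length $k=O(\log(1/\eps))$. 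Build the graph $G$ by starting from $G_{\mathrm{init}}$, iteratively applying the gadget for $i_j$ for $j=1,\ldots,k$, and finally attaching $\Psi_j$. The running time and output size are dominated by $k$, which is $\mathrm{poly}(\size{\lambda,\eps})$ as required.

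\textbf{Main obstacle.} The principal technical challenge is the error analysis. The graphs $G_{\mathrm{init}}$ and $H_i$ only implement $x_0$ and $\lambda_i^*$ approximately (within the fixed constant accuracies $\delta_0$ and $\delta$), so each iteration of the gadget realizes a perturbation $\widetilde\Phi_i$ of $\Phi_i$. Since every $\Phi_i$ is a contraction on $I$ (the property guaranteed by Lemma~\ref{lem:exp2}), perturbations do not amplify: a one-step stability estimate combined with geometric contraction shows that the total error after $k$ iterations is $O(\delta+\delta_0)$, independent of $k$. By hard-coding $\delta$ and $\delta_0$ small enough (an absolute-constant choice), this error is at most $\eps/2$, while the remaining $\eps/2$ is absorbed by the algorithmic accuracy $\eps'$ provided by Lemma~\ref{lem:exp2}. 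A secondary, essentially combinatorial, obstacle is verifying the covering property $\bigcup_j \Psi_j(I)=\mathbb{R}$, which is what allows the construction to handle rationals $\lambda$ lying outside $I$; this is handled once and for all via the offline preprocessing.
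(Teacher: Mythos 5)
There is a genuine gap in your error analysis. You treat each application of the gadget as a \emph{perturbed} map $\widetilde\Phi_i$ of an ideal $\Phi_i$ (``up to the small deviation between $\lambda_i$ and the effective activity realized by $H_i$'') and argue via contraction that the accumulated error is $O(\delta+\delta_0)$, to be made at most $\eps/2$ ``by hard-coding $\delta$ and $\delta_0$ small enough (an absolute-constant choice)''. This cannot work: $\delta$ and $\delta_0$ are fixed constants chosen offline (Lemma~\ref{lem:vertmatch} is non-algorithmic, so they must be), whereas $\eps$ is part of the input and may be exponentially small in $\size{\eps}$; no constant error floor can be beaten by every input accuracy, and achieving accuracy $\eps$ in time $poly(\size{\lambda,\eps})$ is precisely the point of Lemma~\ref{lem:expvertmatch}. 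The correct resolution, which the paper uses, is that no perturbation analysis is needed at all: the hard-coded constant-size graphs implement \emph{exactly computable rationals} $\lambda_i=\frac{Z_{G_i,\neg v_i}(\gamma)}{Z_{G_i}(\gamma)}$ and $y_0=\frac{Z_{H_0,\neg w_0}(\gamma)}{Z_{H_0}(\gamma)}$, and Lemma~\ref{lem:exp2} is deliberately stated to hold for \emph{any} rationals within $\delta$ of the $\lambda_i^*$ (and any rational starting point in $I$). Defining $\Phi_i$ with these realized rationals, the gadget chain computes the composition $\Phi_{i_k}\circ\cdots\circ\Phi_{i_1}(y_0)$ exactly (one also has to check inductively that the intermediate partition functions do not vanish), so the only errors are the accuracy $\eps'$ delivered by the algorithm of Lemma~\ref{lem:exp2} and the continuity loss in the finishing transformation, both of which are controlled via Lemma~\ref{lem:Ratioapprox} at precision polynomial in $\size{\lambda,\eps}$.

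A second, smaller gap is the asserted covering property $\bigcup_j\Psi_j(I)=\mathbb{R}$ for a finite library of finishing gadgets, which you defer to ``offline preprocessing'' without proof. In the paper, the single finishing map $\hat y\mapsto \frac{1+\gamma(\lambda_0+\hat y)}{1+\gamma(1+\lambda_0+\hat y)}$ only reaches targets $\lambda$ with $|\lambda-1|\geq 2/r$ (its image of $I$ misses a fixed neighbourhood of $1$), and the remaining targets are handled by a structurally different construction: two activities $y_1,y_2$, each far from $1$ and implemented via the first case, are combined at a new vertex so that the terminal activity is $\frac{1+\gamma(\hat y_1+\hat y_2)}{1+\gamma(1+\hat y_1+\hat y_2)}$, with $\lambda=1$ treated separately. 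If you want to stay with a purely unary library you would have to actually exhibit maps whose images of the fixed compact interval $I$ cover the missing neighbourhood of $1$, with realizable (hence only approximately chosen) parameters, and redo the accuracy bookkeeping with the realized parameters; also note that the target $y\in I$ must be computed from the \emph{realized} finisher parameters (e.g.\ $\lambda_0$, not $\lambda_0^*$), and that a degree-one terminal must be restored at the end, as in the paper's construction.
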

\begin{proof}[Proof of Lemma~\ref{lem:expvertmatch}]
Let $x_0,\delta, r$ and $\lambda_1^*, \lambda_2^*,\hdots,\lambda^*_t$ be the constants from Lemma~\ref{lem:exp2} ($x_0$, $r$ and $\delta$ are rationals with $r,\delta>0$ and 
$\lambda_1^*, \lambda_2^*,\hdots,\lambda^*_t$ are reals, for some positive integer~$t$.
These depend on~$\gamma$ but they do not depend on~$\lambda$ and~$\epsilon$,
which are the inputs to the algorithm described in this lemma.)
Set $\delta'=\min\{\delta,r/2\}$. Let also $\lambda^*_0$ be the rational number given by 
\begin{equation}\label{eq:lambda0x0}
1+\gamma+\gamma \lambda_0^*+\gamma x_0=0.
\end{equation}
By Lemma~\ref{lem:vertmatch}, for $i=0,1,\hdots,t$, there exists a bipartite graph $G_i$ of maximum degree $\Delta$ which implements the vertex activity $\lambda_i^*$ with accuracy $\delta'$
(once again, the size and time to construct~$G_i$ is independent of~$\lambda$ and~$\epsilon$). Let $v_i$ be the terminal of $G_i$ and set $\lambda_i=\frac{Z_{G_i,\neg v_i}(\gamma)}{Z_{G_i}(\gamma)}$, so that  $\lambda_i\in [\lambda_i^*-\delta',\lambda_i^*+\delta']$.  Similarly, let $H_0$ with terminal $w_0$ be a bipartite graph  of maximum degree $\Delta$ which implements the vertex activity $x_0$ with accuracy $\delta'$, and set $y_0=\frac{Z_{H_0,\neg w_0}(\gamma)}{Z_{H_0}(\gamma)}$ so that $y_0\in [x_0-\delta',x_0+\delta']$.

\vskip 0.15cm

Now, suppose that we are given inputs $\lambda,\eps\in \mathbb{Q}$ with $\eps>0$  and we want to output in $poly(\size{\lambda,\eps})$ time a bipartite graph of maximum degree $\Delta$ that implements the vertex activity $\lambda$ with accuracy $\epsilon$. We will consider two cases for the range of $\lambda$.
\vskip 0.2cm
\noindent \textbf{Case I ($\lambda$ away from 1):} $|\lambda-1|\geq 2/r$.  Let $y$ be the rational given by
\begin{equation}\label{eq:3fgg56yy35tg}
y=-\Big(\frac{1}{\gamma}+\lambda_0+1+\frac{1}{\lambda-1}\Big), \mbox{ so that }\frac{1+\gamma(\lambda_0+y)}{1+\gamma(1+\lambda_0+ y)}=\lambda.
\end{equation}
From \eqref{eq:lambda0x0} we have that $x_0=-(\frac{1}{\gamma}+\lambda_0^*+1)$, so using the triangle inequality we have that
\[|y-x_0|\leq |\lambda_0-\lambda_0^*|+\frac{1}{|\lambda-1|}\leq r/2+r/2=r.\]
Therefore, $y$ belongs to the interval  $I:=[x_0-r,x_0+r]$. Note also that $1+\gamma(1+\lambda_0+ y)=\frac{\gamma}{\lambda-1}\neq 0$.

By Lemma~\ref{lem:Ratioapprox}, there exists $\epsilon'$ with size $poly(\size{y,\epsilon})=poly(\size{\lambda,\epsilon})$ such that for all $x$ satisfying $|x-y|\leq \epsilon'$ it holds that
\begin{equation}\label{eq:3v3r485vr12ew}
1+\gamma(1+\lambda_0+x)\neq 0 \quad\mbox{and}\quad\Big|\frac{1+\gamma(\lambda_0+x)}{1+\gamma(1+\lambda_0+ x)}-\frac{1+\gamma(\lambda_0+y)}{1+\gamma(1+\lambda_0+ y)}\Big|\leq \epsilon.
\end{equation}
For $i=0,1,\hdots, t$, let  $\Phi_i$ be the map $x\mapsto\frac{1}{1+\gamma(\lambda_i+x)}$ for $x\neq -(1+\gamma \lambda_i)/\gamma$. Then, since each $\lambda_i$ is a rational in the interval $[\lambda_i^*-\delta,\lambda_i^*+\delta]$ and $y,y_0$ are rationals in the interval $[x_0-r,x_0+r]$, using the algorithm of  Lemma~\ref{lem:exp2} we  obtain in time $poly(\size{y_0,y,\eps'})=poly(\size{\lambda,\eps})$ a number $\hat{y}$ and a sequence $i_1,\hdots,i_k\in\{1,\hdots,t\}$ such that 
\begin{equation}\label{eq:vv879rf2}
\hat{y}=\Phi_{i_k}(\Phi_{i_{k-1}}(\cdots\Phi_{i_1}(y_0)\cdots))\mbox{ and }|\hat{y}-y|\leq \epsilon'.
\end{equation}
Recall that $H_0$ with terminal $w_0$ implements the vertex activity $y_0$. For $j=1,\hdots, k$, we will define a graph $H_j$ and a vertex $w_j$ in $H_j$ which has degree 2 as follows. Take the graph $H_{j-1}$ and a (new) copy of the graph $G_{i_j}$ (that implements the vertex activity $\lambda_{i_j}$) and add a new vertex $w_j$ whose neighbours are the 
vertex $w_{j-1}$ of $H_{j-1}$ and the terminal $v_{i_j}$ of $G_{i_j}$. Since $H_0,G_1,\hdots,G_{t}$ are bipartite graphs with maximum degree $\Delta$, we have that $H_1,\hdots, H_k$ are bipartite graphs with maximum degree $\Delta$ as well (using that $\Delta\geq 3$ and that $w_0,v_1,\hdots,v_t$ have degree 1 in $H_0,G_1,\hdots, G_t$, respectively). We will show by induction that for all $j=0,1,\hdots, k$ it holds that
\begin{equation}\label{eq:4nrtbrbr4v}
Z_{H_j}(\gamma)\neq 0 \quad\mbox{and}\quad \frac{Z_{H_j,\neg w_j}(\gamma)}{Z_{H_j}(\gamma)}=y_j \mbox{ where, for $j\geq 1$, } y_j:=\Phi_{i_j}(\Phi_{i_{j-1}}(\cdots\Phi_{i_1}(y_0)\cdots)).
\end{equation}
Since $H_0$ with terminal $w_0$ implements $y_0$, \eqref{eq:4nrtbrbr4v} is immediate for $j=0$. Suppose that \eqref{eq:4nrtbrbr4v} holds for some $j\in\{0,\hdots,k-1\}$. Observe  that 
\begin{equation*}
\begin{aligned}
Z_{H_{j+1},\neg w_{j+1}}(\gamma)&=Z_{H_j}(\gamma)Z_{G_{i_j}}(\gamma),\\
Z_{H_{j+1}}(\gamma)&=Z_{H_j}(\gamma)Z_{G_{i_j}}(\gamma)+\gamma Z_{H_j,\neg w_j}(\gamma)Z_{G_{i_j}}(\gamma)+\gamma Z_{H_j}(\gamma)Z_{G_{i_j},\neg v_{i_j}}(\gamma).
\end{aligned}
\end{equation*}
Moreover, we have that $y_{j+1}=\Phi_{i_j}(y_j)$ which implies (using the definition of~$\Phi_{i_j}$) that $y_{j}\neq -(1+\gamma \lambda_{i_{j}})/\gamma$. It follows that 
\[Z_{H_{j+1}}(\gamma)=Z_{H_j}(\gamma)Z_{G_{i_j}}(\gamma)\big(1+(\gamma\lambda_{i_{j+1}}+y_j)\big)\neq 0,\]
and therefore 
$\frac{Z_{H_{j+1},\neg w_{j+1}}(\gamma)}{Z_{H_{j+1}}(\gamma)}= 
\frac{1}{1+\gamma(\lambda_{i_{j+1}}+y_j)}=\Phi_{i_{j+1}}(y_j)=y_{j+1}$. 
This  completes the proof of \eqref{eq:4nrtbrbr4v}. 
From \eqref{eq:4nrtbrbr4v} with $j=k$ and  \eqref{eq:vv879rf2}, we obtain that 
\begin{equation}\label{eq:4nr64hrg666tbrbr4v}
Z_{H_k}(\gamma)\neq 0 \quad\mbox{and}\quad \frac{Z_{H_k,\neg w_k}(\gamma)}{Z_{H_k}(\gamma)}=\hat{y}.
\end{equation}
Recall that $G_0$ with terminal $v_0$ implements the vertex activity $\lambda_0$. Let $G$ be the bipartite graph obtained by taking the disjoint union of $H_k$ and $G_0$,  connecting $w_k, v_0$ to a new vertex $z$, and connecting $z$ to a new vertex $v$. Then, we have that
\begin{equation}\label{eq:bbbbrbu5433357h5}
\begin{aligned}
Z_{G,\neg v}(\gamma)&=Z_{H_k}(\gamma)Z_{G_0}(\gamma)+\gamma Z_{H_k,\neg w_k}(\gamma)Z_{G_0}(\gamma)+\gamma Z_{H_k}(\gamma)Z_{G_0,\neg v_0}(\gamma),\\
Z_{G}(\gamma)&=Z_{G,\neg v}(\gamma)+Z_{G,v}(\gamma)=Z_{G,\neg v}(\gamma)+\gamma Z_{H_k}(\gamma)Z_{G_0}(\gamma).
\end{aligned}
\end{equation}
Using \eqref{eq:4nr64hrg666tbrbr4v} and  \eqref{eq:bbbbrbu5433357h5}, it follows that 
\[Z_G(\gamma)=Z_{H_k}(\gamma)Z_{G_0}(\gamma)\big(1+\gamma(1+\lambda_0+\hat{y})\big)\neq 0,\]
where the disequality follows from $Z_{G_0}(\gamma)\neq 0$ and the disequalities in \eqref{eq:3v3r485vr12ew} and \eqref{eq:4nr64hrg666tbrbr4v} (the former is applied for $x=\hat{y}$; \eqref{eq:vv879rf2} guarantees that \eqref{eq:3v3r485vr12ew} indeed applies). We thus obtain that $G$ with terminal $v$ implements the vertex activity $\frac{Z_{G,\neg v}(\gamma)}{Z_{G}(\gamma)}=\frac{1+\gamma(\lambda_0+\hat{y})}{1+\gamma(1+\lambda_0+\hat{y})}$ which from \eqref{eq:3v3r485vr12ew} is within distance $\epsilon$ from $\lambda$, as required.

\vskip 0.2cm

\noindent \textbf{Case II ($\mathbf{\lambda}$ close to 1):} $|\lambda-1|<2/r$. We first consider the case where $\lambda\neq 1$. We can compute rational numbers $y_1, y_2$ in time $poly(\size{\lambda})$ such that $|y_1-1|,|y_2-1|\geq 2/r$ and 
\begin{equation}\label{eq:409f43rffrfr}
y_1+y_2=-\frac{1}{\gamma}-\frac{\lambda}{\lambda-1}, 
\mbox{ so that } \frac{1+\gamma(y_1+y_2)}{1+\gamma(1+y_1+y_2)}=\lambda.
\end{equation}
By Lemma~\ref{lem:Ratioapprox}, there exists $\epsilon'$ with size $poly(\size{y_1,y_2,\epsilon})=poly(\size{\lambda,\epsilon})$ such that for all $x_1,x_2$ satisfying $|x_1-y_1|,|x_2-y_2|\leq \epsilon'$ it holds that
\begin{equation}\label{eq:3v3rvr12e5464w}
1+\gamma(1+x_1+x_2)\neq 0 \quad\mbox{and}\quad\Big|\frac{1+\gamma(x_1+x_2)}{1+\gamma(1+x_1+x_2)}-\frac{1+\gamma(y_1+y_2)}{1+\gamma(1+y_1+y_2)}\Big|\leq \epsilon.
\end{equation}
By Case I, for $i=1,2$ we can construct a bipartite graph $J_i$ of maximum degree $\Delta$ with terminal $u_i$ that  implements the vertex activity $y_i$ with accuracy $\epsilon'$ in time $poly(\size{y_i,\epsilon'})=poly(\size{\lambda,\epsilon})$.  Let $\hat{y}_i= \frac{Z_{J_i,\neg u_i}(\gamma)}{Z_{J_i}}$, so that $|\hat{y}_i-y_i|\leq \epsilon'$. Therefore, from \eqref{eq:409f43rffrfr} and \eqref{eq:3v3rvr12e5464w}, we obtain that 
\begin{equation}\label{eq:3v3rvr12ew}
1+\gamma(1+\hat{y}_1+\hat{y}_2)\neq 0\quad\mbox{and}\quad\Big|\frac{1+\gamma(\hat{y}_1+\hat{y}_2)}{1+\gamma(1+\hat{y}_1+\hat{y}_2)}-\lambda\Big|\leq \epsilon.
\end{equation}
The rest of the argument is completely analogous to the last part of the argument in Case I. Namely,  let $G$ be the bipartite graph of maximum degree $\Delta$ obtained by taking the disjoint union of  $J_1$ and $J_2$, connecting their terminals $u_1,u_2$ to a new vertex $z$, and connecting $z$ to a new vertex $u$. Then, we have that
\begin{equation*}
\begin{aligned}
Z_{G,\neg u}(\gamma)&=Z_{J_1}(\gamma)Z_{J_2}(\gamma)+\gamma Z_{J_1,\neg u_1}(\gamma)Z_{J_2}(\gamma)+\gamma Z_{J_1}(\gamma)Z_{J_2,\neg u_2}(\gamma),\\
Z_{G}(\gamma)&=Z_{G,\neg u}(\gamma)+Z_{G,u}(\gamma)=Z_{G,\neg u}(\gamma)+\gamma Z_{J_1}(\gamma)Z_{J_2}(\gamma).
\end{aligned}
\end{equation*}
It follows that 
\[Z_G(\gamma)=Z_{J_1}(\gamma)Z_{J_2}(\gamma)\big(1+\gamma(1+\hat{y}_1+\hat{y}_2)\big)\neq 0,\]
where the disequality follows from $Z_{J_1}(\gamma)Z_{J_2}(\gamma)\neq 0$ (since $J_1,J_2$ implement $\hat{y}_1,\hat{y}_2$) and the disequality in \eqref{eq:3v3rvr12ew}. We thus obtain that $G$ with terminal $u$ implements the vertex activity $\frac{Z_{G,\neg u}(\gamma)}{Z_{G}(\gamma)}=\frac{1+\gamma(\hat{y}_1+\hat{y}_2)}{1+\gamma(1+\hat{y}_1+\hat{y}_2)}$ which from \eqref{eq:3v3rvr12ew} is within distance $\epsilon$ from $\lambda$, as required.

To finish Case II, it remains to consider the case where $\lambda=1$. Then we set $\epsilon'=\min\{2/r,\epsilon\}$ and use the preceding method to  implement the activity $1+\frac{1}{2}\epsilon'\neq 1$ with accuracy $\epsilon'/2$ in time $poly(\size{\epsilon'})=poly(\size{\epsilon})$. The  implemented vertex activity $\hat{\lambda}$ satisfies by the triangle inequality $|\hat{\lambda}-1|\leq \epsilon'\leq \epsilon$, as needed.

This completes the description and analysis of the algorithm in all cases, thus completing the proof of Lemma~\ref{lem:expvertmatch}.
\end{proof}

\section{Hardness for bounded connective constant}
In this section, we prove Theorem~\ref{thm:hardconnconst}.
The basic idea is to take an instance~$G$ with a parameter~$\gamma^*$
from a bounded-degree regime 
where hardness is already established 
and add gadgets that don't change the connective constant
so that the partition function of~$G$ may be deduced from the partition function
of a new instance~$G'$ with small connective constant, using the desired value of~$\gamma$.

\begin{proof}[Proof of Theorem~\ref{thm:hardconnconst}]
Recall that for an integer $d\geq 3$, the set $\mathcal{B}_d$ is given by
\begin{equation*}\tag{\ref{def:bdelta}}
\mathcal{B}_{d}=\Big\{\gamma\in \mathbb{R}\mid \gamma=-\frac{1}{4(d-1)(\cos \theta)^2} \mbox{ for some $\theta\in (0,\pi/2) $ which is a rational multiple of $\pi$}\Big\}.
\end{equation*}
Let $S=\bigcup_{d\geq 3} \mathcal{B}_d$, then we have that $S$ is dense on the negative real axis. Henceforth, fix $\gamma\in S$ and fix arbitrary real numbers $\Delta>1$ and $a,c>0$. We will next establish that Items~\ref{it:14g442223} and~\ref{it:14g442223b} in the statement of the theorem hold. 

Let $\gamma^*=-1$.  For each integer $k\geq 0$, we will display a tree $T_k$ with terminals $u_k,v_k$ that implements the edge activity $\gamma^*=-1$, whose terminals $u_k$ and $v_k$ have distance at least $k$ in $T_k$.
Assuming this for the moment, we can conclude the \#P-hardness results of Items~\ref{it:14g442223} and~\ref{it:14g442223b} as follows. 

By Theorems~\ref{thm:hard} and~\ref{thm:hardsign}, $\#\mathsf{BipMatchings}(\gamma^*,3,1.01)$ and $\#\mathsf{SignMatchings}(\gamma^*,3)$ are \#P-hard. Set $k'=\left\lceil 10\log_2 \Delta\right\rceil$ and set for convenience $T'=T_{k'}$, $u'=u_{k'}$ and $v'=v_{k'}$. Denote also by $n'$ the number of vertices in $T'$.    Let $G=(V,E)$ be a graph of maximum degree 3 and suppose that the number of vertices $|V|$ is sufficiently large such that $24(n'!)^{2}\leq c\Delta^{a\log_2((n'-2)|V|)}$. Let $G'=(V',E')$ be the graph obtained from $G$ by replacing each edge $(z,w)$ of $G$ by a new copy of the tree  $T'$ and identifying $z$ and $w$ with the terminals $u'$ and $v'$ of $T'$. Note that $|V'|\geq (n'-2)|V|$. Moreover, by Lemma~\ref{lem:impleme12ntation}, we have that
\[Z_{G'}(\gamma)=(Z_{T',\neg u',\neg v'}(\gamma))^{|E|}\cdot Z_{G}(\gamma^*).\]
Using the \#P-hardness results of Theorems~\ref{thm:hard} and~\ref{thm:hardsign}, we therefore obtain that, in order to conclude Items~\ref{it:14g442223} and~\ref{it:14g442223b}, it  only suffices to show that $G'$ belongs to the family $\mathcal{F}_{\Delta,a,c}$ of graphs with connective constant at most $\Delta$ and profile $(a,c)$.  To do this, we will map a path $P'$ of $G'$ to a path $P$ of $G$ by just considering the vertices of $G$ that the path $P'$ traverses (recall from the construction above that the terminals of the copies of the tree $T'$ are identified to vertices of $G$).  Since the terminals $u'$ and $v'$ of $T'$ are at distance $\geq k'$ in $T'$, for a path $P'$ of $G$ with $\ell$ edges we have that its image $P$ under the map has at most $\ell/k'$ edges. Conversely, every path $P$ in $G$ is the image of at most $4(n'!)$ paths $P'$ in $G'$ under the map (since $T'$ is a tree, for every edge of $P$, the path $P'$ must traverse the unique path connecting the terminals $u'$ and $v'$ of the corresponding copy of $T'$; the only freedom is for the choice of the starting and ending subpaths of $P'$ which must be paths of $T'$ and there are crudely at most $4(n'!)^{2}$ such paths). For any vertex $v$ of $G$ there are at most $3\cdot 2^{\ell-1}$ paths with $\ell$ edges starting from $v$, so for any vertex $w$ of $G'$ we have that 
\[\sum^\ell_{i=1} N_{G'}(w,i)\leq 12(n'!)^{2}\sum^\ell_{i=1} 2^{\left\lceil i/k'\right\rceil-1}\leq 24(n'!)^{2}2^{\ell/k'}\leq c \Delta^\ell \mbox{ for all } \ell\geq a\log |V'|.\]
It follows that $G'\in \mathcal{F}_{\Delta,a,c}$, completing the reduction assuming the existence of the sequence of trees $T_k$ implementing the edge activity $\gamma^*=-1$.

To construct the sequence of trees $T_k$, first note that, since $\gamma \in S$, we have that $\gamma\in \mathcal{B}_d$ for some $d\geq 3$. In the proof of Lemma~\ref{lem:vertmatch}, we showed that for any such $\gamma$ there exists a tree $T$ of maximum degree $d$ with terminals $u$ and $v$ that implements the edge activity $\gamma^*=-1$ (perfectly), cf. the relevant Lemmas~\ref{lem:notbad} and~\ref{lem:gammaminusonequarter}. We will now bootstrap this construction as follows. For $k\geq 1$, let $P_k$ be a path with $k$ vertices (and $k-1$ edges) and let $u_k,v_k$ be the endpoints of the path. Obtain the tree $T_k'$ from $P_k$ by replacing every edge $(z,w)$ of the path $P_k$ with a new copy of the tree $T$ and identifying the endpoints $z$ and $w$ of the edge with the terminals $u$ and $v$ of $T$. We will show that for any integer $k\geq 0$, the tree $T_{6k+5}'$ implements the edge activity $\gamma^*=-1$; the desired sequence $T_k$ is obtained by taking $T_{k}=T_{6k+5}'$. 

To prove that the tree $T_{6k+5}'$ implements the edge activity $\gamma^*=-1$, set for convenience $n=6k+5$ and note that, since $T$ implements the edge activity $\gamma^*=-1$, it suffices by Lemma~\ref{lem:impleme12ntation} to show that
\begin{equation}\label{eq:rtnrtn44}
\frac{Z_{P_{n},u_{n},v_{n}}(\gamma^*)}{Z_{P_{n},\neg u_{n},\neg v_{n}}(\gamma^*)}=-1, \quad Z_{P_{n},\neg u_{n},v_{n}}(\gamma^*)=Z_{P_{n},u_{n},\neg v_{n}}(\gamma^*)=0.
\end{equation}
Using $Z_{P_{i}}(\gamma^*)=Z_{P_{i-1}}(\gamma^*)+\gamma^*Z_{P_{i-2}}(\gamma^*)$ for integer $i\geq 2$, we obtain by induction on $k$ that 
\[Z_{P_{6k+1}}(\gamma^*)=1,  Z_{P_{6k+2}}(\gamma^*)=0, Z_{P_{6k+3}}(\gamma^*)=-1, Z_{P_{6k+4}}(\gamma^*)=-1, Z_{P_{6k+5}}(\gamma^*)=0, Z_{P_{6k+6}}(\gamma^*)=1,\]
and therefore
\begin{equation*}
\begin{array}{lcl}
Z_{P_{n},u_{n},v_{n}}(\gamma^*)=(\gamma^*)^2 Z_{P_{n-4}}(\gamma^*)=1,& & Z_{P_{n},\neg u_{n},\neg v_{n}}(\gamma^*)= Z_{P_{n-2}}(\gamma^*)=-1,\\
Z_{P_{n},\neg u_{n},v_{n}}(\gamma^*)=\gamma^*\cdot Z_{P_{n-3}}(\gamma^*)=0,& & Z_{P_{n},u_{n},\neg v_{n}}(\gamma^*)=\gamma^*\cdot Z_{P_{n-3}}(\gamma^*)=0.
\end{array}
\end{equation*}
This proves \eqref{eq:rtnrtn44} and hence, for each integer $k\geq0$, the tree $T_{6k+5}'$ with terminals $u_{3k+5},v_{3k+5}$ implements the edge activity -1, completing the proof.
\end{proof}

\bibliographystyle{plain}
\bibliography{\jobname}

\end{document}